\definecolor{myurlcolor}{rgb}{0,0,0.9}
\newcommand{\be}{\begin{equation}}
\newcommand{\ee}{\end{equation}}
\newcommand{\beq}{\begin{eqnarray}}
\newcommand{\eeq}{\end{eqnarray}}
\newcommand{\beqs}{\begin{eqnarray*}}
\newcommand{\eeqs}{\end{eqnarray*}}
\newcommand{\tinyspace}{\mspace{1mu}}
\newcommand{\blue}{\textcolor{blue}}
\newcommand{\lrp}[1]{\left(#1\right)}
\newcommand{\lra}[1]{\langle#1\rangle}
\renewcommand{\geq}{\geqslant}
\renewcommand{\leq}{\leqslant}
\newcommand{\abs}[1]{\left\lvert\tinyspace #1 \tinyspace\right\rvert}
\newcommand{\norm}[1]{\left\lVert #1 \right\rVert}
\newcommand{\stlim}{\text{st.-}\lim}
\newcommand{\nn}{\nonumber\\}
\theoremstyle{plain}
\newtheorem{thm}{Theorem}
\newtheorem{prop}[thm]{Proposition}
\newtheorem{cor}[thm]{Corollary}
\newtheorem{remark}[thm]{Remark}
\tikzstyle WL=[line width=10pt,opacity=1.0]
\tikzstyle 5WL=[line width=5pt,opacity=1.0]
\tikzstyle 1WL=[line width=1pt,opacity=1.0]
\newcommand{\drawWL}[3]
{
    \draw[white,WL]  (#2) -- (#3);
    \draw[#1] (#2) -- (#3);
}
\newcommand*{\myproofname}{Proof}
\def\ot{\otimes}
\def\complex{\mathbb{C}}
\newcommand{\St}{\mathcal{S}}
\begin{document}

\title{De Finetti Theorems for Braided Parafermions}
\author{Kaifeng Bu$^{+*}$}
\email{kfbu@fas.harvard.edu}
\author{Arthur Jaffe$^{*}$}
\email{jaffe@g.harvard.edu}
\author{Zhengwei Liu$^{-*}$}
\email{zhengweiliu@fas.harvard.edu}
\author{Jinsong Wu$^{\times *}$}
\email{wjs@hit.edu.cn}
\address[*]{Harvard University, Cambridge, MA 02138, USA}
\address[+]{Zhejiang University, China}
\address[-]{Tsinghua University, China}
\address[$\times$]{Harbin Institute of Technology, China}

\begin{abstract}
The classical de Finetti theorem in probability theory relates symmetry under the permutation group with the independence of random variables. This result  has application in quantum information.  Here we study states that are invariant with respect to a natural action of the braid group, and we emphasize the pictorial formulation and interpretation of our results.  
 We prove a new type of de Finetti theorem for the four-string, double-braid group acting on the parafermion algebra to braid qudits, a natural symmetry in the quon language for quantum information.
We prove that a braid-invariant state is extremal if and only if it is a product state. Furthermore, we provide an explicit characterization of braid-invariant states on the parafermion algebra, including finding a distinction that depends on whether  the order of the parafermion algebra is square free. We characterize the extremal nature of product states (an inverse de Finetti theorem).
\end{abstract}

\maketitle

\setcounter{tocdepth}{1}
\tableofcontents
\section{Introduction}
\subsection{Background}
The famous de Finetti theorem in classical probability theory 
clarifies the relationship between permutation symmetry and the independence of a sequence of random variables~\cite{dF31,deFinetti37,Edwin55}. Consequently an infinite sequence of symmetric random variables can be written as a convex combination of an independent identically
distributed (i.i.d.) sequence. 

St{\o}rmer~\cite{Stormer1969}  proposed a non-commutative (quantum) version of the de Finetti theorem, and he demonstrated  that extremal, symmetric states on infinite, tensor-product $C^*$ algebras can be expressed in terms of product states. Other symmetry groups yield non-commutative formulations of de Finetti theorems, and braid invariance has been considered by
Gohm and K\"oster in \cite{Gohm2009, Kostler10}.  The  de Finetti  theorem has been extended to noncommutative probability theory, with a classical probability measure being replaced by  quantum state~\cite{KS09,Curran09,Curran10, GK10, Crismale2012,Banica12,Liu15,Liutran17}.

Diaconis and Freedman established a de Finetti theorem for  a finite (rather than infinite) sequence of exchangeable random variables~\cite{Diaconis1980}. This led to various types of de Finetti theorems in statistical physics and in quantum information~\cite{Hudson1976,Fannes1988,Raggio89,Caves02,Renner05, Renner07, Christandl2007}.  K\"onig and Renner~\cite{Renner05} showed that  any $k$-partite reduced state arising from a state on 
$n$ systems that is  permutation-symmetric, with $k\ll n$, is close to a convex combination of i.i.d. n-partite states.  Here i.i.d. means that the state $\varphi=\rho^{\ot n}$  can be written as a product  of identical copies.

This result is crucial for understanding the structure of permutation-sym{\-}metric states, and especially for the consideration of quantum entanglement of such states~\cite{HorodeckiRMP09}.
The use of such states has application in quantum information processing tasks ranging from entanglement testing~\cite{Brandao2011}, quantum key distribution \cite{Renner2005phd}, quantum hypothesis testing~\cite{Brandao2010}, to quantum  state tomography~\cite{Renner07}, and quantum complexity theory~\cite{Brandao2011, LiPRL2015, Brandao2017}.

Non-abelian statistics of quasiparticle models allow one to perform topological quantum computation, such as  in the zero-mode model for Majorana fermions (the $d=2$ case of parafermions)~\cite{Kitaev03,Nayak08}.
Parafermions, as a generalization of Majoranas, have recently attracted much attention in condensed matter physics \cite{Fendley12,Lindner12,You12} and \cite{Clarke13,Mong14,Klinovaja14,Hutter14}. 
We have given a natural, pictorial representation of the parafermion algebra and showed how this yields a pictorial representation of their Clifford gates~\cite{JLW18,LiuNJP17,Liu17}.

\subsection{New Results}
Here we present a de Finetti theorem for states on  para{\-}fermion algebras of order $d$. In particular, we use the fact that a pair of parafermions of order $d$ generate the 
$d\times d$ matrix algebra $\mathbb{M}_d(\complex)$, that we denote by $PF_2$.  Thus it is natural to consider pairs of parafermions as a unit, and to study double braids that exchange these pairs.  

The infinite parafermion algebra $PF_{\infty}$ is a $\mathbb{Z}_d$-graded, tensor product of algebras $PF_{2}$ of parafermion pairs. 
Here we consider the braid group $\mathbb{B}_{\infty}$, as defined in \S\ref{Sect:Braiding},  acting on pairs of parafermions .  Let $S_{\mathbb{B}_{\infty}}$ denote the states on $PF_{\infty}$ that are invariant under the action of $\mathbb{B}_{\infty}$. In \S \ref{sec:de_f_neu}--\S\ref{sec:de_f_nonneu} we prove: 

\begin{thm}[\textbf{de Finitti  for braided parafermions}]\label{Thm:main1}
Let $\varphi \in S_{\mathbb{B}_{\infty}}$ be a braid-invariant state on  $PF_{\infty}$. 
Then the following are equivalent:
\begin{enumerate}
\item The state $\varphi$ is extremal in the set of states $S_{\mathbb{B}_{\infty}}$ on $PF_{\infty}$.
\item The state $\varphi=\rho^{\otimes\infty}$ is the infinite tensor product of a state $\rho$ on $PF_2$. 
\end{enumerate}
\end{thm}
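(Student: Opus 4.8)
The plan is to prove the equivalence through a clustering (asymptotic factorization) characterization of extremal braid-invariant states, adapting the strategy behind the noncommutative de Finetti theorem of St\o rmer to the braided setting. First I would pass to the GNS representation $(\pi_\varphi,\mathcal{H}_\varphi,\Omega_\varphi)$ and record that, since $\varphi$ is $\mathbb{B}_\infty$-invariant, the braid action is implemented by a unitary representation $U$ of $\mathbb{B}_\infty$ on $\mathcal{H}_\varphi$ fixing $\Omega_\varphi$. The general theory of invariant states identifies extremality of $\varphi$ in $S_{\mathbb{B}_\infty}$ with ergodicity: the only operators in $\pi_\varphi(PF_\infty)''$ commuting with every $U(\beta)$ are scalars, equivalently the orthogonal projection $P$ onto the $\mathbb{B}_\infty$-invariant vectors satisfies $P\,\pi_\varphi(PF_\infty)''\,P=\mathbb{C}P$. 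This is the hinge on which both implications turn.

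For (2)$\,\Rightarrow\,$(1) I would show that a product state $\varphi=\rho^{\otimes\infty}$ is ergodic, hence extremal. The idea is to exhibit a sequence of braids $\beta_n\in\mathbb{B}_\infty$ that transports a fixed parafermion pair to sites arbitrarily far to the right, so that for local $a,b$ the operators $a$ and $\beta_n(b)$ eventually act on disjoint tensor factors. For a product state this forces $\varphi(a\,\beta_n(b))=\varphi(a)\,\varphi(b)$ up to braiding phases supported on the intervening sites, and one checks that these phases are controlled by the $\mathbb{Z}_d$-grading and drop out in the relevant matrix elements. The resulting clustering shows that $P$ is one-dimensional, giving extremality; this is the inverse de Finetti statement advertised in the abstract.

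For (1)$\,\Rightarrow\,$(2) I would combine ergodicity with braid invariance. Invariance under the generators $\sigma_i$ yields a \emph{braided exchangeability}: the correlation functions $\varphi(a_{i_1}\cdots a_{i_k})$ depend on the configuration of sites only through the braid that rearranges them, so by moving pairs past one another any $k$ prescribed sites can be brought into a standard left-justified configuration. Ergodicity then supplies a conditional-expectation/clustering step that splits off one site at a time, replacing $\varphi$ on $k$ sites by $\rho=\varphi|_{PF_2}$ on the split-off site times $\varphi$ on the remaining $k-1$ sites. Induction on $k$ then yields $\varphi=\rho^{\otimes\infty}$, with $\rho$ the common one-site restriction.

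The hard part will be the braiding phases. In contrast to the symmetric-group case, exchanging two parafermion pairs is not a transposition but a double braid that produces $\mathbb{Z}_d$-valued phases entangling the pairs it crosses, and these phases enter every step where a site is moved or split off. The heart of the argument is therefore a bookkeeping of these phases—showing, via the pictorial calculus for $PF_\infty$, that they neither obstruct the clustering limit in the inverse direction nor spoil the factorization in the forward direction. A secondary technical point is that $\mathbb{B}_\infty$ is not amenable, so the clustering cannot be extracted from a naive mean ergodic theorem over the whole group; I would instead obtain it along the explicit transporting sequence $\beta_n$, relying on the product and ergodic structure rather than on group averaging.
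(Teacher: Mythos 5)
Your central hinge---that extremality of $\varphi$ in $S_{\mathbb{B}_\infty}$ is equivalent to ergodicity, i.e.\ that the braid-fixed operators in $\pi_\varphi(PF_\infty)''$ are scalars---is false for this system, and this is not a repairable technicality but precisely where the theorem's content lies. The standard equivalence you invoke needs asymptotic (or at least $G$-) abelianness of the action, and that fails here: distant charged elements do not commute, they $q$-commute, $xy=q^{\deg(x)\deg(y)}yx$. What survives unconditionally is only one direction: if the invariant vectors are $\mathbb{C}\Omega$, then any $T\in(\pi_\varphi(PF_\infty)\cup U(\mathbb{B}_\infty))'$ fixes $\Omega$ up to a scalar and all decompositions of $\varphi$ are trivial, so $\varphi$ is extremal; that is what your (2)$\Rightarrow$(1) needs. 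The converse, on which your (1)$\Rightarrow$(2) rests, breaks down whenever $d$ is not square free: by Proposition~\ref{lem:tail_noneu} and Theorem~\ref{thm:main2}, an extremal state has fixed-point algebra $(\overline{PF}^{\varphi}_{\infty})^{\mathbb{B}_{\infty}}=PF^T=\bigoplus_j PF^{T,jm_0}$, which is typically nontrivial; extremality only forces the \emph{neutral} part $PF^{T,0}$ to be scalar (Proposition~\ref{prop:Braid-TrivialNeutral}), because only neutral (hence central) tail projections yield positive decompositions of $\varphi$. Concretely, for $d=4$ take $\rho$ with density matrix $\tfrac14 I+\epsilon D_2$, $D_2$ of charge $2$: the product state is extremal, yet its tail contains a non-scalar charge-$2$ unitary $U$ fixed by every braid. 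Note also that your two ``equivalent'' formulations come apart on this example: $U\Omega=\chi(U)\Omega$, so $P$ is rank one and $P\pi_\varphi(PF_\infty)''P=\mathbb{C}P$, while the fixed-point algebra is not $\mathbb{C}$---$\Omega$ is not separating for it. Consequently your splitting step in (1)$\Rightarrow$(2) has no justification: the conditional expectation onto the fixed points is not scalar-valued, so ergodicity alone cannot produce $\varphi(ab)=\varphi(a)\varphi(b)$.

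Relatedly, the phase bookkeeping you defer is the actual mechanism, and the phases do not ``drop out.'' The paper compares $\varphi\bigl(C^{mn}_j(C^{mn}_k)^*\bigr)$ computed two ways, using positivity of $\varphi$, the conditional expectation $E_T$ onto the tail, and tail-independence (Proposition~\ref{prop:T-indep}); this forces either $q^{(m+n)^2}=1$ or the vanishing of the expectation (Theorem~\ref{thm:neutral}). That dichotomy is what makes extremal states neutral for square-free $d$ and constrains the density matrix of $\rho$ to $PF_2^{p_0\mathbb{Z}}$ in general---restrictions your outline cannot detect because it treats the grading as a nuisance rather than a constraint. To finish the forward direction the paper must then show that extremality forces $\varphi$ restricted to the cyclic group of unitaries generating $PF^T$ to be a \emph{character}, and multiplicativity of that character together with tail-independence yields the product form; no scalar fixed-point algebra is available. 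Two further contrasts: the paper never averages over $\mathbb{B}_\infty$ (so the amenability issue you raise never arises)---it uses the single shift $\alpha=\stlim_{n\to\infty}Ad(b_1\cdots b_n)$ and von Neumann's mean ergodic theorem to construct $E_T$; and it proves (2)$\Rightarrow$(1) not by clustering but by a finite-dimensional variance argument (Theorem~\ref{Thm: finite inverse DF}) combined with a Choquet decomposition.
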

\blue{We can refine this characterization, depending on the order  $d$ of the algebra $PF_{\infty}$. This results in different restrictions on $\rho$.
As a consequence of 
Theorem \ref{Thm:main1},} any  $\mathbb{B}_{\infty}$-invariant state on $PF_{\infty}$ is in the closure of the convex hull of the product states.
Let $\overline{PF}^{\varphi}_{\infty}$ denote the von Neumann algebra
generated by $PF_{\infty}$ in the Gel{\-}fand-Naimark-Segal (GNS) construction  with respect to the  state $\varphi\in S_{\mathbb{B}_{\infty}}$.  Also let   $(\overline{PF}^{\varphi}_{\infty})^{\mathbb{B}_{\infty}}$ be the fixed point algebra under the action of the braid group $\mathbb{B}_{\infty}$. 
The neutral subalgebra of  $(\overline{PF}^{\varphi}_{\infty})^{\mathbb{B}_{\infty}}$  is the subalgebra generated by monomials in parafermions of degree zero  mod $d$.  

It is interesting that a distinction arises in this characterization,  according to whether or not the order of the parafermion algebra is square free.    (This means that  $d=\prod_{i} p_i$\,, where the primes $p_i$ are distinct.)
Let us now suppose that the degree $d$ of the parafermion algebra is square free.  In this case one finds that extremal, braid-invariant states are neutral and that they  give rise to a factor.  One can refine Theorem \ref{Thm:main1}  as follows:

\begin{thm}\label{Thm:main2}
Let $\varphi\in S_{\mathbb{B}_{\infty}}$ be a braid-invariant state on a parafermion algebra $PF_{\infty}$ of square-free degree $d$.  
The following are equivalent:
\begin{enumerate}
\item  The state $\varphi$ is extremal in $S_{\mathbb{B}_{\infty}}$.
\item  The state $\varphi= \rho^{\otimes\infty}$, where $\rho$ is  a neutral state on $PF_{2}$.
\item The neutral subalgebra of $(\overline{PF}^{\varphi}_{\infty})^{\mathbb{B}_{\infty}}=\complex $.
\item The algebra $(\overline{PF}^{\varphi}_{\infty})^{\mathbb{B}_{\infty}}=\complex $.
\item  The von Neumann algebra $\overline{PF}^{\varphi}_{\infty}$ is a factor.
\end{enumerate}
\end{thm}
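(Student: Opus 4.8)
The plan is to prove the five conditions equivalent by a short cycle, using Theorem~\ref{Thm:main1} as the anchor and the $\mathbb{Z}_d$-grading to inject the square-free hypothesis. Throughout write $\mathcal{M}:=\overline{PF}^{\varphi}_{\infty}$, let $(\pi_\varphi,\mathcal{H}_\varphi,\Omega_\varphi)$ be the GNS data, and let $U_g$ ($g\in\mathbb{B}_\infty$) be the unitaries implementing the braid action and fixing $\Omega_\varphi$. The one genuinely new ingredient beyond Theorem~\ref{Thm:main1} is a \emph{topological-spin} computation: braiding a monomial of $\mathbb{Z}_d$-degree $k$ past itself multiplies it by the quadratic phase $q^{k^2}$, where $q=e^{2\pi i/d}$. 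I would first record this as a lemma in the pictorial calculus, and then exploit the elementary number-theoretic fact that for square-free $d$ one has $d\mid k^2\Rightarrow d\mid k$.

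For (1)$\Leftrightarrow$(2): Theorem~\ref{Thm:main1} already gives that $\varphi$ is extremal iff $\varphi=\rho^{\otimes\infty}$ for some state $\rho$ on $PF_2$, so it remains to see that braid-invariance forces $\rho$ to be neutral precisely when $d$ is square free. Decomposing $\rho=\sum_{k}\rho_k$ into its homogeneous charge-$k$ components, invariance under the double-braid generator is equivalent to $U_{\sigma_i}$ commuting with $\rho\otimes\rho$; feeding in the spin computation shows $\rho_k=0$ unless $q^{k^2}=1$, i.e.\ unless $d\mid k^2$. Square-freeness collapses this to $k\equiv0$, so $\rho=\rho_0$ is neutral; conversely any neutral $\rho^{\otimes\infty}$ is a product state and hence extremal by Theorem~\ref{Thm:main1}. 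The same phase argument, applied now to an operator $x\in\mathcal{M}^{\mathbb{B}_\infty}$ of charge $k$ rather than to a state, shows that every braid-invariant element is automatically neutral when $d$ is square free; thus the fixed-point algebra coincides with its neutral subalgebra, which is exactly (3)$\Leftrightarrow$(4). This is precisely where a non-square-free $d$, admitting $k\neq0$ with $d\mid k^2$, would leave room for charged invariants and break the equivalence.

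The remaining links are operator-algebraic. For (1)$\Leftrightarrow$(5) I would use that distinct infinite product states are mutually disjoint: two states $\rho_1^{\otimes\infty}\neq\rho_2^{\otimes\infty}$ are separated by the large-$n$ averages of a local observable, so they generate inequivalent GNS representations. Consequently the de Finetti decomposition of a braid-invariant $\varphi$ into the extremal (product) states of Theorem~\ref{Thm:main1} is a decomposition into pairwise disjoint factor states, hence coincides with the central decomposition of $\mathcal{M}$; therefore $\varphi$ is extremal iff this decomposition is trivial iff $Z(\mathcal{M})=\complex$, i.e.\ iff $\mathcal{M}$ is a factor. For (1)$\Rightarrow$(4) I use the amenability-free half of the mean-ergodic dictionary: extremality of $\varphi$ forces the invariant subspace $\mathcal{H}_\varphi^{\mathbb{B}_\infty}$ to be one-dimensional, whence $\mathcal{M}^{\mathbb{B}_\infty}\Omega_\varphi\subseteq\complex\,\Omega_\varphi$; since $\Omega_\varphi$ is separating for $\mathcal{M}$ (restricting to the support of $\rho$ if $\rho$ is not faithful) it is separating for $\mathcal{M}^{\mathbb{B}_\infty}$, and we conclude $\mathcal{M}^{\mathbb{B}_\infty}=\complex$. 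The reverse (4)$\Rightarrow$(5) is immediate once one checks $Z(\mathcal{M})\subseteq\mathcal{M}^{\mathbb{B}_\infty}$: the braid automorphisms fix each invariant product state in the central decomposition, so act trivially on $Z(\mathcal{M})=L^\infty(\mu)$. Chaining (3)$\Leftrightarrow$(4)$\Rightarrow$(5)$\Leftrightarrow$(1)$\Leftrightarrow$(2) together with (1)$\Rightarrow$(4) then closes all five.

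The hard part will be the two places where non-amenability of $\mathbb{B}_\infty$ bites. First, with no invariant mean one cannot build the conditional expectation $\mathcal{M}\to\mathcal{M}^{\mathbb{B}_\infty}$ by averaging, so the ergodic correspondence must be routed entirely through Theorem~\ref{Thm:main1} and the disjointness of product states rather than through a general fixed-point theorem. Second, identifying the de Finetti barycentric decomposition with the central decomposition requires care: I must verify that the measure $\mu$ supplied by Theorem~\ref{Thm:main1} is carried by \emph{pairwise disjoint} product states and is fixed pointwise by the braid action, so that $Z(\mathcal{M})$ really is the function algebra of the de Finetti measure. Establishing the quadratic-phase lemma cleanly in the four-string double-braid picture, and pinning down the modulus convention so that square-freeness enters exactly as $d\mid k^2\Rightarrow d\mid k$, is the other point demanding precision.
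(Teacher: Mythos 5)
Your quadratic-phase lemma is exactly the right mechanism, and it is the same one the paper uses: comparing a charged element with its shift forces $q^{k^2}=1$ on anything that survives averaging, and square-freeness turns $d\mid k^2$ into $d\mid k$. Your treatments of (1)$\Leftrightarrow$(2) and (3)$\Leftrightarrow$(4) run parallel to the paper's (Theorem \ref{thm:neutral}, the density-matrix step in \S\ref{sec:de_f_nonneu}, Corollary \ref{cor:eqv_cen_tail}). The genuine gaps are in your two operator-algebraic legs. For (1)$\Rightarrow$(4) you invoke ``extremality forces $\dim\mathcal{H}_\varphi^{\mathbb{B}_\infty}=1$''; that implication is \emph{not} amenability-free folklore --- it holds only for $G$-abelian systems (already for the trivial group, extremal invariant $\not\Rightarrow$ unique invariant vector), and $G$-abelianness of this braided action is, in the square-free case, essentially equivalent to the neutrality/tail statements you would need the very machinery you are avoiding in order to prove. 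Worse, the separating-vector step fails outright: take $d=2$ and $\rho$ the pure neutral state with density matrix $\mathrm{diag}(1,0)$; then $\varphi=\rho^{\otimes\infty}$ is extremal, its GNS representation is irreducible, $\overline{PF}^{\varphi}_{\infty}=B(\mathcal{H}_\varphi)$, and $\Omega_\varphi$ is as far from separating as possible; restricting to the support of $\rho$ does not repair this, since the GNS data of $\varphi$ on the full algebra are unchanged. (That $\Omega_\varphi$ \emph{is} separating for the fixed-point algebra is true, but proving it needs the conditional expectation $E_T$.) Similarly, your (1)$\Leftrightarrow$(5) via ``de Finetti decomposition $=$ central decomposition'' presupposes that product states are factor states --- precisely the implication (2)$\Rightarrow$(5) under proof --- so as organized that leg is circular; and your justification of $Z(\overline{PF}^{\varphi}_{\infty})\subseteq(\overline{PF}^{\varphi}_{\infty})^{\mathbb{B}_{\infty}}$ leans on the same identification, whereas the correct argument is the one-liner that each $b_j$ lies in $PF_{\infty}$ (Proposition \ref{Prop:FourStringBraidPF}), so the braid action is inner and fixes the center pointwise.

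The paper closes these holes with one amenability-free tool: the shift $\alpha$ is a strong limit of inner braid automorphisms (Proposition \ref{lem:rep_shift}), so von Neumann's mean ergodic theorem for the single unitary implementing $\alpha$ yields a normal $\varphi$-preserving conditional expectation $E_T$ onto the tail algebra (Proposition \ref{prop:rep_ET}), after which $PF^T=(\overline{PF}^{\varphi}_{\infty})^{\alpha}=(\overline{PF}^{\varphi}_{\infty})^{\mathbb{B}_{\infty}}$ and $PF^{T,0}=Z(\overline{PF}^{\varphi}_{\infty})$ (Propositions \ref{Prop:eq_fix_tail} and \ref{Prop:eqv_cen_tail}). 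Given these identifications, (3)$\Leftrightarrow$(4)$\Leftrightarrow$(5) is your phase argument plus Theorem \ref{thm:neutral}; (1)$\Rightarrow$(3) is an elementary convex splitting by a central projection (Proposition \ref{prop:Braid-TrivialNeutral}), needing no invariant vectors, separating vectors, or disjointness; (4)$\Rightarrow$(2) is $\varphi=\varphi\circ E_T$ plus tail-independence (Corollary \ref{cor:BraidExpTail}, Proposition \ref{prop:T-indep}); and (2)$\Rightarrow$(1) is the inverse de Finetti theorem (Theorem \ref{Thm:Inverse-dF}). If you want to keep your architecture, replace your (1)$\Rightarrow$(4) by the central-projection splitting and obtain factoriality of product states from $E_T(x)=\varphi(x)1$; but at that point you will have rebuilt the paper's proof rather than found an alternative to it.
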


In case $d$ is not square free, we give the corresponding characterization braid-invariant states and their von Neumann algebras in Theorems~\ref{thm:main2}--\ref{thm:main3}.  As this requires some additional  terminology that we only introduce later, we postpone these statements to \S\ref{sec:de_f_nonneu}.

A main difference is that an extremal, braid-invariant state is not necessarily neutral, nor is the corresponding algebra necessarily a factor.  
It is  interesting that the de Finetti theorem suggests new methods to prove whether the von-Neumann algebra $\overline{PF}^{\varphi}_{\infty}$ is a factor. 
Finally in Theorem \ref{Thm:Inverse-dF} we characterize the extremal nature of product states (an inverse de Finetti theorem).

\subsection{Organization} 
In \S\ref{sec:intro_para} we define parafermion algebras and their diagrammatic representation which we call parafermion planar para algebras (PAPPA).
In \S \ref{sec:braid_group},  we introduce the four-string braid group $\mathbb{B}_{\infty}$ acting on the parafermion algebra.  We describe the braid using its diagrammatic representation in the PAPPA model.
In \S \ref{sec:lems_tail_alg} we introduce braid-invariant states, the action of shifts,  the tail algebra,   and conditional expectaions onto the tail algebra.  We also introduce the neutral part of the tail algebra, which equals the center of the parafermion algebra.  We derive the independence of the conditional expectation onto the  tail algebra.
In \S \ref{sec:de_f_neu} we prove the de Finetti theorem for parafermion algebras in the case that $d$ is square free. We show that 
 the center of parafermion algebra is equal to the tail algebra of the parafermion alegbra and that  the tail algebra only consists of neutral elements. 
In  \S\ref{sec:de_f_nonneu} we generalize our de Finetti theorem to the case that $d$ is not square free. In this case, the tail algebra does not equal the center of the parafermion algebra, and the tail algebra contains non-neutral elements.
We characterize the tail algebra for extremal, braid-invariant states.

\section{Parafermion algebras and the PAPPA model}\label{sec:intro_para}
The parafermion algebra $PF_m$ of degree $d$ is the $\mathbb{Z}_d$-graded *-algebra generated by $\set{c_j}^m_{j=1}$, with $m$ possibly infinite. We denote the degree of a monomial $A\in PF_m$ by $deg(A)\in \mathbb{Z}_d$.  Since it is often useful to regard the degree as a physical ``charge,'' we sometimes use this term interchangeably  with ``degree.'' 

The generators $c_j$ of the algebra are called parafermions and  satisfy the  canonical parafermion relations (CPRs)
\begin{eqnarray}
\label{ParafermionRelns}
c_jc_k=q\,c_kc_j\;, ~~\text{for}~~ j<k,\quad
c^d_j=I,\quad
c^*_j=c^{-1}_j,
\end{eqnarray} 
where $q=e^{\frac{2\pi i }{d}}$, and $i=\sqrt{-1}$. 
Then 
$PF_{2m}$ is isomorphic to the tensor product of 
$m$-copies of $\mathbb{M}_d(\complex)$ through the ``Jordan-Wigner'' transformation, 
\begin{eqnarray}
PF_{2m}\cong\ot^{m}_{k=1}\mathbb{M}_d(\complex).
\end{eqnarray}
Therefore, the  parafermion algebra $\displaystyle PF_{\infty}:=\lim_{m\to \infty} PF_{m}$ is isomorphic to 
the infinite tensor product of $\mathbb{M}_d(\complex)$,
\begin{eqnarray}
PF_{\infty}\cong {\ot_{k=1}^{\infty}\mathbb{M}_d(\complex)}.
\end{eqnarray}
According to the charge in the parafermion algebra,  it can be decomposed by charge as 
$PF_{\infty}=\oplus^{d-1}_{k=0} PF^k_{\infty}$, where
\begin{eqnarray}
PF^k_{\infty}
=\set{x\in PF_{\infty}| deg(x)=k}.
\end{eqnarray}
The charge-zero subalgebra $PF^0_{\infty}$ is  called the  
neutral subalgebra.

We use a pictorial representation for the element $c^m_j$ introduced in \cite{JLCMP17}, where the algebra is called PAPPA.  We represent  $c_j^m$   by  inserting the label $m$ on the $j^{\rm th}$ string (numbered from left to right).  We place the label of the string position above or below the string, and we omit that label in case this will cause no confusion.  We interpret $m$ as a $\mathbb{Z}_{d}$-valued charge, so we also call PAPPA a ``charged-string model.''  The correspondence between parafermion operators and pictures is:
\[
c_j^m \longleftrightarrow
\raisebox{-.7cm}
{\scalebox{.7}{
\begin{tikzpicture}
\node at (-.75,.5) {$...$};
\node at (.75,.5) {$...$};
 \draw (-1,0)--(-1,1);
 \draw (1,0)--(1,1);
 \draw (-.5,0)--(-.5,1);
 \draw (.5,0)--(.5,1);
 \draw (0,0)--(0,1);
 \node at (0,-.3) {j};
  \node at (-.2,.5) {m};
\end{tikzpicture}}}\;.
\]
Multiplication is designated from top to bottom, corresponding to  algebraic factors written from right to left.
In the PAPPA model, the charged strings 
satisfy the following relations, 
\beq\label{AddCharge}
\textbf{Multiplication:  }&&
\raisebox{-.25cm}
{\scalebox{.7}{
\begin{tikzpicture}
\draw (0,0) --(0,1);
\node (0,0) at (-0.2,0.8) {$m$};
\node (0,0) at (-0.2,0.2) {$n$};
\node (0.45,0.25) at (0.45,0.5) {$=$};
\node (0.9,0.35) at (1.4,0.5) {$m+n$};
\draw (2,0) --(2,1);
\node (1.4,-0.05) at (2.3,0.3) {,};
\end{tikzpicture} }}\qquad
\raisebox{-.25cm}
{\scalebox{.7}{
\tikz{
\node (3.7,0.35) at (4.8,0.5) {$d$};
\draw (5,0) --(5,1);
\node (4.4,0.25) at (5.4,0.5) {$=$};
\draw (5.8,0) --(5.8,1);
\node (4.4,0.25) at (6.2,0.5) {$=$};
\node (3.7,0.35) at (6.5,0.5) {$0$};
\draw (6.7,0) --(6.7,1);
\node (4.95,0) at (7,0.3) {.};
}}}\nonumber\\
\textbf{Para isotopy:  }&&
\raisebox{-.25cm}
{\scalebox{.7}{
\begin{tikzpicture}
\draw (0,0) --(0,01);
\draw (0.2,0) --(0.2,01);
\node at (.5,.5) {$\cdots$};
\node (0,0) at (-0.15,0.2) {$n$};
\draw (0.8,0) --(0.8,1);
\draw (1.15,0) --(1.15,1);
\node (1.2,0) at (1,0.8) {$m$};
\end{tikzpicture}
}}
=  \ \scriptstyle q^{mn}
\raisebox{-.25cm}
{\scalebox{.7}{
\begin{tikzpicture}
\draw (0,0) --(0,01);
\draw (0.2,0) --(0.2,01);
\node at (.5,.5) {$\cdots$};
\node (0,0) at (-0.15,0.8) {$n$};
\draw (0.8,0) --(0.8,1);
\draw (1.15,0) --(1.15,1);
\node (1.2,0) at (1,0.2) {$m$};
\end{tikzpicture}}}\;.
\eeq
\be \label{Equ:twisted tensor product}
\hskip -1.5cm
\textbf{Twisted product:    }\qquad
\raisebox{-.25cm}
{\scalebox{.7}{
\begin{tikzpicture}
\draw (0,0) --(0,01);
\draw (0.2,0) --(0.2,01);
\node at (.5,.5) {$\cdots$};
\node (0,0) at (-0.15,0.5) {$n$};
\draw (0.8,0) --(0.8,1);
\draw (1.15,0) --(1.15,1);
\node (1.2,0) at (1,0.5) {$m$};
\end{tikzpicture}
}}
:=  \ \scriptstyle \zeta^{mn}
\raisebox{-.25cm}
{\scalebox{.7}{
\begin{tikzpicture}
\draw (0,0) --(0,01);
\draw (0.2,0) --(0.2,01);
\node at (.5,.5) {$\cdots$};
\node (0,0) at (-0.15,0.8) {$n$};
\draw (0.8,0) --(0.8,1);
\draw (1.15,0) --(1.15,1);
\node (1.2,0) at (1,0.2) {$m$};
\end{tikzpicture}}}\;.
\ee
Here $\zeta$ is a chosen square root of $q$ such that $\zeta^{d^2}=1$ \cite{JLCMP17,JLW18}.

\section{The four-string braid group $\mathbb{B}_{\infty}$}
\label{sec:braid_group}
We consider the four-string braid group generated by exchanges of pairs of adjacent parafermions; in quantum information this corresponds to braiding  adjacent qudits. 

\subsection{Four string braids}\label{Sect:Braiding}
The algebra generated by  two consecutive generators $(c_j, c_{j+1})$ is isomorphic to $\mathbb{M}_d(\complex)$; in other words $PF_2\cong \mathbb{M}_d(\complex)$. 
Motivated by this picture, we 
consider the action of braid group $\mathbb{B}_{\infty}$ on the pairs of the generators $(c_{2j-1}, c_{2j})$.  In particular, for the parafermion algebra $PF_{2m}$, we introduce the braid group  
\begin{eqnarray}
\mathbb{B}_{2m}=\langle b_1, b_2,..., b_{m-1}\rangle,
\end{eqnarray} 
that is generated by $m-1$ four-string braids. The  braid $b_j$ 
 intertwines the  $(2j-1)^{\rm th}$ and $(2j)^{\rm th}$ strings with the $(2j+1)^{\rm th}$ and $(2j+2)^{\rm th}$  strings. The following picture represents  this (negative) four-string braid action: 

\be
b_{j}
=
\raisebox{-1cm}{\begin{tikzpicture}
\begin{scope}[shift={(4,1)},yscale=-1,xscale=1]
\draw (-2, 0)--(-2, 2); 
\draw (-1, 0)--(-1, 2); 
\draw (4, 0)--(4, 2); 
\draw (5, 0)--(5, 2); 
\node at (-1.5, 1){$\cdots$};
\node at (4.5, 1){$\cdots$};
\node at (0, -.2) {$_{_{_{_{ 2j-1}}}}$};
\node at (1, -.2) {$_{_{_{_{ 2j}}}}$};
\node at (2, -.2) {$_{_{_{_{ 2j+1}}}}$};
\node at (3, -.2) {$_{_{_{_{ 2j+2}}}}$};
\draw (2,0)--(0,2);
\draw (3,0)--(1,2);
\drawWL {}{0,0}{2,2};
\drawWL {}{1,0}{3,2};
\end{scope}
\end{tikzpicture}}\;.
\ee
\medskip

\begin{prop}\label{Prop:FourStringBraidPF}
Each four-string braid $b_{j}$ is an element of $PF_{\infty}$.
 The generators $b_{j}\in\mathbb{B}_{\infty}$ satisfy the double-braid relations:
\begin{eqnarray}\label{DoubleBraidRelations}
b_jb_k=b_kb_j, \quad\text{if } |j-k|>1,\nonumber\\
b_jb_kb_j=b_kb_jb_k,\quad\text{if } |j-k|=1.
\end{eqnarray}

\end{prop}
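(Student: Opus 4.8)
The plan is to give an explicit parafermion expression for each $b_j$ and then verify the two families of relations, reserving the pictorial calculus of the PAPPA model for the only genuinely nontrivial case.

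\textbf{Step 1 ($b_j\in PF_\infty$).} I would first resolve a single two-string crossing into charged strings. In the charged-string calculus, passing one string over an adjacent one is, by construction, a finite linear combination of charged strings supported on those two positions: a Gauss-type sum of the form $\tfrac{1}{\sqrt d}\sum_{a}\zeta^{(\cdot)}\,c^{a}\cdots$, whose cross terms each carry charge $a$ on one string and $-a$ on the other. The four-string braid $b_j$ carries the pair at positions $2j-1,2j$ across the pair at $2j+1,2j+2$, which is the composition of the four elementary crossings $ (2j-1)\!\times\!(2j+1)$, $(2j-1)\!\times\!(2j+2)$, $(2j)\!\times\!(2j+1)$, $(2j)\!\times\!(2j+2)$. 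Reading the diagram from top to bottom and applying the multiplication rule \eqref{AddCharge} at each stage yields a finite sum of monomials in the four generators $c_{2j-1},c_{2j},c_{2j+1},c_{2j+2}$. Hence $b_j$ lies in $PF_\infty$ (indeed in the local subalgebra on those four strings); moreover each surviving monomial is neutral, so $b_j\in PF^0_\infty$. Unitarity, though not needed here, follows from the normalization $\zeta^{d^2}=1$.

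\textbf{Step 2 (far commutation).} For $|j-k|>1$ the index sets $\{2j-1,\dots,2j+2\}$ and $\{2k-1,\dots,2k+2\}$ are disjoint and order-separated, so $b_j$ and $b_k$ are supported on disjoint blocks of strings. By Step 1 each is a sum of neutral monomials, and two neutral monomials with disjoint, order-separated supports commute \emph{exactly}: moving one past the other in \eqref{AddCharge} produces the phase $q^{\,mn}$ with one of the degrees $\equiv 0 \pmod d$, hence $q^{\,mn}=1$. Therefore $b_jb_k=b_kb_j$, which is bookkeeping once Step 1 is in hand.

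\textbf{Step 3 (the braid relation).} This is the crux: $b_jb_{j+1}b_j=b_{j+1}b_jb_{j+1}$. Here I would pass to the diagram. Both words describe braids on the six strings $2j-1,\dots,2j+4$ (three consecutive pairs), and the two diagrams are related by planar isotopy through a Reidemeister-III configuration. Since the relations \eqref{AddCharge}--\eqref{Equ:twisted tensor product} are precisely the data making the charged-string calculus isotopy invariant, isotopic braid diagrams represent the same element of $PF_\infty$, which gives the identity. If instead a purely algebraic verification is wanted, one expands both sides by the crossing resolution of Step 1 and reduces using the CPRs $c_jc_k=q\,c_kc_j$ from \eqref{ParafermionRelns}; the two sides then match by the standard Gauss-sum identities for $\zeta$.

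\textbf{Main obstacle.} The real content is the Yang--Baxter relation for the doubled braid in Step 3. I expect the cleanest route is to first establish (or cite from the quon construction) that the elementary single-crossing operator solves the two-string Yang--Baxter equation, and then observe that $b_j$, as the coherent composition of crossings carrying one pair past the next, inherits the relation; the principal care needed is tracking the $\zeta$-phases so that the Reidemeister-III move is realized exactly and not merely up to a scalar.
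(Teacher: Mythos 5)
Your proposal follows essentially the same route as the paper: decompose $b_j$ into four two-string crossings, express each crossing as the Gauss-sum of charged strings from the PAPPA construction (so $b_j$ lies in the algebra generated by $c_{2j-1},\dots,c_{2j+2}$), and reduce the double-braid relations to the Reidemeister moves satisfied by the single crossing, citing \cite{JLCMP17}. Your explicit neutrality argument for the far commutation in Step 2 is a slightly more algebraic rendering of what the paper treats as pictorially evident, but the substance of the two proofs is identical.
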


\begin{proof}
The negative four-string braid $b_{j}$ is the product of four two-string braids, 
\be\label{Double-Braid}
\scalebox{1.2}{
\raisebox{-1cm}{\begin{tikzpicture}
\begin{scope}[shift={(4,1)},yscale=-1,xscale=1]
\draw (2,0)--(0,2);
\draw (3,0)--(1,2);
\drawWL {}{0,0}{2,2};
\drawWL {}{1,0}{3,2};
\end{scope}
\end{tikzpicture}}
\ =\ 
\raisebox{-0.8cm}{\scalebox{1}{\begin{tikzpicture}
\begin{scope}[shift={(8,1)},yscale=-1,xscale=1]
\draw (0.5,0)--(0,0.5);
\drawWL {}{0,0}{0.5,0.5};
\draw (0,0+0.6)--(-0.5,1.1);
\drawWL {}{-0.5,0.6}{0,1.1};
\draw (1,0+0.6)--(0.5,1.1);
\drawWL {}{0.5,0.6}{1,1.1};
\draw (0.5,1.2)--(0,0.5+1.2);
\drawWL {}{0,0+1.2}{0.5,0.5+1.2};
\end{scope}
\end{tikzpicture}}}}\;.
\ee
Each two-string braid can be expressed in terms of the generators of the parafermion algebra, as shown in formula (8.1) of~\cite{JLCMP17}, 
\be
{b_{k}^{(2)}}
=
\raisebox{-1.cm}{\begin{tikzpicture}
\begin{scope}[shift={(4,1)},yscale=-1,xscale=1]
\draw (2,0)--(0,2);
\drawWL {}{0,0}{2,2};
\node at (2.05,2.1) {$_{_{_{_{k+1}}}}$};
\node at (.05,2.05) {$_{_{_{_{_{_{k}}}}}}$};
\end{scope}
\end{tikzpicture}}
\ =\ \frac{\omega^{1/2}}{\sqrt{d}}\sum^{d-1}_{i=0}
\raisebox{-1.2cm}{\scalebox{1}{\begin{tikzpicture}
\begin{scope}[shift={(8,1)},yscale=-1,xscale=1]
\node at (1.8,1.8){$_{_{_{_{ i}}}}$};
\node at (2.8,0.2){$_{_{_{_{ -i}}}}$};
\node at (2.05,2.1) {$_{_{_{_{k}}}}$};
\node at (3.05,2.05) {$_{_{_{_{_{_{k+1}}}}}}$};
\draw (2,0)--(2,2);
\draw (3,0)--(3,2);
\end{scope}
\end{tikzpicture}}}\;.
\ee
Here $\omega=\frac{1}{\sqrt{d}}\sum^{d-1}_{j=0}\zeta^{j^2}$ is shown to be a phase in Proposition 2.15 of \cite{JLCMP17}.  As a consequence the two-string braid $b_{k}^{(2)}$  is a unitary, namely $b_{j}^{(2)\,*}b_{j}^{(2)}=I$, and hence so is the four-string braid $b_{j}$.  Also the left-most single braid $b_{2j-1}^{(2)}$ and the right-most single braid $b_{2j+1}^{(2)}$ 
 illustrated in \eqref{Double-Braid} commute, so their relative vertical order does not matter.  This decomposition shows that $b_{j}$ is in the algebra generated by the four parafermions $c_{2j-1}, c_{2j}, c_{2j+1}, c_{2j+2}$.  

The double-braid relations \eqref{DoubleBraidRelations} are evident from the picture representation of the double braid, and the fact that the single braid satisfies the three Reidermeister moves, see \S8 in~\cite{JLCMP17} and \S3.7 in~\cite{JLW18}.
\end{proof}

\subsection{Action of double braids on parafermions}
The natural action of the double braids $\mathbb{B}_{2m}$ on the parafermion algebra $PF_{2m}$ is the adjoint action $Ad(b_{j})\in Aut(PF_{2m})$. The adjoint action exchanges the pair $(c_{2j-1}, c_{2j})$ with the pair  $(c_{2j+1}, c_{2j+2})$. Thus 
	\be 
	Ad(b_j)(c^{m}_{2j-1}c^{n}_{2j})=b_j(c^{m}_{2j-1}c^{n}_{2j})b^{-1}_j=c^{m}_{2j+1}c^{n}_{2j+2}\;.
	\ee
The diagram that corresponds to this action is: 

\be
\raisebox{-1.5cm}{
\scalebox{.5}{\begin{tikzpicture}

\begin{scope}[shift={(4,-4)},yscale=1,xscale=1]
\node at (0, -.2) {$_{_{_{_{ 2j-1}}}}$};
\node at (1, -.2) {$_{_{_{_{ 2j}}}}$};
\node at (2, -.2) {$_{_{_{_{ 2j+1}}}}$};
\node at (3, -.2) {$_{_{_{_{ 2j+2}}}}$};
\draw (2,0)--(0,2);
\draw (3,0)--(1,2);
\drawWL {}{0,0}{2,2};
\drawWL {}{1,0}{3,2};
\end{scope}
\node at (3.8, -1.9){$_{m}$};
\node at (4.8, -1.1){$_{n}$};
\draw (4,-1)--(4, -2);
\draw (5,-1)--(5, -2);
\draw (6,-1)--(6, -2);
\draw (7,-1)--(7, -2);
\begin{scope}[shift={(4,1)},yscale=-1,xscale=1]
\node at (0, -.2) {$_{_{_{_{ 2j-1}}}}$};
\node at (1, -.2) {$_{_{_{_{ 2j}}}}$};
\node at (2, -.2) {$_{_{_{_{ 2j+1}}}}$};
\node at (3, -.2) {$_{_{_{_{ 2j+2}}}}$};
\draw (2,0)--(0,2);
\draw (3,0)--(1,2);
\drawWL {}{0,0}{2,2};
\drawWL {}{1,0}{3,2};
\end{scope}
\end{tikzpicture}}}
\ =\ 
\raisebox{-1.25cm}{\scalebox{.5}{\begin{tikzpicture}
\begin{scope}
\draw (4,3)--(4, -2);
\draw (5,3)--(5, -2);
\draw (6,3)--(6, -2);
\draw (7,3)--(7, -2);
\node at (5.8, 0.1){$_{m}$};
\node at (6.8, .9){$_{n}$};
\end{scope}
\end{tikzpicture}}}\  \  .
\ee
This equality combines the second Reidermeister move for the braid, as well as the fact that charges pass freely under a braid, see Theorem 8.2 in~\cite{JLCMP17}.  In a similar manner, one can analyze the case with charges on all four strings. The composition of braids translates a sequence of qudits; for example, $Ad(b_{j-1}b_{j})$ acting on the $j-1$ and $j$ qudit spaces, tensored with the identity on the qudit $j+1$, can be pictured as  
\[
\scriptstyle{Ad(b_{j-1}b_{j})}\left(
\raisebox{-2.35cm}{\scalebox{.5}{\begin{tikzpicture}
\begin{scope}
\node at (2, -5.2) {$_{_{_{_{ 2j-2}}}}$};
\node at (3, -5.2) {$_{_{_{_{ 2j-3}}}}$};
\node at (4, -5.2) {$_{_{_{_{ 2j-1}}}}$};
\node at (5, -5.2) {$_{_{_{_{ 2j}}}}$};
\node at (6, -5.2) {$_{_{_{_{ 2j+1}}}}$};
\node at (7, -5.2) {$_{_{_{_{ 2j+2}}}}$};
\draw (2,4)--(2, -5);
\draw (3,4)--(3, -5);
\draw (4,4)--(4, -5);
\draw (5,4)--(5, -5);
\draw (6,4)--(6, -5);
\draw (7,4)--(7, -5);
\node at (1.8, -.6){$_{k}$};
\node at (2.8, -.6){$_{l}$};
\node at (3.8, -.6){$_{m}$};
\node at (4.8, -.6){$_{n}$};
\end{scope}
\end{tikzpicture}}}   \right)\ 
=\
\raisebox{-2.5cm}{
\scalebox{.5}{\begin{tikzpicture}
\begin{scope}[shift={(4,-4)},yscale=1,xscale=1]
\node at (0, -2.2) {$_{_{_{_{ 2j-1}}}}$};
\node at (1, -2.2) {$_{_{_{_{ 2j}}}}$};
\node at (2, -2.2) {$_{_{_{_{ 2j+1}}}}$};
\node at (3, -2.2) {$_{_{_{_{ 2j+2}}}}$};
\node at (-1, -2.2) {$_{_{_{_{ 2j-2}}}}$};
\node at (-2, -2.2) {$_{_{_{_{ 2j-3}}}}$};
\draw (-1, 0)--(1,-2);
\draw (-2, 0)--(0,-2);
\draw (2,0)--(0,2);
\draw (3,0)--(1,2);
\draw (2,0)--(2,-2);
\draw (3,0)--(3,-2);
\drawWL {}{0,0}{-2,-2};
\drawWL {}{1,0}{-1,-2};

\drawWL {}{0,0}{2,2};
\drawWL {}{1,0}{3,2};
\end{scope}
\node at (3.8, -1.4){$_{m}$};
\node at (4.8, -1.4){$_{n}$};
\node at (2.8, -1.4){$_{l}$};
\node at (1.8, -1.4){$_{k}$};

\draw (4,-1)--(4, -2);
\draw (5,-1)--(5, -2);
\draw (6,-1)--(6, -2);
\draw (7,-1)--(7, -2);
\draw (2,1)--(2, -4);
\draw (3,1)--(3, -4);

\begin{scope}[shift={(4,1)},yscale=-1,xscale=1]
\node at (0, -2.2) {$_{_{_{_{ 2j-1}}}}$};
\node at (1, -2.2) {$_{_{_{_{ 2j}}}}$};
\node at (2, -2.2) {$_{_{_{_{ 2j+1}}}}$};
\node at (3, -2.2) {$_{_{_{_{ 2j+2}}}}$};
\node at (-1, -2.2) {$_{_{_{_{ 2j-2}}}}$};
\node at (-2, -2.2) {$_{_{_{_{ 2j-3}}}}$};
\draw (-1, 0)--(1,-2);
\draw (-2, 0)--(0,-2);
\draw (2,0)--(0,2);
\draw (3,0)--(1,2);
\draw (2,0)--(2,-2);
\draw (3,0)--(3,-2);
\drawWL {}{0,0}{-2,-2};
\drawWL {}{1,0}{-1,-2};

\drawWL {}{0,0}{2,2};
\drawWL {}{1,0}{3,2};
\end{scope}
\end{tikzpicture}}}
\ =\ 
\raisebox{-2.45cm}{\scalebox{.5}{\begin{tikzpicture}
\begin{scope}
\node at (2, -5.2) {$_{_{_{_{ 2j-2}}}}$};
\node at (3, -5.2) {$_{_{_{_{ 2j-3}}}}$};
\node at (4, -5.2) {$_{_{_{_{ 2j-1}}}}$};
\node at (5, -5.2) {$_{_{_{_{ 2j}}}}$};
\node at (6, -5.2) {$_{_{_{_{ 2j+1}}}}$};
\node at (7, -5.2) {$_{_{_{_{ 2j+2}}}}$};
\draw (2,4)--(2, -5);
\draw (3,4)--(3, -5);
\draw (4,4)--(4, -5);
\draw (5,4)--(5, -5);
\draw (6,4)--(6, -5);
\draw (7,4)--(7, -5);
\node at (3.8, -.4){$_{k}$};
\node at (4.8, -.4){$_{l}$};
\node at (5.8, -.4){$_{m}$};
\node at (6.8, -.4){$_{n}$};
\end{scope}
\end{tikzpicture}}}\  \  .
\]

\subsection{States and automorphisms}
A state $\varphi$ on $PF_{\infty}$ yields by the GNS construction a Hilbert space $\mathcal{H}$, a $*$-representation $\pi$ of $PF_{\infty}$ on $\mathcal{H}$, and  a cyclic vector $\Omega$ such that $\varphi(x)=\langle\Omega, \pi(x)\Omega\rangle_{\mathcal{H}}$. For simplicity, we denote  $\pi(x)$ acting on $\mathcal{H}$ by $x$. We also use $x$ to denote an element of the von Neumann algebra  $\overline{PF}^{\varphi}_{\infty}$ on $\mathcal{H}$ obtained by closing $PF_{\infty}/\mathcal{N}$ in the sesquilinear form $\lra{x,y}=\varphi(x^{*}y)$ arising from $\varphi$ with null space $\mathcal{N}$.

 If the state $\varphi$ is invariant under a $*$-automorphism $\sigma$ of $PF_{\infty}$,  this determines a $*$-automorphism (that we also denote as $\sigma$) on $\overline{PF}^{\varphi}_{\infty}$ and  an isometry $U$ on $\mathcal{H}$,  that leaves $\Omega$ invariant, and such that $\sigma(x)\Omega=Ux\Omega$.   

\subsection{Braid-invariance and shift invariance}
We are especially concerned with shifts of pairs of generators of the parafermion algebra, as they correspond to the action of the four-string braids.  
Define the double shift $\alpha\in End(PF_{\infty})$ by  
 \be\label{Basic2Shift}
\alpha(c_j)=c_{j+2}\;,
\text{ for all }j\in\mathbb{N}\;.
 \ee
 The picture for the double shift is:
\be
\raisebox{-1cm}{\begin{tikzpicture}
\begin{scope}[shift={(4,1)},yscale=-.8,xscale=.8]
\draw (0, 0)--(0, 2); 
\draw (1, 0)--(1, 2); 
\draw (2, 0)--(2, 2); 
\draw (3, 0)--(3, 2); 
\draw (4, 0)--(4, 2); 
\node at (0, -.2) {$_{_{_{_{ 1}}}}$};
\node at (1, -.2) {$_{_{_{_{ 2}}}}$};
\node at (2, -.2) {$_{_{_{_{ 3}}}}$};
\node at (3, -.2) {$_{_{_{_{ 4}}}}$};
\node at (4, -.2) {$_{_{_{_{ j}}}}$};
\node at (3.5, 1){$\cdots$};
\node at (-0.1, 1) {$_{_{_{_{ i_1}}}}$};
\node at (0.9, 1) {$_{_{_{_{ i_2}}}}$};
\node at (1.9, 1) {$_{_{_{_{ i_3}}}}$};
\node at (2.9, 1) {$_{_{_{_{ i_4}}}}$};
\node at (3.9, 1) {$_{_{_{_{ i_j}}}}$};
\node at (4.75, 1) {$\xrightarrow[]{\alpha}$};
\node at (3.5, 1){$\cdots$};
\node at (-0.1+5.5, 1) {$_{_{_{_{}}}}$};
\node at (0.9+5, 1) {$_{_{_{_{ }}}}$};
\node at (1.9+5.5, 1) {$_{_{_{_{ i_1}}}}$};
\node at (2.9+5.5, 1) {$_{_{_{_{ i_2}}}}$};
\node at (3.9+5.5+0.3, 1) {$_{_{_{_{ i_{j-2}}}}}$};
\draw (0+5+0.5, 0)--(0+5.5, 2); 
\draw (1+5+0.5, 0)--(1+5.5, 2); 
\draw (2+5+0.5, 0)--(2+5.5, 2); 
\draw (3+5+0.5, 0)--(3+5.5, 2); 
\draw (4+5+0.5+0.5, 0)--(4+5+0.5+.5, 2); 
\node at (0+5+0.5, -.2) {$_{_{_{_{ 1}}}}$};
\node at (1+5+0.5, -.2) {$_{_{_{_{ 2}}}}$};
\node at (2+5+0.5, -.2) {$_{_{_{_{ 3}}}}$};
\node at (3+5+0.5, -.2) {$_{_{_{_{ 4}}}}$};
\node at (4+5+0.5+0.5, -.2) {$_{_{_{_{ j}}}}$};
\node at (9, 1){$\cdots$};
\end{scope}
\end{tikzpicture}}\;.
\ee
\medskip

Let $(\overline{PF}^{\varphi}_{\infty})^{\alpha}$  denote the fixed point algebra of $\overline{PF}^{\varphi}_{\infty}$
under the shift
\be
(\overline{PF}^{\varphi}_{\infty})^{\alpha}
:=\set{x\in \overline{PF}^{\varphi}_{\infty}|\alpha(x)=x }\;.
\ee
 A state $\varphi$ on $PF_{\infty}$ will be called $\alpha$-shift-invariant if 
 \be\label{TranslationInvarantState}
\varphi=\varphi\circ \alpha\;.
\ee
Let $S_{\alpha}$ denote the set of $\alpha$-shift-invaraiant states on $PF_{\infty}$.  

Similarly we say that the state $\varphi$ on $PF_{\infty}$ is braid-invariant if it is invariant under the adjoint action of the braid group  $\mathbb{B}_{\infty}$, 
\be\label{BraidInvarantState}
\varphi=\varphi\circ Ad(b)\;,
\ee
for any $b\in \mathbb{B}_{\infty}$. 
Let $S_{\mathbb{B}_{\infty}}$ denote the set of $\mathbb{B}_{\infty}$-invariant states on $PF_{\infty}$.

Let  $(\overline{PF}^{\varphi}_{\infty})^{\mathbb{B}_{\infty}}$ 
denote the fixed point algebra of $\overline{PF}^{\varphi}_{\infty}$
under the adjoint action of the braid group $\mathbb{B}_{\infty}$,
\be
(\overline{PF}^{\varphi}_{\infty})^{\mathbb{B}_{\infty}}
:=\set{x\in \overline{PF}^{\varphi}_{\infty}|Ad(\sigma)(x)=x, \forall \sigma\in \mathbb{B}_{\infty}}\;.
\ee

 \begin{prop}
\label{lem:rep_shift}
For  $x\in\overline{PF}^{\varphi}_{\infty}$, the strong limit of consecutive braidings exists. It equals to the shift of $x$,  
 \begin{eqnarray}\label{AlphaAsBraid}
\alpha(x)=\stlim_{n\to\infty}Ad(b_1b_2...b_n)(x)\in\overline{PF}^{\varphi}_{\infty}\;.
\end{eqnarray}
Any braid-invariant state is also $\alpha$-shift-invariant, $S_{\mathbb{B}_{\infty}}\subset S_{\alpha} $ . 
\end{prop}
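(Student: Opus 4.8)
The plan is to prove both assertions first on the norm-dense $*$-subalgebra $PF_{\infty}$ of finitely supported elements, where the braidings stabilize exactly, and then to push the strong limit up to the von Neumann closure. Reading $b_j$ as the transposition of the $j$-th and $(j+1)$-th qudit slots, where the $k$-th slot is the parafermion pair $(c_{2k-1},c_{2k})$, the word $b_1b_2\cdots b_n$ induces on slots the cyclic permutation sending $k\mapsto k+1$ for $1\le k\le n$ and $n+1\mapsto 1$. By the computation recorded just before this Proposition (pairs are exchanged cleanly and charges pass freely under a braid, so no phases are produced), for $x\in PF_\infty$ supported on slots $\{1,\dots,m\}$ and any $n\ge m$ one has the \emph{exact} algebraic identity $Ad(b_1\cdots b_n)(x)=\alpha(x)$, since the wrap-around only touches slot $n+1>m+1$. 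In particular, on $PF_\infty$ the net $Ad(b_1\cdots b_n)(x)$ is eventually constant, so its strong limit trivially exists and equals $\alpha(x)$.

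This stabilization already yields the inclusion $S_{\mathbb{B}_\infty}\subset S_\alpha$ with no further analysis. For $x\in PF_\infty$ and $n$ past the support of $x$, the braid-invariance condition \eqref{BraidInvarantState} applied to $b_1\cdots b_n\in\mathbb{B}_\infty$ gives $\varphi(\alpha(x))=\varphi(Ad(b_1\cdots b_n)(x))=\varphi(x)$. Hence $\varphi\circ\alpha=\varphi$ on the norm-dense subalgebra $PF_\infty$, and therefore on its norm closure, so $\varphi\in S_\alpha$ whenever $\varphi\in S_{\mathbb{B}_\infty}$.

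For the first assertion on all of $\overline{PF}^{\varphi}_{\infty}$ I would exploit that each $b_1\cdots b_n$ is a unitary element of $PF_\infty$ by Proposition~\ref{Prop:FourStringBraidPF}, so $W_n:=\pi(b_1\cdots b_n)$ is unitary and $\|Ad(b_1\cdots b_n)(x)\|=\|x\|$ uniformly in $n$. Writing $\alpha_n:=Ad(b_1\cdots b_n)$ and using that $\varphi$ is $\alpha_n$-invariant, the discussion of states and automorphisms furnishes $\Omega$-fixing implementers $V_n$ with $V_n x V_n^{*}=\alpha_n(x)$ and $V_n\Omega=\Omega$. By the stabilization above, $V_n a\Omega=\alpha(a)\Omega$ is eventually constant for each $a\in PF_\infty$, so $V_n\to U$ strongly, where $U$ is the isometry implementing $\alpha$ (which exists since $\varphi\circ\alpha=\varphi$). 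As $PF_\infty\Omega$ is dense and the conjugations are uniformly bounded, it then suffices to show $\alpha_n(x)\,a\Omega\to\alpha(x)a\Omega$ for $a\in PF_\infty$ and arbitrary $x$ in the closure.

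The delicate point, which I expect to be the main obstacle, is that the unitaries $V_n$ do not converge after conjugation: $V_n^{*}$ has no strong limit (its natural limit $U^{*}$ is merely a co-isometry, as $\alpha$ is a non-surjective endomorphism), so one cannot simply pass the limit through $V_nxV_n^{*}$. Approximating $x$ by $x_0\in PF_\infty$ leaves the error $\|\alpha_n(x-x_0)\,a\Omega\|=\|(x-x_0)\,\alpha_n^{-1}(a)\Omega\|$, which is not controlled uniformly in $n$ because the vectors $\alpha_n^{-1}(a)\Omega$ are carried off to arbitrarily high slots. I would therefore first establish weak convergence of the matrix coefficients $\langle b\Omega,\alpha_n(x)a\Omega\rangle$ by a density argument, isolating this uniform-in-$n$ control as the essential estimate, and then upgrade to the strong topology using that $\alpha$ is a $*$-homomorphism: applying the weak convergence to both $x$ and $x^{*}x$ gives $\|\alpha_n(x)\xi\|^2=\langle\xi,\alpha_n(x^{*}x)\xi\rangle\to\langle\xi,\alpha(x^{*}x)\xi\rangle=\|\alpha(x)\xi\|^2$, and combined with weak convergence this forces $\alpha_n(x)\xi\to\alpha(x)\xi$ in norm. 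This identifies the strong limit as $\alpha(x)$ and completes the proof.
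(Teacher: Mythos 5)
Your first two steps are correct and coincide with the paper's proof: on $PF_{\infty}$ the conjugations stabilize, $Ad(b_1\cdots b_n)(x)=\alpha(x)$ once $n$ passes the support of $x$, and braid invariance then gives $\varphi\circ\alpha=\varphi$ on $PF_{\infty}$, which settles $S_{\mathbb{B}_{\infty}}\subset S_{\alpha}$. The gap is in the only hard part of the proposition, the extension of \eqref{AlphaAsBraid} to all of $\overline{PF}^{\varphi}_{\infty}$. You diagnose the obstruction exactly right --- the error term $\norm{(x-x_0)\,\alpha_n^{-1}(a)\Omega}$ involves test vectors that escape to arbitrarily high slots, so strong convergence of the approximants gives no uniform-in-$n$ control --- but the proposal then stops precisely there: the ``essential estimate'' you say you would establish for weak convergence \emph{is} this uniform control, and a density argument meets the identical obstruction, since $\abs{\lra{b\Omega,\alpha_n(x-x_j)a\Omega}}\leq \norm{b}\,\norm{(x-x_j)\alpha_n^{-1}(a)\Omega}$. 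In addition, your weak-to-strong upgrade is circular: the step $\lra{\xi,\alpha_n(x^*x)\xi}\to\lra{\xi,\alpha(x^*x)\xi}=\norm{\alpha(x)\xi}^2$ presumes that the limit map satisfies $\alpha(x^*x)=\alpha(x)^*\alpha(x)$ on the von Neumann algebra. A pointwise WOT limit of automorphisms need not be multiplicative, and here the limit certainly is not $UxU^*$ (take $x=I$: $\alpha_n(I)=I$ for all $n$, while $UU^*$ is a proper projection because $\alpha$ is not surjective); multiplicativity of the extension is essentially equivalent to the strong convergence being proved, so it cannot be invoked as a hypothesis.

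What closes the gap is algebraic, not soft-analytic, and it is what the paper's estimate \eqref{CauchyAlphaX} relies on in compressed form: braid invariance of $\varphi$ rewrites the error as $\varphi\bigl(\tilde y^*(x_j-x_{j'})^*(x_j-x_{j'})\tilde y\bigr)$ with $\tilde y=Ad(B_{m_{j'}}^{-1})(y)$, and the claim is that the test element is effectively independent of $j'$. Taken literally this needs care --- the slot-$1$ content of $y$ is carried to slot $m_{j'}+1$, which does depend on $j'$ --- but that is the \emph{only} $j'$-dependence, and it is neutralized by the grading: decompose $x_j-x_{j'}$ into finitely many terms $e\,w$ with $e$ a monomial in $(c_1,c_2)$ and $w$ of definite charge in the remaining generators (both extractions are finite averages of inner automorphisms, as in the proof of Proposition \ref{prop:rep_ET}, hence strongly continuous on bounded sets); each such term $q$-commutes with the far-away factor of $\tilde y$, so that factor can be moved past the difference at the cost of a phase, leaving test vectors independent of $j'$, on which strong convergence of $x_j$ finishes the proof. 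This charge-decomposition/$q$-commutation mechanism is the missing idea in your plan: without it neither your weak convergence nor your strong upgrade can be established, so the first assertion of the proposition remains unproved.
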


\begin{proof}
Any element $x\in PF_{2m}$ satisfies $Ad(b_1b_2...b_m)(x)=\alpha(x)$, and furthermore $Ad(b_1b_2...b_n)(x)=Ad(b_1b_2...b_m)(x)$ for $n>m$.  Thus \eqref{AlphaAsBraid} holds on this dense subalgebra.

If $x_{j}\in PF_{2m_{j}}$ converges strongly to $ x \in \overline{PF}^{\varphi}_{\infty}$, we  claim 
that $\alpha(x_{j})$ converges strongly to a limit that we denote $\alpha(x)$. Let $B_{\ell}=b_{1}\cdots b_{\ell}$ denote the unitary transformation implementing this element of the braid group on the GNS Hilbert space $\mathcal{H}$.  Choose  $y\in PF_{k}$, and without loss of generality, let $k<j<j'$. Then 
\beq\label{CauchyAlphaX}
\norm{(\alpha(x_{j})-\alpha(x_{j'}))y}^{2}_{\mathcal{H}} 
&=&
\norm{(B_{m_{j'}}(x_{j}-x_{j'})B_{m_{j'}}^{*}y}^{2}_{\mathcal{H}} \nn
&=& 
\varphi(y^{*}  B_{m_{j'}}(x_{j}-x_{j'})^{*}(x_{j}-x_{j'})B_{m_{j'}}^{*}y)\nn
&=&
\varphi(\tilde{y}^{*} (x_{j}-x_{j'})^{*}(x_{j}-x_{j'})\tilde{y})\;.
\eeq
Here we use the invariance of $\varphi$ under the braid group and the fact that $\tilde{y}=B_{m_{j'}}^{*}yB_{m_{j'}}$ is independent of $j'$ for $k<j'$.  Thus as a consequence of the strong convergence of $x_{j}$, the difference \eqref{CauchyAlphaX} converges to zero as $j\to\infty$.  As the $y$ range over a dense set of $\overline{PF}^{\varphi}_{\infty}$ this verifies \eqref{AlphaAsBraid}.  It also shows that $\varphi$ is $\alpha$-shift-invariant, for 
\[
\varphi(\alpha(x))=\lim_{j}\varphi(\alpha(x_{j}))=\lim_{j} \varphi(B_{m_{j}}x_{j}B_{m_{j}}^{*})=\lim_{j}  \varphi(x_{j})=\varphi(x)\;.
\]
\end{proof}

\section{The tail algebra for parafermions}\label{sec:lems_tail_alg}
We give the basic definitions of braid-invariance and the tail algebra for parafermions, and we derive some general properties. 
\goodbreak
\subsection{Fundamental Concepts}
\paragraph{\bf Neutral states}
The state $\varphi$  on $PF_{\infty}$ is called neutral, if it vanishes on elements with non-zero charge, namely 
$\varphi(x)=0$ for all $x$ with $deg(x)\neq 0$.

\paragraph{\textbf{The tail algebra}}
Let $\varphi$ be a state on $PF_{\infty}$, and let $H_{\varphi}$ denote the Hilbert space obtained by the GNS construction. Let $\overline{PF}^{\varphi}_{\infty}$ be the von Neumann algebra generated by the representation of $PF_{\infty}$ on $H_{\varphi}$.
Define the tail algebra $PF^{T}$ of the parafermion algebra as  
\begin{eqnarray}
PF^T=\bigcap_n \alpha^n(\overline{PF}^{\varphi}_{\infty}).
\end{eqnarray}

\paragraph{\textbf{Conditional expectation}}
Given an algebra
$\mathcal{A}$ and a subalgebra $\mathcal{B}\subset \mathcal{A}$, a $\mathcal{B}-\mathcal{B}$ bimodule linear map $E:\mathcal{A}\rightarrow\mathcal{B}$ is a conditional expectation if  for all $a\in \mathcal{A}$ and $b,b_1,b_2\in \mathcal{B}$, 
\be\label{ConditionalExpectation}
E(\mathcal{A})=\mathcal{B}\;,\qquad 
E(b)=b\;, \qquad \text{and}\quad
E(b_1ab_2)=b_1E(a)b_2\;.
\ee

\paragraph{\textbf{Charge in the tail algebra}}
To decompose the tail algebra according to the charge,  define charge of an element in the tail algebra to be compatible with the charge of elements in the parafermion algebra.

\subsection{Properties of the tail algebra}
The tail algebra can be characterized using the strong operator topology (SOT). For the shift $\alpha$ defined by \eqref{Basic2Shift} on $PF_{\infty}$,  define the shift-averaging transformation 
\be
\mathfrak{S}_{k}
= \frac{\alpha+\alpha^{2}+...+\alpha^{k}}{k}\;.
\ee
Given a state $\varphi$, we  also have the corresponding  $\alpha$ and $\mathfrak{S}_{k}$ on $\overline{PF}^{\varphi}_{\infty}$.

\begin{prop}\label{prop:rep_ET}
Let  $\varphi$ be an $\alpha$-shift-invariant state on $PF_{\infty}$, and let $x\in \overline{PF}^{\varphi}_{\infty}$. Then $\mathfrak{S}_{k}(x)$ converges strongly to an element in the tail, 
\be\label{TailLimitOperator}
E_T(x)=\stlim_{k\to\infty} \mathfrak{S}_{k}(x)\;.
\ee
The map $E_T$ defines a normal, unital conditional expectation
from $\overline{PF}^{\varphi}_{\infty}$ to $PF^{T}$. Also  
	\be\label{AlphaInside}
	E_T(x)=E_T(\alpha(x))\;,
	\quad\text{and}\quad
	(E_T(x))^{*}=E_T(x^{*})\;.
	\ee 
\end{prop}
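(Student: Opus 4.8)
The plan is to realize $E_T$ as a noncommutative mean-ergodic average of the shift and then to invoke Tomiyama's theorem to upgrade the resulting idempotent to a conditional expectation. First I would record what $\alpha$-invariance buys us on the GNS side: since $\varphi=\varphi\circ\alpha$ and $\alpha$ is a unital $*$-endomorphism, there is an isometry $U$ on $\mathcal H$ with $U\Omega=\Omega$ and $\alpha(x)\Omega=Ux\Omega$, so that $\alpha^{j}(x)\Omega=U^{j}x\Omega$; moreover each $\mathfrak{S}_{k}$ is unital, completely positive and contractive, $\norm{\mathfrak{S}_{k}(x)}\leq\norm{x}$, and $\varphi\circ\mathfrak{S}_{k}=\varphi$. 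The core analytic step is strong convergence of $\mathfrak{S}_{k}(x)$. By the uniform bound it suffices to test on the dense set of vectors $y\Omega$ with $y\in PF_{2\ell}$, and by linearity I may take $x\in PF_{2m}$ and $y$ homogeneous of charges $c$ and $c'$. For $j>\ell$ the shifted operator $\alpha^{j}(x)$ is supported on strings strictly to the right of $y$, so the canonical parafermion relations yield the twisted commutation $\alpha^{j}(x)\,y=q^{-cc'}\,y\,\alpha^{j}(x)$, a fixed phase independent of $j$.

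Using this I would split
\be
\mathfrak{S}_{k}(x)\,y\Omega=\frac{1}{k}\sum_{j=1}^{\ell}\alpha^{j}(x)\,y\Omega+q^{-cc'}\,y\,\frac{1}{k}\sum_{j=\ell+1}^{k}U^{j}x\Omega\;.
\ee
The first sum has at most $\ell$ terms, each of norm $\leq\norm{x}\norm{y\Omega}$, hence is $O(\ell/k)$; the inner average in the second term converges to $Px\Omega$ by von Neumann's mean ergodic theorem applied to the contraction $U$ (with $P$ the orthogonal projection onto $\ker(I-U)$), the omitted initial terms being negligible. Thus $\mathfrak{S}_{k}(x)\,y\Omega\to q^{-cc'}\,y\,Px\Omega$, which gives strong convergence on a dense set; with the uniform bound this defines $E_T(x)=\stlim_{k\to\infty}\mathfrak{S}_{k}(x)$, and since $\overline{PF}^{\varphi}_{\infty}$ is strongly closed and contains every $\mathfrak{S}_{k}(x)$, the limit lies in $\overline{PF}^{\varphi}_{\infty}$.

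Next I would read off the structural properties. The identities \eqref{AlphaInside} are immediate: the telescoping estimate $\mathfrak{S}_{k}(\alpha(x))-\mathfrak{S}_{k}(x)=\tfrac{1}{k}(\alpha^{k+1}(x)-\alpha(x))\to0$ in norm gives $E_T\circ\alpha=E_T$ and, applied to $E_T(x)$ itself, shows $\alpha(E_T(x))=E_T(x)$; while $\mathfrak{S}_{k}(x^{*})=\mathfrak{S}_{k}(x)^{*}$ together with uniqueness of weak-operator limits gives $E_T(x^{*})=E_T(x)^{*}$. Any $\alpha$-fixed $z$ satisfies $z=\alpha^{n}(z)\in\alpha^{n}(\overline{PF}^{\varphi}_{\infty})$ for all $n$, hence lies in $PF^{T}$, so $E_T$ takes values in $PF^{T}$; conversely $\mathfrak{S}_{k}$ fixes $\alpha$-fixed elements, so $E_T$ is idempotent onto the fixed-point algebra. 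Finally $\varphi\circ E_T=\varphi$ (by invariance and normality of $\varphi$), so $E_T$ is a $\varphi$-preserving, unital, contractive idempotent onto a von Neumann subalgebra; Tomiyama's theorem then makes it a completely positive bimodule map, i.e.\ a conditional expectation, and the $\varphi$-preservation makes it normal.

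The step I expect to be the real obstacle is identifying the range of $E_T$ with the full tail algebra $PF^{T}$, rather than merely with the $\alpha$-fixed-point algebra $(\overline{PF}^{\varphi}_{\infty})^{\alpha}$ that the averaging manifestly produces. The inclusion $(\overline{PF}^{\varphi}_{\infty})^{\alpha}\subseteq PF^{T}$ is easy, but the reverse inclusion — that every tail element is genuinely shift-fixed — does not follow from abstract shift-invariance alone (a Wold-decomposition argument only places $b\Omega$ in $\bigcap_{n}U^{n}\mathcal H$, not in $\ker(I-U)$), and must exploit the specific shift structure on $\overline{PF}^{\varphi}_{\infty}$. The bookkeeping of the $q$-phases in the twisted commutation is routine but must be done carefully so that the scalar surviving in the limit is manifestly independent of the truncation level $\ell$.
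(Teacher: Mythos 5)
Your core limit argument coincides with the paper's: for $x$ in some $PF_{2m}$ and a homogeneous test element $y$, commute $\alpha^{j}(x)$ past $y$ at the cost of a $j$-independent phase, then apply the mean ergodic theorem to the averages of $U^{j}x\Omega$ (you are in fact more careful than the paper in noting that $U$ need only be an isometry, which the ergodic theorem tolerates). The genuine gap is in the sentence ``by the uniform bound it suffices to test on the dense set \dots and by linearity I may take $x\in PF_{2m}$'': the proposition quantifies over all $x\in\overline{PF}^{\varphi}_{\infty}$, and the bound $\norm{\mathfrak{S}_{k}(x)}\leq\norm{x}$ only lets you shrink the set of \emph{test vectors} for a fixed $x$; it does not let you replace $x$ itself by an approximant from the algebraic part, since strong convergence $x_{j}\to x$ gives no control of $(\mathfrak{S}_{k}(x)-\mathfrak{S}_{k}(x_{j}))y\Omega$ that is uniform in $k$ without a further idea. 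That further idea occupies roughly half of the paper's proof: take a Kaplansky-bounded approximating sequence, split it into charge-homogeneous pieces $y_{j,\ell}$ by averaging conjugation by $c_{1}$, and use the fixed commutation phase to pull the test element out, yielding the $k$-uniform estimate $\abs{(E_T(y_{0,\ell})-E_T(y_{j,\ell}))A\Omega}\leq\norm{A}\,\abs{(y_{0,\ell}-y_{j,\ell})\Omega}$. The same estimate is what proves normality; your shortcut ``$\varphi$-preservation makes it normal'' is not valid in general, because that implication requires the vector state to be faithful on $\overline{PF}^{\varphi}_{\infty}$, which is not guaranteed (it fails, for instance, whenever the GNS representation is irreducible, so that $\overline{PF}^{\varphi}_{\infty}$ is all bounded operators).

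On the range question you flag as ``the real obstacle'': half of it is cheap, and the paper gets it without passing through the fixed-point algebra --- from $E_T(x)=E_T(\alpha^{n}(x))=\stlim_{k}\alpha^{n}(\mathfrak{S}_{k}(x))$ the limit lies in $\alpha^{n}(\overline{PF}^{\varphi}_{\infty})$ for every $n$, hence in $PF^{T}$; your route (range $\subseteq$ fixed points $\subseteq$ tail) gives the same inclusion. What truly needs more than $\alpha$-invariance is the reverse containment $PF^{T}\subseteq(\overline{PF}^{\varphi}_{\infty})^{\alpha}$, which both your Tomiyama step and the paper's direct bimodule computation $y_{1}\mathfrak{S}_{k}(x)y_{2}=\mathfrak{S}_{k}(y_{1}xy_{2})$ require. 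Here you are right that it is not formal: in the paper it is exactly Proposition~\ref{Prop:eq_fix_tail}, proved under \emph{braid} invariance (a tail element lies in every $\alpha^{n+1}(\overline{PF}^{\varphi}_{\infty})$, hence commutes with each neutral braid $b_{n}$, hence is fixed by $\alpha$ via Proposition~\ref{lem:rep_shift}), and the paper's proof of Proposition~\ref{prop:rep_ET} implicitly defers to this when it invokes ``the invariance of $PF^{T}$ under the shift.'' So to close your argument you should either assume braid invariance and import that identification, or prove tail-fixedness directly; otherwise what you have constructed is a conditional expectation onto $(\overline{PF}^{\varphi}_{\infty})^{\alpha}$ rather than onto $PF^{T}$. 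With that settled, your use of Tomiyama's theorem is a perfectly good substitute for the paper's hands-on verification of the bimodule property.
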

\begin{proof}
Clearly $E_T(I)=I$, and  
\be\label{Equ: alpha}
\stlim_{k} \frac{\alpha(x)-\alpha^{(k+1)}(x)}{k}=0\;.
\ee
So if $ E_T(x)= \stlim_{k\to\infty} \mathfrak{S}_{k}(x)$ exists, then $E_T(\alpha(x))=E_T(x)$, and the limit is in the tail algebra $PF^T$.
As $\|\mathfrak{S}_{k}(x)\| \leq \|x\|,$ it is sufficient to show that $ \lim_k \mathfrak{S}_{k}(x) A\Omega$ exists for all monomials $A \in PF_{2m}$, with $m\geq 1$.

We first establish the limit when $x$ is a monomial in $PF_{2m}$, so that for any  $k>m$, and for $r=-deg(x)deg(A)$,
\be
\alpha^k(x) A=q^{r} A \alpha^k(x)\;.% ~\forall ~ k>m,
\ee
At most $2m$ terms differ in the sums defining $ \mathfrak{S}_{k}(x)A$ and $q^{r} A  \mathfrak{S}_{k}(x)$, so\footnote{We use $\norm{\ \cdot \ }$ to denote the norm of an operator and $\abs{\,\cdot\, }$ to denote the norm of a vector.}
%where $r=-deg(x)deg(A)$. So 
\be\label{TailLimit-1}
\abs{\mathfrak{S}_{k}(x)A\Omega
 -q^{r}A  \mathfrak{S}_{k}(x)\Omega } 
\leq \frac{2m}{k} \|A\| \  \|x\| \   \abs{\Omega} \to_{k\to\infty} 0\;.
\ee 
%Choose $x$ as above, choose $\mathcal{D}$ to be vectors of the form $A\Omega$, where $A\in PF_{2m}$ for some $m$, and where $A$ has a fixed degree $\deg(A)$. Having chosen $x$ and $A$ in this way, whenever $k>m$, the translate $\alpha^{k}(x)$ is localized to the right of $A$. Since both $x$ and $A$ have fixed degree, when $j>m$, one infers from the parafermion relation \eqref{ParafermionRelns}  that $\alpha^{j}(x)A=q^{r}A\alpha^{j}(x)$, where $r$ is independent of $j$.
%
%Therefore under these circumstances,
%\be\label{TailLimit-1}
%\mathfrak{S}_{k}(x)A\Omega
% =q^{r}A 
%\mathfrak{S}_{k}(x)\Omega + \epsilon.
%\ee
%The error term $\epsilon$ in \eqref{TailLimit-1} arises from at most the first  $m$ terms in $\mathfrak{S}_{k}$. Furthermore $\epsilon$ is proportional to $k^{-1}$.   Consequently   $\Vert\epsilon\Vert\leqslant O(m/k)\Vert A \Vert=o(1)$, as $k\to \infty$.  
%We can use the GNS construction and $\alpha$-shift-invariance of $\varphi$ to rewrite \eqref{TailLimit-1} as
%\be\label{TailLimit-2}
%\mathfrak{S}_{k}(x)A\Omega
%=q^{r}A
% \lrp{\frac{U_{\alpha}+U_{\alpha}^{2}+...+U_{\alpha}^{k}}{k}}x\Omega + \epsilon\;.
%\ee
Since $\varphi$ is $\alpha$-shift-invariant, there is a unitary $U_{\alpha}$ on $H_{\varphi}$ that  implements $\alpha$ and leaves $\Omega$ invariant, namely
$$U_{\alpha} A \Omega=\alpha(A) \Omega.$$
Then
\be\label{TailLimit-2}
\mathfrak{S}_{k}(x)\Omega= \lrp{\frac{U_{\alpha}+U_{\alpha}^{2}+...+U_{\alpha}^{k}}{k}}x\Omega \;.
\ee
Here the unitary $U_{\alpha}$ implements  $\alpha$ and leaves $\Omega$ invariant. 
Now we use the von Neumann mean  ergodic theorem, see  page 407 of \cite{RS55}, to conclude that the vectors \eqref{TailLimit-2} converge strongly as $k\to\infty$.  Hence any $x\in PF_{2m}$ satisfies 
%the operators in \eqref{TailLimitOperator} have a strong limit $E_T(x)$, and
\begin{align}
E_T(x) \Omega &= \lim_k \mathfrak{S}_k(x) \Omega \;,\quad\text{and}\quad
E_T(x) A\Omega = q^r A E_T(x) \Omega \;.
\end{align}
Any operator in $PF_{2m}$ is a finite sum of monomials, so the limit \eqref{TailLimitOperator} exists for operators in $PF_{\infty}$.

Now we show that the limit \eqref{TailLimitOperator} extends to all $ x_{0}  \in \overline{PF}^{\varphi}_{\infty}$. 
 %Let $x_0 \in \overline{PF}^{\varphi}_{\infty}$, 
 By Kaplansky density theorem, there are $x_j \in PF_{2j}$, $j=1,2,\ldots$, such that $\|x_j \| \leq \|x_0\|$ and $ \stlim_{j} x_j=x_0$. Let  $\ell \in \mathbb{Z}_d$ and $y_{j}= \alpha(x_j) $. Define 
%\begin{align}
%y_j&= \alpha(x_j) \;, \quad\text{and}\quad
%y_{j,\ell}=\frac{1}{d}\sum_{k\in \mathbb{Z}_d} q^{k \ell} c_1^{-k} y_j c_1^{k}
%=\frac{1}{d}\sum_{k\in \mathbb{Z}_d} q^{k( \ell -{\text{deg}(y_{j}))}}  y_j \;.
%\end{align}
\begin{equation}
y_{j,\ell}=\frac{1}{d}\sum_{k\in \mathbb{Z}_d} q^{k \ell} c_1^{-k} y_j c_1^{k}
%=\frac{1}{d}\sum_{k\in \mathbb{Z}_d} q^{k( \ell-{\text{deg}(y_{j}))}}  y_j
%= \delta_{\ell , \,\deg(y_{j})}y_{j} 
\;.
\end{equation}
Then $y_j= \sum_{\ell \in \mathbb{Z}_d} y_{j,\ell} $, with $\text{deg}(y_{j,\ell})=\ell $, and they  satisfy 
\begin{align}
%\quad\text{where}\quad
%\text{deg}(y_{j,\ell})=\ell \;,\\
\|y_{j,\ell}\| \leq \|y_{j}\| &\leq \|x_0\| \;, \quad\text{and}\quad
\stlim_{j} y_{j,\ell}=y_{0,\ell} \;.
\end{align}
Whenever $k>m$, 
\be
\alpha^k(y_{j,\ell}) A=q^{- \text{deg}(A) \ell} A \alpha^k(y_{j,\ell})\;.%~\forall ~ k>m\;.
\ee
Arguing  as  above, we infer that  
\be
\abs{\mathfrak{S}_{k}(y_{j,\ell})A\Omega -q^{-\deg(A) \ell}A \mathfrak{S}_{k}(y_{j,\ell})\Omega}
\leq \frac{2m}{k} \|A\| \  \|x_0\| \  \abs{\Omega}\;.
\ee 
Furthermore $\mathfrak{S}_k(y_{0,\ell})$ has a strong limit $E_T(y_{0,\ell})$, such that
\begin{align*}
E_T(y_{j,\ell}) \Omega &= \lim_k \mathfrak{S}_k(y_{j,\ell}) \Omega \;,\\
E_T(y_{j,\ell}) A\Omega &= q^{- \text{deg}(A) \ell} A E_T(y_{j,\ell}) \Omega \;.
\end{align*}
Therefore $E_T(y_0)$ exists. By \eqref{Equ: alpha}, $E_T(x_0)$ exists and
\be
E_T(x_0)=E_T(y_0)=\sum_{\ell\in \mathbb{Z}_d} E_T(y_{0,\ell}) \;.
\ee

Suppose $\{x_{m}\}$ is a sequence in the unit ball of $\overline{PF}^{\varphi}_{\infty}$, and $ \stlim_{j} x_j=x_0$. To show that $E_{T}$ is normal, it is sufficient to show that $\lim_{j} E_{T}(x_{j})=E_{T}(x_{0})$.
Define $y_{j,\ell}$ as above, so the properties above remain true. Moreover,
\begin{align*}
\abs{\mathfrak{S}_{k}(y_{0,\ell}) \Omega - \mathfrak{S}_{k}(y_{j,\ell})\Omega}
&=\abs{\lrp{\frac{U_{\alpha}+U_{\alpha}^{2}+...+U_{\alpha}^{k}}{k}}(y_{0,\ell}-y_{j,\ell} )\Omega}\\
&\leq \abs{ (y_{0,\ell}-y_{j,\ell} )\Omega} \; .
\end{align*}
So 
\begin{align}
| (E_T(y_{0,\ell})  - E_T(y_{j,\ell}))\Omega | &\leq |(y_{0,\ell}-y_{j,\ell} )\Omega | \;,  \\
|  (E_T(y_{0,\ell})  - E_T(y_{j,\ell})) A \Omega | &=  | A (E_T(y_{0,\ell})  - E_T(y_{j,\ell}))\Omega | \\
&\leq \|A\| \ |(y_{0,\ell}-y_{j,\ell} )\Omega | \;.
\end{align}
Therefore,
\begin{align}
\stlim_{j} E_T(y_{j,\ell})&=E_T(y_{0,\ell}) \;,\quad
\stlim_{j} E_T(y_{j})=E_T(y_{0}) \;, \quad\text{and}\\
\stlim_{j} E_{T}(x_{j}) &=  E_{T}(x_{0})\;.
\end{align}
%
%
%
%\begin{align}
%| \mathfrak{S}_{k}(y_{0,\ell}) A \Omega - \mathfrak{S}_{k}(y_{j,\ell) A \Omega |
%\leq
%\end{align}

Finally we verify that the map $x\mapsto E_T(x)$ defines a conditional expectation, by checking the three defining relations in \eqref{ConditionalExpectation}. 
%We have already shown the first identity. The second identity follow from the third one and the unital condition. 
We have shown the first identity in \eqref{AlphaInside}.
Note that  $\mathfrak{S}_{k}(x^{*})=\mathfrak{S}_{k}(x)^{*}$, so taking the limit in $k$ we obtain the second identity in~\eqref{AlphaInside}.
For the third identity, consider  $y_{1},y_{2}\in PF^T$. Then
\[
y_{1}\mathfrak{S}_{k}(x)y_{2} 
= \mathfrak{S}_{k}(y_{1}xy_{2})\;,
\]
using the invariance of $PF^{T}$ under the shift $\alpha$.  As a consequence the $k\to\infty$ limits agree, so 
$x\mapsto E_T(x)$ does define a conditional expectation.  
\end{proof}

\begin{remark}
A combination of Theorem 2.2 in \cite{Gohm2009} and Proposition 7.3 in \cite{Kostler10} also shows that $E_T$ is a conditional expectation onto the tail algebra. 
\end{remark}

\begin{cor} \label{cor:BraidExpTail}
Let $\varphi\in S_{\mathbb{B}_{\infty}}$, then $\varphi=\varphi\circ E_{T}$.
\end{cor}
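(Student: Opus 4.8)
The plan is to exploit the two facts already in hand: first, that a braid-invariant state is $\alpha$-shift-invariant (Proposition~\ref{lem:rep_shift}), so that $\varphi\in S_{\alpha}$; and second, that $E_T$ is obtained as the strong limit of the Cesàro averages $\mathfrak{S}_k=(\alpha+\cdots+\alpha^{k})/k$ (Proposition~\ref{prop:rep_ET}). The whole argument is then a short consequence of combining the shift-invariance of $\varphi$ with the continuity of $\varphi$ along a strongly convergent sequence.

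First I would fix $x\in\overline{PF}^{\varphi}_{\infty}$ and observe that, because $\varphi=\varphi\circ\alpha$, we have $\varphi(\alpha^{n}(x))=\varphi(x)$ for every $n\geq 1$. Averaging over $n$ gives $\varphi(\mathfrak{S}_{k}(x))=\frac{1}{k}\sum_{n=1}^{k}\varphi(\alpha^{n}(x))=\varphi(x)$ for each $k$; in particular the value is independent of $k$.

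Next I would pass to the limit in $k$. Since $\varphi$ is the vector state $\varphi(y)=\lra{\Omega,y\Omega}$ associated with the cyclic GNS vector $\Omega$, and since $\mathfrak{S}_{k}(x)$ converges strongly to $E_T(x)$ by Proposition~\ref{prop:rep_ET}, the vectors $\mathfrak{S}_{k}(x)\Omega$ converge to $E_T(x)\Omega$ in $H_{\varphi}$. Continuity of the inner product then yields $\varphi(\mathfrak{S}_{k}(x))=\lra{\Omega,\mathfrak{S}_{k}(x)\Omega}\to\lra{\Omega,E_T(x)\Omega}=\varphi(E_T(x))$. Comparing this with the previous step, where the left-hand side equals $\varphi(x)$ for every $k$, gives $\varphi(x)=\varphi(E_T(x))$, which is the claim.

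I expect no serious obstacle, since the content is carried entirely by the earlier propositions. The only point needing a moment's care is the justification that $\varphi$ is continuous along the sequence $\mathfrak{S}_{k}(x)$; I would handle this precisely as above, by noting that strong convergence of the operators $\mathfrak{S}_{k}(x)$ gives convergence of the vectors $\mathfrak{S}_{k}(x)\Omega$ in $H_{\varphi}$, so that evaluating the bounded vector state is automatic. No appeal to the normality of $E_T$ or to any more delicate interchange of limits is required.
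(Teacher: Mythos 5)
Your proof is correct and follows essentially the same route as the paper: the paper's own argument simply cites Proposition~\ref{lem:rep_shift} (braid invariance implies $\alpha$-shift-invariance) and Proposition~\ref{prop:rep_ET} (the Ces\`aro-average construction of $E_T$), and your write-up is precisely the natural unpacking of that inference, using $\varphi(\mathfrak{S}_k(x))=\varphi(x)$ together with strong convergence of $\mathfrak{S}_k(x)\Omega$ to $E_T(x)\Omega$ and continuity of the GNS vector state.
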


\begin{proof}
From Proposition \ref{lem:rep_shift} we infer that the state
$\varphi$ is invariant under the action of $\alpha$, and from Proposition~\ref{prop:rep_ET} we infer that  
$\varphi\circ E_T=\varphi$.
\end{proof}

\begin{prop}
The tail algebra $PF^T$ is a commutative $\mathbb{Z}_d$-graded von Neumann algebra, with the charge-$\ell$ part denoted $PF^{T,\ell}$, 
\be
PF^{T}=\bigoplus_{\ell \in \mathbb{Z}_d} PT^{T,\ell}.
\ee
Moreover, $PT^{T,\ell}=0$ when $d \nmid \ell^2$.
\end{prop}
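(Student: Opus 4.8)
The plan is to prove the three assertions in the order: first the von Neumann algebra structure and the grading, then a $q$-commutation relation for tail elements, then the vanishing $PF^{T,\ell}=0$ for $d\nmid\ell^{2}$, and finally commutativity, which I will deduce from that vanishing together with a short number-theoretic observation.

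First I would record that $PF^{T}=\bigcap_{n}\alpha^{n}(\overline{PF}^{\varphi}_{\infty})$ is a von Neumann algebra, being an intersection of von Neumann algebras (each $\alpha^{n}(\overline{PF}^{\varphi}_{\infty})$ is the image of one under the normal $*$-endomorphism $\alpha=\mathrm{Ad}(U_{\alpha})$). To grade it, note that every $x\in PF^{T}$ lies in $\alpha(\overline{PF}^{\varphi}_{\infty})$, hence is a strong limit of polynomials in $\{c_{j}:j\geq 3\}$. On such elements conjugation by $c_{1}$ multiplies each generator $c_{j}$, $j>1$, by $q^{-1}$, so it implements the charge-grading automorphism. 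Thus for $\ell\in\mathbb{Z}_{d}$ the map $P_{\ell}(x)=\tfrac{1}{d}\sum_{k\in\mathbb{Z}_{d}}q^{k\ell}c_{1}^{-k}xc_{1}^{k}$ projects onto the charge-$\ell$ part; and since conjugation by $c_{1}$ preserves every $\alpha^{n}(\overline{PF}^{\varphi}_{\infty})$ (it only rescales the generators $c_{j}$, $j>2n$, by phases), $P_{\ell}$ maps $PF^{T}$ into itself. This gives $PF^{T}=\bigoplus_{\ell\in\mathbb{Z}_{d}}PF^{T,\ell}$ with $PF^{T,\ell}=P_{\ell}(PF^{T})$.

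Next I would establish the key relation: for $\xi\in PF^{T,\ell}$ and any charge-homogeneous $y\in\overline{PF}^{\varphi}_{\infty}$ of charge $s$, one has $\xi y=q^{-\ell s}y\xi$ as operators. The displayed identity in the proof of Proposition~\ref{prop:rep_ET}, namely $E_{T}(x)A\Omega=q^{-\deg(x)\deg(A)}AE_{T}(x)\Omega$, already shows $\xi A\Omega=q^{-\ell s}A\xi\Omega$ for every local $A$ of charge $s$. Evaluating both sides on $B\Omega$ for local $B$, and using that $AB$ is again local of charge $s+\deg(B)$, I would promote this to the operator identity $\xi A=q^{-\ell s}A\xi$ by cyclicity of $\Omega$. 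Finally, approximating an arbitrary homogeneous $y$ strongly by local elements of the same charge (Kaplansky together with the charge projection already used in Proposition~\ref{prop:rep_ET}) and passing to the strong limit yields $\xi y=q^{-\ell s}y\xi$. With this in hand the vanishing is quick: applying the relation with $y=\xi^{*}\in PF^{T,-\ell}$ gives $\xi\xi^{*}=q^{\ell^{2}}\xi^{*}\xi$, while taking adjoints gives $\xi\xi^{*}=q^{-\ell^{2}}\xi^{*}\xi$, so $(q^{\ell^{2}}-q^{-\ell^{2}})\xi^{*}\xi=0$. If $\xi\neq 0$ then $\xi^{*}\xi\neq 0$, forcing $q^{2\ell^{2}}=1$, i.e. $q^{\ell^{2}}=\pm1$; the value $q^{\ell^{2}}=-1$ is excluded since it would equate the positive operator $\xi\xi^{*}$ with the negative operator $-\xi^{*}\xi$, again forcing $\xi=0$. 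Hence $\xi\neq 0$ implies $q^{\ell^{2}}=1$, that is $d\mid\ell^{2}$; equivalently $PF^{T,\ell}=0$ whenever $d\nmid\ell^{2}$.

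Commutativity then follows formally. For nonzero $\xi\in PF^{T,\ell}$ and $\eta\in PF^{T,p}$ the previous step gives $d\mid\ell^{2}$ and $d\mid p^{2}$; comparing valuations at each prime power $r^{a}\,\|\,d$ shows $v_{r}(\ell),v_{r}(p)\geq\lceil a/2\rceil$, whence $v_{r}(\ell p)\geq a$ and therefore $d\mid\ell p$. Thus $q^{-\ell p}=1$, and the relation $\xi\eta=q^{-\ell p}\eta\xi$ collapses to $\xi\eta=\eta\xi$; extending bilinearly across the grading shows $PF^{T}$ is commutative. I expect the main obstacle to be the middle step — upgrading the vacuum identity of Proposition~\ref{prop:rep_ET} to a genuine operator $q$-commutation relation valid for all homogeneous $y$, since this requires handling the strong-operator approximations and charge projections with the same care as in that proof. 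The other point not to overlook is the sharpening from $d\mid 2\ell^{2}$ to $d\mid\ell^{2}$, which is exactly what the positivity of $\xi\xi^{*}$ and $\xi^{*}\xi$ buys us.
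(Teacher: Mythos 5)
Your proposal is correct, and its skeleton coincides with the paper's: grade $PF^T$ by conjugation with $c_1$, establish a $q$-commutation relation for a charge-$\ell$ tail element, use positivity of $\xi\xi^*$ and $\xi^*\xi$ to force $q^{\ell^2}=1$, and deduce commutativity from the arithmetic fact that $d\mid\ell^2$ and $d\mid p^2$ imply $d\mid \ell p$ (the paper phrases this as $d^2\mid(\ell p)^2\Rightarrow d\mid\ell p$; your valuation argument is the same fact). The one place you genuinely diverge is the derivation of the commutation relation. The paper works directly with local approximants: it picks $y_j\in PF_{2(j+1)}$ of charge $\ell$ converging strongly to the tail element $x$, uses $x=\alpha^k(x)$ together with the locality relation $\alpha^k(y_{j'})\,y_j^* = q^{\ell^2}\,y_j^*\,\alpha^k(y_{j'})$ for spatially separated elements, and passes to strong limits twice to get $xx^*=q^{\ell^2}x^*x$; commutativity with a second tail element $z$ is obtained the same way. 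You instead recycle the vacuum intertwining identity $E_T(x)A\Omega=q^{-\deg(x)\deg(A)}AE_T(x)\Omega$ already proved in Proposition \ref{prop:rep_ET}, promote it to an operator identity by cyclicity (your computation against $AB\Omega$ is right), and then extend to non-local homogeneous $y$ by Kaplansky density plus the charge projections. Both are sound: the paper's version redoes the spatial-separation argument at the operator level and never needs a general lemma, while yours is more economical, factoring all the locality and averaging work through Proposition \ref{prop:rep_ET} and isolating a reusable relation $\xi y=q^{-\ell s}y\xi$, at the cost of the cyclicity promotion that you correctly flag as the delicate step. Two harmless slips: conjugation by $c_1$ multiplies $c_j$ ($j>1$) by $q$, not $q^{-1}$, since the parafermion relations give $c_1c_jc_1^{-1}=qc_j$ (this does not affect your projections $P_\ell$); and your detour through $q^{2\ell^2}=1$ with exclusion of $q^{\ell^2}=-1$ can be shortcut as in the paper, because $\xi\xi^*=q^{\ell^2}\xi^*\xi$ with both sides positive and nonzero already forces $q^{\ell^2}$ to be a positive number on the unit circle, hence $1$.
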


\begin{proof}
As in the proof of Proposition \ref{prop:rep_ET}, for any $x \in PF^{T}$, there are $y_{\ell} \in \overline{PF}^{\varphi}_{\infty}$, $\ell \in \mathbb{Z}_d$, such that
\begin{align}
x=\alpha(x)&=\sum_{\ell \in \mathbb{Z}_d} y_{\ell} \;,\\
c_1 y_{\ell} c_1^{-1}&= q^{\ell} y_{\ell} \;.
\end{align}
\blue{Define $y_{\ell}$ to have charge $\ell$.}
By Proposition \ref{prop:rep_ET},
\begin{align}
x&=\sum_{\ell \in \mathbb{Z}_d} E_T(y_{\ell}) \;,\\
c_1 z_{\ell} c_1^{-1}&= q^{\ell} E_T(y_{\ell}) \;.\text{\color{red}REMOVE}\\
\color{blue} \text{and}\qquad
c_1 x c_1^{-1}&\color{blue}
= \sum_{\ell\in\mathbb{Z}_{d}} q^{\ell} E_T(y_{\ell}) \;.
\end{align}
Therefore, the conjugation by the first parafermion generator $c_{1}$ defines an automorphism
% $\theta_{1}(x):=c_{1}xc_{1}^{-1}$ 
on the tail algebra.   
%If $\theta_1(x)=q^{\ell}x$, then we 
%$\ell$ as the charge of $x$.
\blue{So if $x$ has charge $\ell$,} there are operators  $y_{j} \in PF_{2(j+1)}$ with charge $\ell$, for $j\geq 1$, such that 
\be
\stlim_{j\to\infty} y_{j} =x \;.
\ee
Then
\begin{align}
\stlim_{j\to\infty} \alpha^k(y_{j}) =x \;, \\
\stlim_{j\to\infty} \alpha^k(y_{j}^*) =x^* \;.
\end{align}
Note that 
\be
\alpha^k(y_{j}) y_{j}^*=q^{\ell^2} y_{j}^*  \alpha^k(y_{j}) \;, ~\forall~ k > j+1 \;.   
\ee
So $x y_{j}^*=q^{\ell^2} y_{j}^* x$. Then $x \alpha^k(y_{j}^*)=q^{\ell^2} \alpha^k(y_{j}^*) x$, and 
\be
x x^*=q^{\ell^2} x^* x \;.
\ee
Both $x x^*$ and  $x^* x $ are positive operators, so $q^{\ell^2}=1$, namely $d \mid \ell^2$.

If $z \in PF_T$ has charge $\ell'$, then similarly we have that $d \mid (\ell')^2$ and
\be
x z=q^{-\ell \ell'} z x \;.
\ee
Then $d^2 \mid (\ell \ell')^2$. So $d \mid \ell \ell'$, and $xz=zx$.
Therefore $PF^T$ is commutative.
\end{proof}

%Conjugation by the first parafermion generator $c_{1}$ defines an automorphism $\theta_{1}(y)=c_{1}yc^{-1}_{1}$ on the tail algebra.
%If $\theta_1(y)=q^{j}y$, then $j$ is the  charge of $y$. Charge is preserved by $\alpha$.  Thus  $\mathfrak{S}_{k}(y)$ has the same degree as $y$. From   Proposition \ref{prop:rep_ET} we infer that the strong limit $\mathfrak{S}_{k}(y)\to y_{\infty}\in PF^{T}$, providing an approximating sequence for an element with given degree.

%For a sequence of elements $y^{(n)}=\alpha(x^{(n)}) \in PF_{\infty}$,  one has the decomposition into its j-charged components $ y^{(n,j)}$ given by 
%	\be
%	y^{(n)}=\sum_{j=0}^{d-1} y^{(n,j)}\;,
%	\quad \text{where}\quad
%	y^{(n ,j)}= \frac{1}{d}\sum_{\ell=0}^{d-1} q^{-\ell j} \theta^{\ell}_{1}(y^{(n)})\;.
%	\ee
%Thus if $\stlim_{n\to\infty}y^{(n)}=y$, it follows that $\stlim_{n\to\infty} y^{(n,j)}_{j}=y^{(j)}$, and that $y$ has the decomposition into components with charge $j$ given 
%\be
%y=\sum_{j=0}^{d-1}y^{(j)}\;.
%\ee
%Also 
%\be
%\stlim_{k\to\infty}  \mathfrak{S}_{k}(y^{(j)}) 
%=( y^{(j)})_{\infty}\;,
%\ee
%which has charge $j$.  Thus $y_{\infty}= \sum_{j=0}^{d-1}y^{(j)}_{\infty}$, which shows that $(y^{(j)})_{\infty}=(y_{\infty})^{(j)}$.

\begin{prop}\label{Prop:eq_fix_tail}
Given a braid-invariant state $\varphi\in S_{\mathbb{B}_{\infty}}$ and the corresponding tail algebra $PF^{T}$, 
\begin{eqnarray}
PF^T=(\overline{PF}^{\varphi}_{\infty})^{\alpha}=(\overline{PF}^{\varphi}_{\infty})^{\mathbb{B}_{\infty}}\;.
\end{eqnarray}
\end{prop}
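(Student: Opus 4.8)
The plan is to prove the two equalities by establishing a cycle of three inclusions,
\[
(\overline{PF}^{\varphi}_{\infty})^{\mathbb{B}_{\infty}}\subseteq (\overline{PF}^{\varphi}_{\infty})^{\alpha}\subseteq PF^{T}\subseteq (\overline{PF}^{\varphi}_{\infty})^{\mathbb{B}_{\infty}},
\]
so that all three algebras must coincide; write $\cM=\overline{PF}^{\varphi}_{\infty}$ for brevity. For the first inclusion, I take $x\in\cM^{\mathbb{B}_{\infty}}$. Then $Ad(b)(x)=x$ for every $b\in\mathbb{B}_{\infty}$, and in particular $Ad(b_{1}b_{2}\cdots b_{n})(x)=x$ for all $n$; since Proposition \ref{lem:rep_shift} identifies $\alpha(x)=\stlim_{n\to\infty}Ad(b_{1}b_{2}\cdots b_{n})(x)$, passing to the limit gives $\alpha(x)=x$, i.e.\ $x\in\cM^{\alpha}$. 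For the second inclusion, I use that $\alpha$ restricts to a normal endomorphism of $\cM$: if $\alpha(x)=x$ then $x=\alpha^{n}(x)\in\alpha^{n}(\cM)$ for every $n$, whence $x\in\bigcap_{n}\alpha^{n}(\cM)=PF^{T}$.

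The third inclusion $PF^{T}\subseteq\cM^{\mathbb{B}_{\infty}}$ is where I expect the real work. The key is that each generator $b_{j}$ is both \emph{neutral} and \emph{localized}: by the proof of Proposition \ref{Prop:FourStringBraidPF}, $b_{j}$ lies in the algebra generated by $c_{2j-1},c_{2j},c_{2j+1},c_{2j+2}$, and because each constituent two-string braid is a sum $\tfrac{\omega^{1/2}}{\sqrt d}\sum_{i}c_{k}^{i}c_{k+1}^{-i}$ of charge-zero monomials, $b_{j}$ has charge $0$. I would then invoke the elementary consequence of the canonical parafermion relations \eqref{ParafermionRelns} that moving a high-index parafermion $c_{m}$ past a monomial $A$ supported on indices below $m$ produces the scalar $q^{-\deg(A)}$; when $\deg(A)=0$ this scalar is $1$, so every neutral, localized element commutes with all parafermions of strictly higher index. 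Consequently, given $x\in PF^{T}$ and a fixed $j$, I choose $n=j+1$ so that $2n+1>2j+2$. Since $x\in\alpha^{n}(\cM)$, the von Neumann algebra generated by $\{c_{m}:m\ge 2n+1\}$, and $b_{j}$ commutes with each such generator, $b_{j}$ lies in the commutant of $\alpha^{n}(\cM)$ and hence commutes with $x$; therefore $Ad(b_{j})(x)=b_{j}xb_{j}^{-1}=x$. As this holds for every $j$, I conclude $x\in\cM^{\mathbb{B}_{\infty}}$.

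The main obstacle is thus entirely concentrated in this neutrality-and-locality step of the third inclusion. The supporting points to check carefully are that commuting with the generators $\{c_{m}:m\ge 2n+1\}$ already forces commutation with the weakly closed algebra $\alpha^{n}(\cM)$ they generate (immediate, since a commutant is weakly closed and contains the adjoints $c_{m}^{*}=c_{m}^{-1}$ as well) and that $\alpha^{n}(\cM)$ is indeed generated by precisely these high-index parafermions, using normality of $\alpha^{n}$. The first two inclusions are formal and should present no difficulty.
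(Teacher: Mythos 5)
Your proof is correct and follows essentially the same route as the paper: the identical cycle of inclusions $(\overline{PF}^{\varphi}_{\infty})^{\alpha}\subseteq PF^{T}\subseteq(\overline{PF}^{\varphi}_{\infty})^{\mathbb{B}_{\infty}}\subseteq(\overline{PF}^{\varphi}_{\infty})^{\alpha}$, with the last inclusion coming from Proposition \ref{lem:rep_shift} and the first from $x=\alpha^{n}(x)\in\alpha^{n}(\overline{PF}^{\varphi}_{\infty})$. Your detailed treatment of the middle inclusion --- that $b_{j}$ is neutral and localized, hence commutes with all parafermions of higher index and therefore with $\alpha^{j+1}(\overline{PF}^{\varphi}_{\infty})\supseteq PF^{T}$ --- is precisely the argument the paper compresses into the phrase ``as $x\in\alpha(\overline{PF}^{\varphi}_{\infty})$,'' and you state the required power of the shift more carefully than the paper does.
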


This result works in a general situation, see Theorem 0.3 in \cite{Gohm2009}. We give a quick proof for braided parafermions here.

\begin{proof}
We claim that $(\overline{PF}^{\varphi}_{\infty})^{\alpha} \subset PF^T \subset (\overline{PF}^{\varphi}_{\infty})^{\mathbb{B}_{\infty}}\subset (\overline{PF}^{\varphi}_{\infty})^{\alpha}$.  Assume  $x\in (\overline{PF}^{\varphi}_{\infty})^{\alpha}$, then $x=\alpha^n(x) \in \alpha^n(\overline{PF}^{\varphi}_{\infty})$ for any $n$, and in particular $x\in \bigcap_n \alpha^n(\overline{PF}^{\varphi}_{\infty})=PF^T$.
For any $x\in PF^T$, one has  $Ad(b_n)(x)=x$ for any $n$ as 
$x\in \alpha(\overline{PF}^{\varphi}_{\infty})$.
Thus, $x\in (\overline{PF}^{\varphi}_{\infty})^{\mathbb{B}_{\infty}}$, i.e.,
$PF^T\subset  (\overline{PF}^{\varphi}_{\infty})^{\mathbb{B}_{\infty}}$.
Moreover, $(\overline{PF}^{\varphi}_{\infty})^{\mathbb{B}_{\infty}}\subset(\overline{PF}^{\varphi}_{\infty})^{\alpha}$
follows from Proposition \ref{lem:rep_shift}. 
\end{proof}

Given a state $\varphi\in  S_{\mathbb{B}_{\infty}}$, let $PF^{T,0}$ denote the neutral subalgebra of  $PF^T$, and let
$Z(\overline{PF}^{\varphi}_{\infty}) $ denote the center of the von Neumann algebra $\overline{PF}^{\varphi}_{\infty}$.

\begin{prop}\label{Prop:eqv_cen_tail}
Let  $\varphi\in  S_{\mathbb{B}_{\infty}}$. Then
\begin{eqnarray}
PF^{T,0}=Z(\overline{PF}^{\varphi}_{\infty})\subset PF^{T}\;.
\end{eqnarray}
\end{prop}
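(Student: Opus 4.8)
The plan is to prove the two inclusions $Z(\overline{PF}^{\varphi}_{\infty})\subseteq PF^{T,0}$ and $PF^{T,0}\subseteq Z(\overline{PF}^{\varphi}_{\infty})$ separately; the trailing containment $PF^{T,0}\subseteq PF^{T}$ is immediate from the grading $PF^{T}=\bigoplus_{\ell\in\mathbb{Z}_d}PF^{T,\ell}$ established above.

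For $Z(\overline{PF}^{\varphi}_{\infty})\subseteq PF^{T,0}$, I would exploit the fact, recorded in Proposition~\ref{Prop:FourStringBraidPF}, that every braid $b\in\mathbb{B}_{\infty}$ is a unitary lying in $PF_{\infty}\subseteq\overline{PF}^{\varphi}_{\infty}$. Hence a central element $x$ commutes with each $b$, so $Ad(b)(x)=bxb^{-1}=x$; that is, $x\in(\overline{PF}^{\varphi}_{\infty})^{\mathbb{B}_{\infty}}$, which equals $PF^{T}$ by Proposition~\ref{Prop:eq_fix_tail}. To see that $x$ is moreover neutral, I use that $x$ commutes with $c_1$, so $c_1 x c_1^{-1}=x$; writing the $\mathbb{Z}_d$-graded decomposition $x=\sum_{\ell}x_{\ell}$ with $c_1 x_{\ell}c_1^{-1}=q^{\ell}x_{\ell}$ and comparing charge sectors forces $(q^{\ell}-1)x_{\ell}=0$, whence $x_{\ell}=0$ for $\ell\neq 0$. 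Thus $x=x_0\in PF^{T,0}$.

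The substantive direction is $PF^{T,0}\subseteq Z(\overline{PF}^{\varphi}_{\infty})$, where the main work is to show that a neutral tail element commutes with every local operator. Fix $x\in PF^{T,0}$; by Proposition~\ref{Prop:eq_fix_tail}, $x\in(\overline{PF}^{\varphi}_{\infty})^{\alpha}$, so $\alpha^{n}(x)=x$ for all $n$. Let $y\in PF_{2m}$. Using the Kaplansky density theorem I would choose $x_j\in PF_{2j}$ with $\|x_j\|\leq\|x\|$ and $x_j\to x$ strongly, then pass to the neutral parts $x_{j,0}=\tfrac1d\sum_{k\in\mathbb{Z}_d}c_1^{-k}x_j c_1^{k}$, which are again local, uniformly bounded, of charge $0$, and satisfy $x_{j,0}\to x$ strongly (the neutral part of the neutral $x$ being $x$ itself). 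For $n\geq m$ the operator $\alpha^{n}(x_{j,0})$ is supported on strings of index $>2m$ and has charge $0$, so the parafermion commutation relation used in the proof of Proposition~\ref{prop:rep_ET} gives $\alpha^{n}(x_{j,0})\,y=q^{-\deg(x_{j,0})\deg(y)}\,y\,\alpha^{n}(x_{j,0})=y\,\alpha^{n}(x_{j,0})$, the phase being trivial since $\deg(x_{j,0})=0$. Because $\alpha^{n}$ is implemented by a power of the unitary $U_{\alpha}$, it is strongly continuous on bounded sets, so $\alpha^{n}(x_{j,0})\to\alpha^{n}(x)=x$ strongly as $j\to\infty$; letting $j\to\infty$ therefore yields $xy=yx$. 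As $y$ ranges over all of $PF_{2m}$ for every $m$, and $PF_{\infty}$ is strongly dense in $\overline{PF}^{\varphi}_{\infty}$, we conclude $x\in Z(\overline{PF}^{\varphi}_{\infty})$.

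I expect the only delicate point to be this last inclusion. The crux is to take the neutral part \emph{before} applying the shift, so that the exchange phase $q^{-\deg\cdot\deg}$ collapses to $1$, and to take the strong limit with the shift exponent $n$ held fixed (so that strong continuity of $\alpha^{n}$ applies) rather than along a diagonal in $(n,j)$. The opposite inclusion is essentially formal once one recalls that the braids themselves lie inside $\overline{PF}^{\varphi}_{\infty}$.
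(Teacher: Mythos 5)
Your first inclusion $Z(\overline{PF}^{\varphi}_{\infty})\subseteq PF^{T,0}$ is correct and coincides with the paper's argument: braids are unitaries in $PF_{\infty}$, so a central element is fixed by every $Ad(b_j)$, hence lies in $(\overline{PF}^{\varphi}_{\infty})^{\mathbb{B}_{\infty}}=PF^{T}$ by Proposition~\ref{Prop:eq_fix_tail}, and $Ad(c_1)$-invariance then kills the nonzero charge sectors. The genuine gap is in the substantive direction $PF^{T,0}\subseteq Z(\overline{PF}^{\varphi}_{\infty})$: your ``neutral part'' $x_{j,0}=\tfrac1d\sum_{k\in\mathbb{Z}_d}c_1^{-k}x_jc_1^{k}$ is \emph{not} of charge $0$ when $x_j$ involves $c_1$, which a generic Kaplansky approximant $x_j\in PF_{2j}$ does. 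Averaging over $Ad(c_1)$ projects onto the relative commutant of $c_1$, and this agrees with the charge-zero projection only on elements not containing $c_1$; since conjugation by $c_1$ preserves charge, $x_{j,0}$ retains components in every charge sector in which part of $x_j$ commutes with $c_1$. Concretely, for $d=2$ and $x_j=c_1$ one gets $x_{j,0}=c_1$, of charge $1$, and $\alpha^{n}(x_{j,0})=c_{2n+1}$ fails to commute with $y=c_1$ because $c_{2n+1}c_1=q^{-1}c_1c_{2n+1}$. Thus the identity $\alpha^{n}(x_{j,0})\,y=y\,\alpha^{n}(x_{j,0})$, on which your limiting argument rests, is false for such approximants.

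The repair is exactly the maneuver the paper uses in the proof of Proposition~\ref{prop:rep_ET} and in the commutativity proposition: shift \emph{first}, then average. Set $y_j=\alpha(x_j)$, which involves only $c_3,c_4,\dots$; on such elements $Ad(c_1)$ does act as the grading, so $y_{j,0}=\tfrac1d\sum_{k}c_1^{-k}y_jc_1^{k}$ genuinely has charge $0$, is local, satisfies $\|y_{j,0}\|\leq\|x\|$, and converges strongly to $\tfrac1d\sum_{k}c_1^{-k}x\,c_1^{k}=x$, using $\alpha(x)=x$, the strong continuity of $\alpha$ on bounded sequences established in the proof of Proposition~\ref{lem:rep_shift}, and the fact that charge $0$ in $PF^{T}$ means precisely $c_1xc_1^{-1}=x$. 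With $y_{j,0}$ in place of $x_{j,0}$, your argument (commute the shifted charge-zero local element past $y\in PF_{2m}$ with trivial phase, then take the strong limit in $j$ at fixed shift exponent $n\geq m$) goes through, and it recovers the paper's terse assertion that neutral tail elements commute with all of $PF_{\infty}$. Note that this is the opposite of the heuristic in your closing paragraph: the point is not to extract the neutral part before shifting, but to shift away from $c_1$ once before $Ad(c_1)$-averaging can legitimately be called a charge projection.
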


\begin{proof}
Any neutral $x\in PF^{T}$ commutes with all 
elements in $PF_{\infty}$, and therefore commutes with all elements in  $\overline{PF}^{\varphi}_{\infty}$. Hence $PF^{T,0}\subset Z(\overline{PF}^{\varphi}_{\infty})$.  Furthermore, we infer from  Proposition \ref{Prop:FourStringBraidPF} that any braid $b_{j}\in PF_{\infty}$, and also $b_{j}\in \overline{PF}^{\varphi}_{\infty}$. So if $x\in Z(\overline{PF}^{\varphi}_{\infty})$, one has  $b_{j}x=xb_{j}$, and $x$ is invariant under the adjoint action of every $b_{j}$, and  $x\in (\overline{PF}^{\varphi}_{\infty})^{\mathbb{B}_{\infty}}$.  Thus we infer from Proposition \ref{Prop:eq_fix_tail} that 
$x\in   (\overline{PF}^{\varphi}_{\infty})^{\alpha}
= PF^{T}$, so $PF^{T,0}\subset Z(\overline{PF}^{\varphi}_{\infty})\subset PF^{T}$.

If $x\in Z(\overline{PF}^{\varphi}_{\infty})$, then $c_{1}E_T(x)=E_T(x)c_{1}$.
Since $x\in Z(\overline{PF}^{\varphi}_{\infty})\subset PF^{T}$,  $c_{1}E_T(x) c_{1}^{-1} =q^{\deg(E_T(x))} E_T(x)$. So $\deg(E_T(x)) =0$.  Hence $Z(\overline{PF}^{\varphi}_{\infty})\subset PF^{T,0}$, and $PF^{T,0}= Z(\overline{PF}^{\varphi}_{\infty})$.

\end{proof}

Let $I$ be a subset $I\subset \mathbb{N}$, and let $PF_{I}$ denote the parafermion algebra generated by the $c_{i}$ with $i\in I$.
For two subsets $I, J \subset \mathbb{N}$, let $I<J$ means that 
for $i<j$ for all $i\in I, j\in J$.  Clearly if $x\in PF_{I}$ and $y\in PF_{J}$ with $I<J$, then 
	\[
		xy = q^{\deg(x)\deg(y)}\,yx\;.
	\]

\begin{prop}\label{prop:T-indep}
Let $\varphi\in S_{\mathbb{B}_{\infty}}$,  and let  $x\in PF_{I}$, $y\in PF_{J}$, where $I<J$ or $J<I$ are finite subsets of $\mathbb{N}$.  Then $E_T(xy)=E_T(x)E_T(y)$.  Likewise, if $x_{i}\in PF_{I_{i}}$ for increasing  intervals $I_{i}<I_{j}$ for $i<j$, then
	\be\label{DisjointIntervals}	
	E_{T}(x_{1}\cdots x_{k}) = \prod_{i=1}^{k}  E_{T}(x_{i})\;.
	\ee
\end{prop}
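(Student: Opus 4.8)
The plan is to reduce the factorization property to the asymptotic commutation estimate \eqref{TailLimit-1} already used in the proof of Proposition \ref{prop:rep_ET}, combined with the fact that $E_T$ is a conditional expectation whose range $PF^T$ is contained in every $\alpha^n(\overline{PF}^{\varphi}_{\infty})$. The key observation is that once we apply a sufficiently high power of the shift to one of the two factors, that factor is supported on strings far to the right of everything else, so it commutes (up to a scalar phase determined by the charges) with the other factor and eventually lands in a region disjoint from the support of both $E_T(x)$ and $E_T(y)$.

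First I would treat the two-factor case with $x\in PF_I$, $y\in PF_J$, $I<J$; the case $J<I$ is symmetric. Reducing to monomials by linearity, I note that $E_T(y)\in PF^T\subset\alpha^n(\overline{PF}^{\varphi}_\infty)$ for every $n$, so $E_T(y)$ can be approximated strongly by elements supported arbitrarily far to the right, while $x$ stays supported on the fixed finite set $I$. Since $E_T$ is a bimodule map over $PF^T$, I would write $E_T(xy)$ and use that $E_T(x)$ is itself an element of the tail. The main calculation is to show that $x$ and $E_T(y)$ commute up to the phase $q^{\deg(x)\deg(E_T(y))}$, and that because $E_T(y)$ is neutral-or-nearly so we can move $E_T(x)$ and $E_T(y)$ past each other inside $E_T$. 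Concretely, using the bimodule property $E_T(x\,E_T(y))=E_T(x)E_T(y)$ together with the asymptotic replacement $xy\approx x\,\alpha^n(y)$ inside $E_T$ (valid since $E_T(\alpha(y))=E_T(y)$ by \eqref{AlphaInside}), the estimate \eqref{TailLimit-1} shows that conjugating by large braids sends the product to a product of separated operators, and normality of $E_T$ lets me pass to the limit.

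The iterated formula \eqref{DisjointIntervals} then follows by induction on $k$. Assuming the result for $k-1$ factors, I would group $x_1\cdots x_{k-1}\in PF_{I_1\cup\cdots\cup I_{k-1}}$ as a single operator supported on an interval lying entirely to the left of $I_k$, apply the two-factor case to split off $E_T(x_k)$, and then invoke the induction hypothesis on the remaining product. Because the intervals $I_i$ are increasing and disjoint, the charge bookkeeping is consistent at each stage, and the phase factors that appear are absorbed correctly since $E_T(x_i)$ has the same degree as $x_i$.

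I expect the main obstacle to be the careful handling of the non-neutral case: when $\deg(x)\neq0$ the product $xy$ carries a nontrivial phase and $E_T(y)$ need not be central, so one cannot simply commute operators freely. The resolution is that the tail algebra is commutative (proved in the preceding proposition) and that $E_T(x)$ inherits the degree of $x$, so all the phases $q^{\deg(x_i)\deg(x_j)}$ that arise match on both sides of \eqref{DisjointIntervals}; verifying that these phases cancel consistently, rather than the mere existence of the limits, is the delicate bookkeeping step.
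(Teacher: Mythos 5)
Your skeleton---reduce to monomials, prove the two-factor case with $I<J$ by pushing $y$ far to the right, finish with the $PF^T$-bimodule property $E_T(x\,E_T(y))=E_T(x)E_T(y)$ and normality of $E_T$, then iterate to get \eqref{DisjointIntervals}---is the same as the paper's. But the one step that carries all the weight is justified by a non sequitur. You assert that the replacement of $xy$ by $x\,\alpha^n(y)$ inside $E_T$ is ``valid since $E_T(\alpha(y))=E_T(y)$ by \eqref{AlphaInside}.'' Identity \eqref{AlphaInside} concerns $E_T$ applied to $y$ alone; it gives no information about $E_T(x\,\alpha^n(y))$ versus $E_T(xy)$, because $E_T$ is not multiplicative (that is exactly what you are trying to prove). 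Moreover, no argument that uses only $\alpha$-shift-invariance can deliver this step: stationarity does not imply tail-independence (the paper itself notes that spreadability is what does, citing K\"ostler), and a merely $\alpha$-invariant state with nearest-neighbour correlations of Markov type has trivial tail, so that $E_T(xy)=\varphi(xy)\neq\varphi(x)\varphi(y)=E_T(x)E_T(y)$ for adjacent $x,y$. The missing ingredient is precisely where $\varphi\in S_{\mathbb{B}_{\infty}}$ rather than $\varphi\in S_{\alpha}$ must enter: since $I<J$, one can choose $\sigma_n\in\mathbb{B}_{\infty}$ all of whose strands lie to the right of $I$, so that $Ad(\sigma_n)(xy)=x\,\alpha^n(y)$, and one must then prove the invariance $E_T\circ Ad(\sigma_n)=E_T$, which follows from the braid invariance of $\varphi$ together with the fact that $\mathbb{B}_{\infty}$ fixes $PF^{T}$ pointwise (Proposition \ref{Prop:eq_fix_tail}). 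You gesture at ``conjugating by large braids,'' but you never state or establish this invariance of $E_T$ under braiding; the estimate \eqref{TailLimit-1} that you cite in its place does something different (it commutes the ergodic average $\mathfrak{S}_{k}(x)$ past a fixed monomial up to a phase) and cannot substitute for it.

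A secondary point: your ``delicate phase bookkeeping'' is a red herring, and suggests you intend to commute $x$ past $y$ or past $E_T(y)$, which the correct argument never does. Moving $y$ rightward by the automorphism $Ad(\sigma_n)$ produces no factors $q^{\deg(x)\deg(y)}$; after $E_T(xy)=E_T(x\,E_T(y))$ the bimodule property of Proposition \ref{prop:rep_ET} gives $E_T(x)E_T(y)$ at once; and the case $J<I$ is handled by writing $xy=(y^{*}x^{*})^{*}$ and using $E_T(z^{*})=E_T(z)^{*}$ from \eqref{AlphaInside}. Neither the commutativity of $PF^{T}$ nor any phase cancellation is needed---indeed, invoking properties of the tail established in nearby results to ``cancel phases'' risks circularity, since when a nontrivial phase $q^{\deg(x)\deg(y)}\neq 1$ is present the correct conclusion is that the relevant expectations vanish, not that the phases match. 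Once the braid invariance of $E_T$ is supplied, your induction on $k$ is fine.
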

Such independence is called order tail-independence in Ref. \cite{Kostler10}, where a stronger
notion of independence, called full tail-independence,  has also been proposed and investigated. 
It was shown in Theorem 8.1 in \cite{Kostler10} that spreadability implies tail-independence.
Here we give a quick proof for braided parafermions.

\begin{proof}
If $I<J$, there exists $\sigma_n\in \mathbb{B}_{\infty}$
such that $\sigma_n(xy)=x\alpha^n(y)$.  
Thus using from Proposition \ref{Prop:eq_fix_tail} and the invariance of $PF^{T}$ under braids,  
\[
E_T(xy)=E_T(\sigma_{n}(xy))=E_T(x\alpha^n(y))
=E_T(x\mathfrak{S}_{k}(y))=E_T(xE_T(y))\;.
\]
Then using \eqref{AlphaInside}, the $\alpha$-shift-invariance of $PF^{T}$ given in  Proposition~\ref{Prop:eq_fix_tail}, and the fact that $PF^{T}$ is an algebra, we infer for $n,k\in \mathbb{N}$ that 
\beqs
E_T(xy)&=&E_{T}(\alpha^{n}(xE_{T}(y)))
=E_T(\alpha^{n}(x) E_T(y)) = E_T(\mathfrak{S}_{k}(x) E_T(y))\nn
&=&E_T(E_T(x)E_T(y))
=E_T(x)E_T(y).
\eeqs

If $x$ and $y$ have a given degree, then $\deg(E_{T}(x))=\deg(x)$.  So if $J<I$ and $xy=(y^{*}x^{*})^{*}$, we obtain  from the previous case,
\[
E_{T}(xy)=E_{T}((y^{*}x^{*})^{*})=E_{T}(y^{*}x^{*})^{*}= (E_{T}(y)^{*}E_{T}(x)^{*})^{*}=E_{T}(x)E_{T}(y)\;.
\]
The general case for two elements follows by linearity for $x$ and $y$ a sum of components with definite degree.  The case for $k$ ordered elements follows by iteration of the two-element case.
\end{proof}

\begin{prop}\label{prop:Braid-TrivialNeutral}
Let $\varphi $ be an extremal state in $S_{\mathbb{B}_{\infty}}$, then $PF^{T,0}=\mathbb{C}$.
\end{prop}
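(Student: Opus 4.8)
The plan is to argue by contraposition: I will show that if the neutral tail algebra $PF^{T,0}$ is strictly larger than $\mathbb{C}$, then $\varphi$ admits a nontrivial convex decomposition inside $S_{\mathbb{B}_{\infty}}$, so it cannot be extremal. The starting point is Proposition \ref{Prop:eqv_cen_tail}, which identifies $PF^{T,0}$ with the center $Z(\overline{PF}^{\varphi}_{\infty})$. If this algebra is not $\mathbb{C}$, then, being a nontrivial von Neumann algebra, it is generated by its projections and hence contains a projection $p$ with $p\neq 0$ and $p\neq I$. Note that such a $p$ is both central (by Proposition \ref{Prop:eqv_cen_tail}) and an element of the braid fixed-point algebra, since $PF^{T,0}\subset PF^{T}=(\overline{PF}^{\varphi}_{\infty})^{\mathbb{B}_{\infty}}$ by Proposition \ref{Prop:eq_fix_tail}; both facts will be used below.

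First I would check that $0<\varphi(p)<1$. Since $p$ is central and $\Omega$ is cyclic for $\overline{PF}^{\varphi}_{\infty}$, if $\varphi(p)=\abs{p\Omega}^{2}=0$ then $px\Omega=xp\Omega=0$ for every $x\in\overline{PF}^{\varphi}_{\infty}$, and cyclicity forces $p=0$; applying the same argument to the central projection $I-p$ rules out $\varphi(p)=1$. Thus $\lambda:=\varphi(p)\in(0,1)$. Next I would define the two candidate states
\[
\varphi_{p}(x)=\frac{\varphi(xp)}{\varphi(p)}\;,\qquad
\varphi_{1-p}(x)=\frac{\varphi(x(1-p))}{\varphi(1-p)}\;.
\]
Because $p$ is a central projection one has $(xp)^{*}(xp)=x^{*}xp$, so $\varphi_{p}(x^{*}x)=\varphi((xp)^{*}(xp))/\varphi(p)\geq 0$, and $\varphi_{p}(I)=1$; hence each is a state, and by construction $\varphi=\lambda\,\varphi_{p}+(1-\lambda)\,\varphi_{1-p}$.

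The key step is braid-invariance of $\varphi_{p}$ and $\varphi_{1-p}$. Here I would invoke that $p$ lies in the fixed-point algebra, so $Ad(b)(p)=p$ for every $b\in\mathbb{B}_{\infty}$. Then $Ad(b)(x)\,p=Ad(b)(x)\,Ad(b)(p)=Ad(b)(xp)$, and the braid-invariance of $\varphi$ gives
\[
\varphi_{p}(Ad(b)(x))=\frac{\varphi(Ad(b)(xp))}{\varphi(p)}=\frac{\varphi(xp)}{\varphi(p)}=\varphi_{p}(x)\;,
\]
with the identical computation for $\varphi_{1-p}$. Hence both states lie in $S_{\mathbb{B}_{\infty}}$. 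Finally, $\varphi_{p}(p)=\varphi(p)/\varphi(p)=1$ while $\varphi_{1-p}(p)=\varphi(p(1-p))/\varphi(1-p)=0$, so $\varphi_{p}\neq\varphi_{1-p}$; therefore $\varphi=\lambda\,\varphi_{p}+(1-\lambda)\,\varphi_{1-p}$ is a genuine decomposition of $\varphi$ into distinct elements of $S_{\mathbb{B}_{\infty}}$ with $\lambda\in(0,1)$, contradicting extremality. This forces $PF^{T,0}=\mathbb{C}$.

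I expect the only delicate points to be the positivity and braid-invariance bookkeeping for $\varphi_{p}$, both of which reduce to the single structural fact that $p$ simultaneously belongs to the center $Z(\overline{PF}^{\varphi}_{\infty})$ and to the braid fixed-point algebra $(\overline{PF}^{\varphi}_{\infty})^{\mathbb{B}_{\infty}}$; these are exactly the contents of Propositions \ref{Prop:eqv_cen_tail} and \ref{Prop:eq_fix_tail}, so the argument is essentially the standard ergodic-decomposition dichotomy adapted to this braided setting.
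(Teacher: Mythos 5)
Your proof is correct and takes essentially the same route as the paper's: both extract a nontrivial projection $p$ from $PF^{T,0}=Z(\overline{PF}^{\varphi}_{\infty})$ (Proposition \ref{Prop:eqv_cen_tail}), show $\varphi(p)\in(0,1)$, and split $\varphi$ into the two braid-invariant states $\varphi(\,\cdot\,p)/\varphi(p)$ and $\varphi(\,\cdot\,(I-p))/\varphi(I-p)$, with braid-invariance of $p$ supplied by Proposition \ref{Prop:eq_fix_tail}, contradicting extremality. The only cosmetic difference is that you rule out $\varphi(p)\in\{0,1\}$ via cyclicity of $\Omega$ together with centrality, whereas the paper uses a Cauchy--Schwarz estimate; the decomposition itself is identical.
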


\begin{proof}
We show that if $PF^{T,0}$ is not trivial, then $\varphi$ is not extremal. 
If $\dim(PF^{T,0}) \neq1$,  there exists a non-trivial projection 
$P\in PF^{T,0}$.  
%Then $\beta=\varphi(P)\in(0,1)$. 
By Proposition \ref{Prop:eqv_cen_tail}, $PF^{T,0}=Z(\overline{PF}^{\varphi}_{\infty})$. If $\varphi(P)=0$, then
$0\leq \varphi(x^*Px)=\varphi(Px^*x) \leq \sqrt{\varphi(P) \varphi((x^*x)^2)}=0$. So $\varphi(x^*Px)=0$, for any $x \in \overline{PF}^{\varphi}_{\infty}$. Therefore $P=0$ in $\overline{PF}^{\varphi}_{\infty}$, a contradiction. Similarly, if $\varphi(P)=1$, then $P=I$, a contradiction. Therefore $\beta=\varphi(P)\in(0,1)$. 

Proposition \ref{Prop:eq_fix_tail} shows $P$ is invariant under action of 
$\mathbb{B}_{\infty}$. Let $\varphi_1(\cdot)=\frac{1}{\beta}\varphi(P(\cdot))$, and let 
$\varphi_2(\cdot)=\frac{1}{1-\beta}\varphi((1-P)( \cdot))$. Then 
$\varphi_1, \varphi_2\in S_{\mathbb{B}_{\infty}}$ and 
$\varphi=\beta\varphi_1+(1-\beta)\varphi_2$,  which contracts with the fact that 
$\varphi$ is extremal. 
\end{proof}

\section{The inverse de Finetti theorem}
Suppose $\mathcal{A}$ is a finite dimensional matrix algebra and $\rho$ is a state on $\mathcal{A}$. 
Let $\mathcal{A}^{\otimes m}=\ot^{m}_{k=1} \mathcal{A} $ be the $m^{\rm th}$ tensor power of $\mathcal{A}$ and $\mathcal{A}_{\infty}=\ot^{\infty}_{k=1}\mathcal{A}$ be the infinite tensor power of $\mathcal{A}$.
Let $S_{P, \mathcal{A}_{\infty}}$  be the states on the infinite tensor product of $\mathcal{A}$ which are invariant under the permutation group.
The de Finetti theorem said that if $\varphi$ is an extremal point in $S_{P, \mathcal{A}_{\infty}}$, then $\varphi$ is the infinite product state $\prod \rho$, for some state $\rho$ on $\mathcal{A}$. 
It was shown by St{\o}rmer, that any symmetric product state is extremal in $S_{P, \mathcal{A}_{\infty}}$, see Theorem 2.7 in \cite{Stormer1969} for a general result on $C^*$-algebras.

%\color{red}

The tensor product of states on $\mathcal{A}$ is called a product state.
The symmetric product state is not an extremal point for the symmetric states on finite tensor products. 
We have the following extremal condition for finite symmetric product states, which can be considered as an inverse de Finetti theorem on finite tensors: 

\begin{thm}\label{Thm: finite inverse DF}
Suppose $\mathcal{A}$ is a finite dimensional matrix algebra.
Let $\St$ be the space of states on $\mathcal{A}$.
For $m\geq 2$, take $\mathcal{A}^{\otimes m}=\ot^{m}_{k=1} \mathcal{A} $.
For a state $\phi \in \St$, if $\mu$ is a probability measure of $\St$, such that,
$$\phi^m= \int_{\rho \in \St } \rho^m d\mu(\rho)$$
on $\mathcal{A}^{\otimes m}$, then $\mu$ is the Dirac measure at $\phi$.
\end{thm}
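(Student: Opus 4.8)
The plan is to reduce to two tensor factors and then run a zero--variance argument. First I would note that it suffices to treat $m=2$: composing the hypothesis $\phi^{\ot m}=\int_{\St}\rho^{\ot m}\,d\mu(\rho)$ with the unital embedding $\mathcal{A}\ot\mathcal{A}\hookrightarrow\mathcal{A}^{\ot m}$, $x\mapsto x\ot I^{\ot(m-2)}$, and using $\phi^{\ot m}(x\ot I^{\ot(m-2)})=\phi^{\ot 2}(x)$ together with the analogous identity for each $\rho$, yields
\[
\phi^{\ot 2}=\int_{\St}\rho^{\ot 2}\,d\mu(\rho)\quad\text{on }\mathcal{A}\ot\mathcal{A},
\]
with the \emph{same} measure $\mu$. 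This is where the hypothesis $m\geq 2$ enters; for $m=1$ the statement is false.

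Next, for each self-adjoint $a\in\mathcal{A}$ I would evaluate this identity on the two observables $a\ot I$ and $a\ot a$. Since $\phi^{\ot2}(a\ot I)=\phi(a)$ and $\rho^{\ot2}(a\ot I)=\rho(a)$, the first gives $\phi(a)=\int_{\St}\rho(a)\,d\mu(\rho)$; since $\phi^{\ot2}(a\ot a)=\phi(a)^2$ and $\rho^{\ot2}(a\ot a)=\rho(a)^2$, the second gives $\phi(a)^2=\int_{\St}\rho(a)^2\,d\mu(\rho)$. Combining these two moments,
\[
\int_{\St}\big(\rho(a)-\phi(a)\big)^2\,d\mu(\rho)
=\phi(a)^2-2\phi(a)^2+\phi(a)^2=0 .
\]
As the integrand is nonnegative, $\rho(a)=\phi(a)$ for $\mu$-almost every $\rho$.

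It remains to promote this ``for each fixed $a$, almost everywhere'' statement to ``almost everywhere, for all $a$.'' Here I would use that $\mathcal{A}$ is finite dimensional: fix a finite basis $a_1,\dots,a_N$ of the self-adjoint part of $\mathcal{A}$. Each set $\{\rho\in\St:\rho(a_k)=\phi(a_k)\}$ is weak-$*$ closed, hence Borel, and has full $\mu$-measure, so their finite intersection also has full $\mu$-measure. On this intersection any state $\rho$ agrees with $\phi$ on the spanning set $a_1,\dots,a_N$, hence on all self-adjoint elements by linearity, and hence on all of $\mathcal{A}$ since every element is a complex linear combination of self-adjoint ones. Thus $\rho=\phi$ on a set of full measure, which forces $\mu(\{\phi\})=1$, i.e. $\mu=\delta_\phi$.

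The reduction and the moment computation are routine; I expect the only genuinely delicate point to be the final step, namely upgrading the family of individual almost-sure equalities (one per observable) into a single almost-sure event on which $\rho=\phi$. Finite dimensionality of $\mathcal{A}$ is exactly what makes this work, as it replaces a potentially uncountable intersection of full-measure sets by a finite one.
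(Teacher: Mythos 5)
Your proof is correct, and it reaches the conclusion by a genuinely different route than the paper. Both arguments begin the same way, restricting the hypothesis to $\mathcal{A}\otimes\mathcal{A}$ and exploiting a zero-variance phenomenon, but the implementations diverge. The paper works at the operator level with density matrices: it takes the range projection $P$ of $D_\phi$ to reduce to the case where $D_\phi$ is invertible, normalizes $C_\rho = D_\phi^{-1/2} D_\rho D_\phi^{-1/2}$, and applies a single trace-covariance identity
\begin{equation*}
\int_{\rho\in\St}\bigl(\operatorname{tr}(C_\rho^2)-\operatorname{tr}(C_\rho)^2\bigr)\,d\mu(\rho)=0,
\qquad
\operatorname{tr}(C_\rho^2)-\operatorname{tr}(C_\rho)^2=\operatorname{tr}\bigl((C_\rho-\operatorname{tr}(C_\rho))^2\bigr)\geq 0,
\end{equation*}
to conclude $C_\rho$ is a scalar, hence $D_\rho = \operatorname{tr}(C_\rho)\,D_\phi = D_\phi$, $\mu$-almost everywhere. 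You instead run the variance argument scalar-by-scalar, evaluating on $a\otimes I$ and $a\otimes a$ for each self-adjoint $a$, and then use finite dimensionality to intersect the finitely many full-measure sets coming from a basis of the self-adjoint part. What each approach buys: yours is more elementary (no range projections, no ``without loss of generality $D_\phi$ is invertible'' step, no operator square roots), it nowhere uses that $\mathcal{A}$ is a \emph{full} matrix algebra, so it applies verbatim to any finite-dimensional $C^*$-algebra, and it would extend to the separable infinite-dimensional setting with a countable intersection; the paper's version concentrates the almost-everywhere quantifier into a single operator identity, avoids any basis bookkeeping, and yields the structural intermediate fact $D_\rho \propto D_\phi$ directly. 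Your handling of the measurability point (weak-$*$ continuity of $\rho\mapsto\rho(a_k)$, hence Borel full-measure sets) is exactly the care needed to make the ``almost everywhere for each $a$'' into ``almost everywhere for all $a$,'' a point the paper's single-integral argument sidesteps rather than addresses.
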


\begin{proof}
Take the restriction on $\mathcal{A}^{\otimes 2}$, we have that 
$$\phi^2= \int_{\rho \in \St } \rho^2 d\mu(\rho).$$
Let $D_{\rho}$ be the density matrix of $\rho$. Then
$$D_\phi \otimes D_{\phi}= \int_{\rho \in \St } D_\rho \otimes D_{\rho} d\mu(\rho).$$
Let $P$ be the range projection of $D_{\phi}$, and $Q=I-P$, then 
$$\int_{\rho \in \St } QD_\rho Q \otimes QD_{\rho}Q d\mu(\rho)=0.$$
Therefore, $QD_\rho Q=0$ for almost all $D$. So $QD=DQ=0$, and $D=PDP$.
Without loss of generality, we assume that $P=I$, then $D_\phi$ is invertible.
Take $C_{\rho}=D_\phi^{-1/2} D_{\rho} D_\phi^{-1/2}$.
Then 
$$I \otimes I= \int_{\rho \in \St } C_\rho \otimes C_{\rho} d\mu(\rho).$$
Let $tr$ be the tracial state on $\mathcal{A}$. Then taking the trace of the tensor/product in the above formula,
\begin{align*}
 \int_{\rho \in \St } tr(C_{\rho})^2 d\mu(\rho)&=tr(I)^2=1 \;,\\
 \int_{\rho \in \St } tr(C_{\rho}^2) d\mu(\rho)&=tr(I^2)=1 \;.
\end{align*}
So
$$ \int_{\rho \in \St }( tr(C_{\rho}^2)-tr(C_{\rho})^2) d\mu(\rho)=0 $$
Note that the covariance of $C_{\rho}$ is 
$$Cov(C_{\rho})=tr(C_{\rho}^2)-tr(C_{\rho})^2=tr((C_{\rho}-tr(C_{\rho}))^2 ) \geq 0.$$
So $Cov(C_{\rho})=0$, and $C_{\rho}=tr(C_{\rho})$ for almost all $\rho$.
Then $D_\rho=tr(C_{\rho}) D_{\phi}$.
Both $D_\rho$ and $D_\phi$ have trace one, so $D_\rho=D_\phi$.
Therefore $\mu$ is a Dirac measure at $\rho$.
\end{proof}

The above proof also applies to infinite tensors and we recover the result of 
St{\o}rmer for matrix algebras:

\begin{thm}\label{Thm:Inverse-dF}
Suppose $\rho$ is a state on a finite dimensional matrix algebra $\mathcal{A}$.
Then the product state $\prod \rho$ is extremal in $S_{P,\mathcal{A}_{\infty}}$.
\end{thm}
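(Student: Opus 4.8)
The plan is to reduce the extremality of $\prod\rho = \rho^{\otimes\infty}$ to a statement about representing measures, and then to apply the finite inverse de Finetti theorem (Theorem \ref{Thm: finite inverse DF}) to a two-factor marginal. Concretely, suppose toward a contradiction-free argument that $\prod\rho = \beta\varphi_1 + (1-\beta)\varphi_2$ with $\beta \in (0,1)$ and $\varphi_1,\varphi_2 \in S_{P,\mathcal{A}_\infty}$; the goal is to show $\varphi_1 = \varphi_2 = \prod\rho$, which says precisely that $\prod\rho$ admits no nontrivial convex splitting and is therefore extremal. Since the covariance computation in the proof of Theorem \ref{Thm: finite inverse DF} transfers verbatim once an integral representation is available, the work is to produce such a representation for the components $\varphi_i$.

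First I would invoke the \emph{existence} half of the de Finetti decomposition for infinite permutation-invariant states on a finite dimensional matrix algebra (St{\o}rmer's theorem, which in the finite dimensional case can also be recovered from the finite quantum de Finetti theorem of K\"onig--Renner by a compactness argument). This produces Borel probability measures $\mu_1,\mu_2$ on the weak-$*$-compact, metrizable state space $\St$ of $\mathcal{A}$ with $\varphi_i = \int_{\St}\sigma^{\otimes\infty}\,d\mu_i(\sigma)$. Setting $\mu = \beta\mu_1 + (1-\beta)\mu_2$ and using linearity gives $\prod\rho = \int_{\St}\sigma^{\otimes\infty}\,d\mu(\sigma)$.

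Next I would restrict to $\mathcal{A}^{\otimes 2}$. Because restriction of states is weak-$*$ continuous and the marginal of $\sigma^{\otimes\infty}$ on $\mathcal{A}^{\otimes 2}$ is $\sigma^{\otimes 2}$, integration commutes with restriction and yields $\rho^{\otimes 2} = \int_{\St}\sigma^{\otimes 2}\,d\mu(\sigma)$. This is exactly the hypothesis of Theorem \ref{Thm: finite inverse DF} with $\phi = \rho$ and $m = 2$, so that theorem forces $\mu = \delta_\rho$. Since $\beta \in (0,1)$ and $\mu_1,\mu_2$ are probability measures, the equality $\beta\mu_1 + (1-\beta)\mu_2 = \delta_\rho$ gives $\mu_1(\St\setminus\{\rho\}) = \mu_2(\St\setminus\{\rho\}) = 0$, hence $\mu_1 = \mu_2 = \delta_\rho$ and therefore $\varphi_1 = \varphi_2 = \rho^{\otimes\infty} = \prod\rho$, establishing the claim and recovering St{\o}rmer's result.

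I expect the main obstacle to be \emph{justifying the integral representation} of the components $\varphi_1,\varphi_2$, that is, citing or re-deriving the infinite symmetric de Finetti decomposition rather than assuming it, together with the (harmless but necessary) verification that integration over $\St$ commutes with restriction to finitely many tensor factors. Everything downstream is routine: the covariance argument behind Theorem \ref{Thm: finite inverse DF} applied to the two-factor marginal collapses $\mu$ to a Dirac mass, and the elementary fact that a strictly convex combination of probability measures equal to a Dirac measure must itself be that Dirac measure then forces the splitting to be trivial.
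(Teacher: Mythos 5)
Your proposal is correct and follows essentially the same route as the paper: obtain an integral representation of permutation-invariant states over product states (the paper gets it from the de Finetti theorem plus Choquet--Bishop--de Leeuw, which is the content of St{\o}rmer's theorem you cite), restrict to the two-factor marginal, and apply Theorem \ref{Thm: finite inverse DF} to force the representing measure to be the Dirac mass at $\rho$, whence the convex splitting is trivial. Your explicit handling of the components $\varphi_1,\varphi_2$ via $\mu=\beta\mu_1+(1-\beta)\mu_2$ merely spells out the step the paper compresses into ``any convex combination is also of the above form.''
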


\begin{proof}
By the de Finetti theorem, any extremal point in $S_{P,\mathcal{A}_{\infty}}$ is a product state $\rho^{\infty}$, for some $\rho \in \St$.
Note that $\rho_j \to \rho $ weakly on $\mathcal{A}$ iff  $\rho_j^{\infty} \to \rho^{\infty} $ weakly on $\mathcal{A}_{\infty}$. Thus the space of infinite symmetric product states has the same weak topology as $\St$.
By the Choquet-Bishop-de Leeuw theorem, any state in $S_{P,\mathcal{A}_{\infty}}$ is 
$$\int_{\rho \in \St } \rho^\infty d\mu(\rho),$$
for some probability measure $\mu$ on $\St$.
Therefore, any convex combination is also of the above form.
For a state $\phi \in \St$, if
$$\phi^{\infty}=\int_{\rho \in \St } \rho^\infty d\mu(\rho),$$
then 
$$\phi^{2}=\int_{\rho \in \St } \rho^2 d\mu(\rho).$$
By Theorem \ref{Thm: finite inverse DF}, $\int_{\rho=\phi} d\mu(\rho)=1$.
Therefore, $\phi^{\infty}$ is extremal in $S_{P,\mathcal{A}_{\infty}}$..
\end{proof}

\color{black}

\section{The de Finetti theorem: square-free $d$, the neutral case}
\label{sec:de_f_neu}
As explained in the introduction, there are two possible outcomes according to whether  the degree $d$ of the parafermion algebra is square free.  
 In this section we investigate the square-free case. We first prove that the center of the representation of the parafermion algebra equals the tail algebra, and that the tail algebra is neutral.  
 
 Several  equivalent characterization  of the extremal state in $S_{\mathbb{B}_{\infty}}$
follow from these results, and we precisely characterize the corresponding  braid-invariant states. We call Theorem \ref{thm:main1} the de Finetti theorem for the para{\-}fermion algebra with square-free degree $d$.

\begin{thm}\label{thm:neutral}
Let  $d=\prod p_i$ be square free, and let $\varphi\in  S_{\mathbb{B}_{\infty}}$  be braid invariant. Then 
$\varphi$ is neutral:  $\varphi(x)=0$, whenever  $deg(x)\neq 0$.  
\end{thm}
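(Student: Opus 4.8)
The plan is to combine the $\mathbb{Z}_d$-grading of the tail algebra established in the preceding proposition with an elementary arithmetic consequence of square-freeness, and then to push the conclusion from the tail algebra back to the state itself via the identity $\varphi = \varphi \circ E_T$.

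First I would record the key number-theoretic observation. Since $d = \prod_i p_i$ is a product of \emph{distinct} primes, the divisibility $d \mid \ell^2$ is equivalent to $d \mid \ell$. Indeed, if $d \mid \ell^2$ then $p_i \mid \ell^2$ for each $i$, hence $p_i \mid \ell$ because $p_i$ is prime; as the $p_i$ are pairwise distinct, their product $d$ divides $\ell$. Reading this in $\mathbb{Z}_d$, the condition $d \mid \ell^2$ forces $\ell = 0$. Next I would invoke the earlier structural result that $PF^{T,\ell} = 0$ whenever $d \nmid \ell^2$. Combined with the arithmetic observation, every nonzero charge sector vanishes, so $PF^{T,\ell} = 0$ for all $\ell \neq 0$ and the tail algebra is entirely neutral: $PF^T = PF^{T,0}$.

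Finally I would transfer this to the state. The double shift $\alpha$ defined by $\alpha(c_j) = c_{j+2}$ preserves the $\mathbb{Z}_d$-degree of every monomial, so each averaging map $\mathfrak{S}_k$, and hence the conditional expectation $E_T = \stlim_{k} \mathfrak{S}_k$, preserves charge. Thus for any $x$ with $\deg(x) = \ell \neq 0$ one has $E_T(x) \in PF^{T,\ell} = 0$. By Corollary~\ref{cor:BraidExpTail}, the braid-invariant state satisfies $\varphi = \varphi \circ E_T$, whence $\varphi(x) = \varphi(E_T(x)) = \varphi(0) = 0$, which is precisely the asserted neutrality.

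I do not expect a genuine obstacle here: once the grading proposition for $PF^T$ is available, the deduction is immediate. The only subtlety worth flagging is that square-freeness is exactly what upgrades $d \mid \ell^2$ to $d \mid \ell$; this step fails for non-square-free $d$ (for instance $d=4$, $\ell=2$), which is what ultimately forces the separate, non-neutral analysis of the general case in \S\ref{sec:de_f_nonneu}.
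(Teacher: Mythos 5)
Your proof is correct, and it takes a genuinely different route from the paper's. The paper argues directly at the level of pair monomials $C^{mn}_j=c^m_{2j-1}c^n_{2j}$: using $\varphi=\varphi\circ E_T$ (Corollary~\ref{cor:BraidExpTail}), the shift-invariance of $E_T$, and tail-independence (Proposition~\ref{prop:T-indep}), it evaluates $\varphi(C^{mn}_j(C^{mn}_k)^*)$ in two ways --- once directly and once after commuting the factors via the CPRs --- obtaining two nonnegative quantities that differ by the phase $q^{-(m+n)^2}$; hence $E_T(C^{mn}_k)=0$ unless $(m+n)^2\equiv 0 \pmod d$, which square-freeness upgrades to $m+n\equiv 0$, and factorization over disjoint pairs finishes. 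You instead invoke the grading proposition ($PF^{T,\ell}=0$ whenever $d\nmid \ell^2$), which already packages the same positivity-versus-phase trick at the level of the tail algebra, and combine it with the arithmetic fact that square-freeness turns $d\mid\ell^2$ into $d\mid\ell$, charge-preservation of $E_T$, and $\varphi=\varphi\circ E_T$. Your route is shorter, needs fewer ingredients (no tail-independence), and delivers $PF^T=PF^{T,0}$ as an explicit intermediate conclusion, which is exactly what Corollary~\ref{cor:eqv_cen_tail} later requires; the paper's route is self-contained at the level of pair generators and proves something slightly stronger than neutrality, namely that $\varphi\bigl(\prod_k C^{m_kn_k}_k\bigr)=0$ unless \emph{each} pair charge $m_k+n_k$ vanishes mod $d$, not merely the total charge. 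One small point you should make explicit: the claim that $E_T$ preserves charge requires saying what charge means on the von Neumann closure; the clean way (used implicitly by the paper) is that for homogeneous $x$ of degree $\ell$, each $\mathfrak{S}_k(x)$ lies in $\alpha(PF_\infty)$ and satisfies $c_1\,\mathfrak{S}_k(x)\,c_1^{-1}=q^{\ell}\,\mathfrak{S}_k(x)$, and this eigenvalue relation survives strong limits since multiplication by the fixed unitary $c_1$ is strongly continuous, so $E_T(x)\in PF^{T,\ell}$.
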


\begin{proof}
Denote
$C^{mn}_i=c^{m}_{2i-1}c^{n}_{2i}$ .
We first  prove that for $j<k$,  
\be\label{ElementaryVanishing}
\varphi(E_T(C^{mn}_j)E_T(C^{mn}_k)^*)=0\;,
\quad\text{unless } m+n= 0~(\text{mod }d)\;.
\ee
Using Corollary \ref{cor:BraidExpTail}, with  Proposition~\ref{prop:T-indep}, and the second identity in  \eqref{AlphaInside}, we have 
\beq\label{Expectation-1}
\varphi(C^{mn}_j(C^{mn}_k)^*)
&=&\varphi(E_T(C^{mn}_j(C^{mn}_k)^*))
=\varphi(E_T(C^{mn}_j)E_T(C^{mn}_k)^*)\nn
&=&\varphi(E_T(C^{mn}_k)E_T(C^{mn}_k)^*)\geqslant0.
\eeq
In the last line we use $\varphi\geqslant0$, as well as $E_T(C^{mn}_{j})=E_T(C^{mn}_{k})$. This property follows from the first identity in \eqref{AlphaInside}, for   
\[
E_T(C^{mn}_{j})
=\alpha^{k-j}(E_T(C^{mn}_{j}))
=E_T(\alpha^{k-j}(C^{mn}_{j}))
=E_T(C^{mn}_{k})\;.
\]

On the other hand, the parafermion relations \eqref{ParafermionRelns}, and reasoning  similar to the proof of \eqref{Expectation-1},  show that 
\beq\label{Expectation-2}
\varphi(C^{mn}_j(C^{mn}_k)^*)
&=&q^{-(m+n)^2}\varphi((C^{mn}_k)^*C^{mn}_j)\nn
&=&q^{-(m+n)^2}\varphi(E_T((C^{mn}_k)^*C^{mn}_j))\nn
&=&q^{-(m+n)^2}\varphi(E_T((C^{mn}_k)^*C^{mn}_j))\nn
&=&q^{-(m+n)^2}\varphi(E_T(C^{mn}_k)^*E_T(C^{mn}_j))\nn
&=&q^{-(m+n)^2}\varphi(E_T(C^{mn}_k)^*E_T(C^{mn}_k))\;,
\eeq
where $\varphi(E_T(C^{mn}_k)^*E_T(C^{mn}_k))\geqslant 0$.  
Comparing \eqref{Expectation-1} with \eqref{Expectation-2},  we infer that either $q^{(m+n)^2}>0$, or else $\varphi(C^{mn}_j(C^{mn}_k)^*)=0$.  

In our case  $0\leqslant m, n \leqslant d-1$.  As we assume that $d$ is square free, $(m+n)^{2}=0 \text { (mod }d)$ is equivalent to $(m+n)=0 \text{ (mod }d)$. Thus \eqref{ElementaryVanishing} holds. 

If $m+n\neq0$ mod $d$, we have shown in addition that 
\be\label{CConservation}
\varphi(E_T(C^{mn}_k)E_T(C^{mn}_k)^*)=0 \;.
\ee
So $E_T(C^{mn}_k)=0$, and 
\be
\varphi(C_{j}^{mn})=\varphi(C_{k}^{mn})=\varphi(E_{T}(C_{k}^{mn}))=0.
\ee

Any element $x\in PF_{\infty}$ can be expressed as the linear combination of the products 
 $\prod_{k}C^{m_kn_k}_k$, where we take the product in the order of increasing $k$ from left to right.  By \eqref{DisjointIntervals} we have
 \be
 \varphi(\prod_{k}C^{m_kn_k}_k)
=  E_{T}(\prod_{k} C^{m_kn_k}_k)
 = \prod_{k} E_{T}(C^{m_kn_k}_{k})
 =0\;,
 \ee
 unless each $(m_{k}+n_{k})=0 \text{ (mod } d)$, for all $k$.
\end{proof}

\begin{cor}\label{cor:eqv_cen_tail}
Let $d=\prod p_i$ be square free, and let  $\varphi\in  S_{\mathbb{B}_{\infty}}$. Then 
\begin{eqnarray}
PF^{T,0}=Z(\overline{PF}^{\varphi}_{\infty})=PF^T
=(\overline{PF}^{\varphi}_{\infty})^{\alpha}=(\overline{PF}^{\varphi}_{\infty})^{\mathbb{B}_{\infty}}.
\end{eqnarray}
\end{cor}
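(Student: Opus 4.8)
The plan is to combine Theorem~\ref{thm:neutral} with the chain of equalities already established in Proposition~\ref{Prop:eqv_cen_tail}. Recall that Proposition~\ref{Prop:eqv_cen_tail} gives us, for any $\varphi\in S_{\mathbb{B}_{\infty}}$, the identities $PF^{T,0}=Z(\overline{PF}^{\varphi}_{\infty})\subset PF^{T}$, and that Proposition~\ref{Prop:eq_fix_tail} gives $PF^{T}=(\overline{PF}^{\varphi}_{\infty})^{\alpha}=(\overline{PF}^{\varphi}_{\infty})^{\mathbb{B}_{\infty}}$. Thus the only thing left to prove in order to collapse the whole chain into a single equality is that in the square-free case the full tail algebra $PF^{T}$ coincides with its neutral part $PF^{T,0}$; equivalently, that the tail algebra contains no nonzero elements of nonzero charge.

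First I would reduce to showing $PF^{T,\ell}=0$ for every $\ell\neq 0$ in $\mathbb{Z}_{d}$. Since $PF^{T}=\bigoplus_{\ell\in\mathbb{Z}_d}PF^{T,\ell}$ by the grading proposition, and $PF^{T,\ell}=0$ already holds whenever $d\nmid\ell^2$, the issue is confined to those charges $\ell$ with $d\mid\ell^2$. But here is precisely where square-freeness enters: writing $d=\prod_i p_i$ with the $p_i$ distinct primes, the congruence $\ell^2\equiv 0\pmod d$ forces each $p_i\mid\ell^2$, hence each $p_i\mid\ell$, hence $d\mid\ell$, so $\ell=0$ in $\mathbb{Z}_d$. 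Therefore $PF^{T,\ell}=0$ for all $\ell\neq 0$, and $PF^{T}=PF^{T,0}$. This is the same arithmetic observation used inside the proof of Theorem~\ref{thm:neutral} (the step where $(m+n)^2\equiv 0$ is upgraded to $(m+n)\equiv 0$), now applied directly to the charge of a tail element rather than to a matrix element of the state.

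With $PF^{T}=PF^{T,0}$ in hand, the corollary follows by a one-line assembly: $PF^{T,0}=Z(\overline{PF}^{\varphi}_{\infty})\subset PF^{T}=PF^{T,0}$ forces all three to be equal, and then Proposition~\ref{Prop:eq_fix_tail} appends $(\overline{PF}^{\varphi}_{\infty})^{\alpha}$ and $(\overline{PF}^{\varphi}_{\infty})^{\mathbb{B}_{\infty}}$ to the common value. I would present this as a direct citation of the two prior propositions together with the grading result, inserting the elementary number-theoretic argument to kill the nonzero-charge components.

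I do not expect a genuine obstacle here, since all the heavy analytic lifting (existence and normality of $E_T$, the commutativity and grading of $PF^{T}$, and the coincidence of fixed-point algebras) is done in the earlier propositions. The one subtlety to handle carefully is to make sure the vanishing of $PF^{T,\ell}$ is argued at the level of the \emph{tail algebra elements} and not merely at the level of expectation values: one wants a nonzero tail element $x$ of charge $\ell$ to satisfy $xx^*=q^{\ell^2}x^*x$, whence positivity of $xx^*$ and $x^*x$ forces $q^{\ell^2}=1$, i.e.\ $d\mid\ell^2$, and then square-freeness gives $d\mid\ell$. This is exactly the mechanism recorded in the grading proposition, so in the write-up I would simply invoke it rather than reproving it, and the corollary reduces to bookkeeping.
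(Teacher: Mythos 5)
Your proof is correct, and it takes a modestly but genuinely different route from the paper's. The paper's proof reduces to showing $PF^{T,0}=PF^{T}$ exactly as you do (citing Propositions~\ref{Prop:eq_fix_tail} and \ref{Prop:eqv_cen_tail} for the rest of the chain), but for that last step it invokes Theorem~\ref{thm:neutral}: the analysis there of conditional expectations of monomials shows $E_T(C^{mn}_k)=0$ unless $m+n\equiv 0 \pmod d$, i.e.\ the generators of the tail algebra are all neutral. You instead bypass Theorem~\ref{thm:neutral} entirely and quote the grading proposition --- $PF^{T}=\bigoplus_{\ell}PF^{T,\ell}$ with $PF^{T,\ell}=0$ whenever $d\nmid\ell^{2}$ --- then apply the square-free arithmetic ($d\mid\ell^{2}\Rightarrow d\mid\ell$, hence $\ell=0$ in $\mathbb{Z}_d$) directly to the charge of a tail element. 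Both arguments ultimately rest on the same mechanism (positivity of $xx^{*}$ and $x^{*}x$ forcing $q^{\ell^{2}}=1$, then square-freeness upgrading $d\mid\ell^{2}$ to $d\mid\ell$), but your dependency structure is cleaner: the corollary does not logically need the neutrality of the state, only the structure of the tail algebra, and in fact your version could be turned around to rederive Theorem~\ref{thm:neutral} from $\varphi=\varphi\circ E_{T}$ (Corollary~\ref{cor:BraidExpTail}). The paper's version buys economy by reusing work already done in Theorem~\ref{thm:neutral}, at the cost of making the corollary appear to depend on the neutrality of the state when it does not.
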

\begin{proof}
With the results of Propositions \ref{Prop:eq_fix_tail}--\ref{Prop:eqv_cen_tail}, we only need to show that $PF^{T,0}=PF^{T}$.  For this use the fact that we have shown in Theorem~\ref{thm:neutral} that the generators of the tail algebra are all neutral.
\end{proof}

\begin{thm}[\bf Neutral Case]\label{thm:main1}
Let $\varphi\in S_{\mathbb{B}_{\infty}}$ be a braid-invariant state on the parafermion algebra $PF_{\infty}$ of square-free order $d$. Then  (1)--(5) are equivalent conditions:

\begin{enumerate}
\item[(1)] The state $\varphi$ is extremal in $S_{\mathbb{B}_{\infty}}$.
\item[(2)]  The state $\varphi=\prod_j \rho$ is a product, where $\rho$ is a neutral state on $PF_2$.
\item[(3)]  The neutral part of the tail algebra $PF^{T,0}=\complex $.
\item[(4)] The tail algebra $PF^T=\complex $.
\item[(5)] The von Neumann algebra $\overline{PF}^{\varphi}_{\infty}$ is a factor.
\end{enumerate}

\end{thm}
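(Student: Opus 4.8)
The plan is to prove the cyclic chain of implications $(1)\Rightarrow(3)\Rightarrow(4)\Rightarrow(5)\Rightarrow(1)$, together with the separate equivalence involving the product-state characterization $(2)$. Most of the work has already been set up in the preceding sections: Corollary~\ref{cor:eqv_cen_tail} collapses all the relevant algebras, so several of these implications become almost immediate, and the real content is concentrated in the factor condition and the product-state description.

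\textbf{The easy implications.} First I would observe that $(1)\Rightarrow(3)$ is exactly Proposition~\ref{prop:Braid-TrivialNeutral}: an extremal braid-invariant state has trivial neutral tail algebra $PF^{T,0}=\complex$. Next, $(3)\Leftrightarrow(4)\Leftrightarrow(5)$ all follow at once from Corollary~\ref{cor:eqv_cen_tail}, which asserts the chain of equalities
\[
PF^{T,0}=Z(\overline{PF}^{\varphi}_{\infty})=PF^{T}=(\overline{PF}^{\varphi}_{\infty})^{\alpha}=(\overline{PF}^{\varphi}_{\infty})^{\mathbb{B}_{\infty}}
\]
in the square-free case. Indeed $PF^{T,0}=\complex$ is the same as $PF^{T}=\complex$ since they coincide, and $Z(\overline{PF}^{\varphi}_{\infty})=\complex$ is precisely the statement that $\overline{PF}^{\varphi}_{\infty}$ is a factor. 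So conditions $(3)$, $(4)$, $(5)$ are literally restatements of one another once the square-free hypothesis is invoked.

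\textbf{Closing the loop and the product-state condition.} To close the cycle I would show $(5)\Rightarrow(1)$ (equivalently $(3)\Rightarrow(1)$): if the tail algebra is trivial, I would run the contrapositive of Proposition~\ref{prop:Braid-TrivialNeutral}, namely a nontrivial convex decomposition $\varphi=\beta\varphi_1+(1-\beta)\varphi_2$ of a non-extremal state produces a nontrivial projection in $PF^{T,0}=Z(\overline{PF}^{\varphi}_{\infty})$ via the Radon--Nikodym derivative, contradicting triviality; this is the standard extremal-equals-factor argument. For the product-state equivalence $(3)\Leftrightarrow(2)$, the key tool is order tail-independence (Proposition~\ref{prop:T-indep}) together with Corollary~\ref{cor:BraidExpTail}, which gives $\varphi=\varphi\circ E_T$. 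When $PF^{T}=\complex$, the conditional expectation $E_T$ is just the scalar $\varphi(\cdot)I$, so for generators localized on disjoint pairs
\[
\varphi(x_1\cdots x_k)=\varphi\bigl(E_T(x_1\cdots x_k)\bigr)=\prod_{i}\varphi\bigl(E_T(x_i)\bigr)=\prod_{i}\varphi(x_i)\;,
\]
which is exactly the statement that $\varphi$ factorizes as an infinite product $\prod_j\rho$ of the single-pair state $\rho=\varphi|_{PF_2}$; neutrality of $\rho$ comes from Theorem~\ref{thm:neutral}. Conversely, a neutral product state has trivial neutral tail algebra, which I would verify by the same factorization computation or by appealing to the inverse de Finetti result (Theorem~\ref{Thm:Inverse-dF}) identifying product states as extremal points.

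\textbf{The main obstacle.} I expect the delicate point to be the direction $(2)\Rightarrow(1)$, i.e.\ showing a neutral product state is genuinely extremal in $S_{\mathbb{B}_{\infty}}$ and not merely in the permutation-symmetric setting. The braid group is strictly larger and more subtle than the symmetric group, so one cannot directly quote St\o rmer; the bridge is Proposition~\ref{lem:rep_shift}, which realizes the shift $\alpha$ as a strong limit of braidings and thereby forces every braid-invariant decomposition to respect the tail structure. The technical care lies in checking that the product state's tail algebra is trivial \emph{as a von Neumann algebra for the braid action}, using that extremality is governed by $Z(\overline{PF}^{\varphi}_{\infty})$ and that for a product state this center reduces to scalars—this is where the square-free hypothesis and Theorem~\ref{thm:neutral} do the essential work of ruling out the non-neutral contributions that would otherwise enlarge the tail.
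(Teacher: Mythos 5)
Your handling of $(1)\Rightarrow(3)$ (Proposition~\ref{prop:Braid-TrivialNeutral}), of $(3)\Leftrightarrow(4)\Leftrightarrow(5)$ (Corollary~\ref{cor:eqv_cen_tail}), and of $(4)\Rightarrow(2)$ (Corollary~\ref{cor:BraidExpTail} plus Proposition~\ref{prop:T-indep}, with neutrality from Theorem~\ref{thm:neutral}) matches the paper. The gap is in how you close the cycle, namely your step $(5)\Rightarrow(1)$. First, a logical point: what you call ``the contrapositive of Proposition~\ref{prop:Braid-TrivialNeutral}'' is logically the \emph{same} implication as the proposition itself (extremal $\Rightarrow$ trivial $PF^{T,0}$); what you need is its \emph{converse}, which is proved nowhere in the preliminaries. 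Second, that converse is not the ``standard extremal-equals-factor argument'' you invoke. Given a decomposition $\varphi=\beta\varphi_1+(1-\beta)\varphi_2$ with $\varphi_i\in S_{\mathbb{B}_{\infty}}$, the domination $\varphi_1\leq\beta^{-1}\varphi$ and the Radon--Nikodym theorem for von Neumann algebras produce a positive operator $T$ in the \emph{commutant} $\pi(PF_{\infty})'$, commuting with the unitaries implementing the braid action---not an element of $\overline{PF}^{\varphi}_{\infty}$, let alone a projection in its center. For group-invariant states, extremality is equivalent to triviality of this invariant part of the commutant; converting that into triviality of the center requires a form of asymptotic abelianness, which fails in norm here precisely because spacelike-separated non-neutral parafermions only $q$-commute rather than commute. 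This is exactly the difficulty that the grading and the square-free hypothesis are there to control, and your sketch for $(2)\Rightarrow(1)$ (``extremality is governed by $Z(\overline{PF}^{\varphi}_{\infty})$'') presupposes the same unproven equivalence, so it does not repair the gap.

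The paper closes the loop by a genuinely different route, which is the piece your proposal is missing: it proves $(2)\Rightarrow(1)$ directly via the inverse de Finetti machinery. Suppose $\prod\rho=\lambda_1\rho_1+\lambda_2\rho_2$ with $\rho_i\in S_{\mathbb{B}_{\infty}}$. Using the already-established chain $(1)\Rightarrow(3)\Rightarrow(4)\Rightarrow(2)$ together with Krein--Milman/Choquet (as in the proof of Theorem~\ref{Thm:Inverse-dF}), each $\rho_i$ is an integral of product states; restricting to a finite block and applying the finite inverse de Finetti theorem (Theorem~\ref{Thm: finite inverse DF}, the covariance/Cauchy--Schwarz argument on density matrices) forces the representing measure to be the Dirac measure at $\rho$, hence $\rho_1=\rho_2=\prod\rho$ and $\varphi$ is extremal. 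With $(2)\Rightarrow(1)$ in hand, the cycle $(1)\Rightarrow(3)\Rightarrow(4)\Rightarrow(2)\Rightarrow(1)$ closes, and the ``factor $\Rightarrow$ extremal'' implication you tried to prove directly is obtained only \emph{a posteriori} as $(5)\Rightarrow(4)\Rightarrow(2)\Rightarrow(1)$; the paper itself emphasizes that this equivalence is surprising, i.e.\ not an instance of a standard argument.
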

\begin{proof}
$(1)\Rightarrow (3)$:
This is established in Proposition \ref{prop:Braid-TrivialNeutral}.

$(4)\Rightarrow (2)$:
Since any element $x\in PF_{\infty}$ can be expressed as the linear combination 
of the form $\displaystyle \prod_k C^{m_kn_k}_k$, we only need to prove that 
\begin{eqnarray}\label{eq:state_indep}
\varphi(\prod_k C^{m_kn_k}_k)
=\prod_k  \varphi (C^{m_kn_k}_k).
\end{eqnarray}
But this is a consequence of Corollary \ref{cor:BraidExpTail}, which ensures that $\varphi=\varphi\circ E_{T}$, and Proposition \ref{prop:T-indep}, which shows that $\varphi$ factors on non-overlapping elements of the tail algebra.

$(2)\Rightarrow(1)$: 
Suppose $\varphi=\prod \rho=\lambda_1 \rho_1 +\lambda_2 \rho_2$, 
where $\lambda_1+\lambda_2=1$, $\lambda_1,\lambda_2>0$, and $\rho_1,\rho_2 \in S_{\mathbb{B}_{\infty}}$.
Let 
$A=D^{\otimes m}$. Similar to the proof in Theorem \ref{Thm:Inverse-dF}, we have 
 $\rho_1=\rho_A\otimes \rho_A=\rho^{\otimes 2m}$ on $PF_{4m}$.
Let $m\to \infty$, we have  $\rho_1=\prod \rho$.
Therefore $\prod \rho$ is extremal.

Finally the equivalence of (3), (4), and (5) follows from 
Corollary \ref{cor:eqv_cen_tail}.
\end{proof}

Since every state  in $S_{\mathbb{B}_{\infty}}$ 
can be written as a limit of  the convex combination of the extremal states, 
such state is in the closure of the convex hull of the product states. This is the de Finetti theorem on parafermion algebra with square-free degree.
\begin{thm}
Let $d$ be square free.  Then 
each state $\varphi\in S_{\mathbb{B}_{\infty}}$ on the $\mathbb{Z}_{d}$ graded parafermion algebra is neutral,
and it can be expressed as the limit of  convex combinations of product states $\prod \rho$, where $\rho\in S(PF_2)$
is neutral.
\end{thm}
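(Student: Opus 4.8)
The plan is to separate the statement into its two assertions and dispatch each using results already in hand. The neutrality claim---that every $\varphi\in S_{\mathbb{B}_{\infty}}$ vanishes on elements of nonzero charge---is precisely Theorem~\ref{thm:neutral}, so for square-free $d$ this part requires no new argument and I would simply invoke it. The remaining content is the decomposition, and here the strategy is a standard Krein--Milman argument built on the exact identification of extreme points supplied by Theorem~\ref{thm:main1}.

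First I would record that $S_{\mathbb{B}_{\infty}}$ is a convex, weak-$*$ compact subset of the state space of $PF_{\infty}$. Convexity is immediate, since a convex combination of braid-invariant states is again braid-invariant. For compactness, the full state space is weak-$*$ compact by Banach--Alaoglu, and each invariance condition $\varphi=\varphi\circ Ad(b)$, $b\in\mathbb{B}_{\infty}$, cuts out a weak-$*$ closed subset; the intersection over all $b$ is therefore weak-$*$ closed, hence a compact subset of a compact set. With this established, I would apply the Krein--Milman theorem to conclude that $S_{\mathbb{B}_{\infty}}$ equals the weak-$*$ closure of the convex hull of its extreme points.

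The key input is then the equivalence (1)$\Leftrightarrow$(2) of Theorem~\ref{thm:main1}, which identifies the extreme points of $S_{\mathbb{B}_{\infty}}$ as exactly the product states $\prod\rho$ with $\rho$ a neutral state on $PF_2$. Combining this with the Krein--Milman representation, any $\varphi\in S_{\mathbb{B}_{\infty}}$ lies in the weak-$*$ closure of the set of finite convex combinations $\sum_i\lambda_i\prod\rho_i$ of neutral product states, which is precisely the asserted conclusion. (Alternatively, one could sharpen ``limit of convex combinations'' to an integral representation by the Choquet--Bishop--de Leeuw theorem, exactly as in the proof of Theorem~\ref{Thm:Inverse-dF}, but for the stated result Krein--Milman suffices.)

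The main obstacle is conceptual rather than computational: the entire decomposition rests on having the precise characterization of extreme points from Theorem~\ref{thm:main1}, so the substantive work has already been carried out there, and what remains is only to assemble the topological pieces. The one point deserving care is to confirm that the neutral product states are genuinely \emph{all} the extreme points of $S_{\mathbb{B}_{\infty}}$, not merely a distinguished subset; this is exactly what the full equivalence in Theorem~\ref{thm:main1} guarantees, and it is what licenses applying Krein--Milman to recover every braid-invariant state as a limit of convex combinations of such product states.
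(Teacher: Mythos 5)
Your proposal is correct and follows essentially the same route as the paper: the paper's (very terse) justification is precisely that every state in $S_{\mathbb{B}_{\infty}}$ is a limit of convex combinations of extremal states, which by Theorem~\ref{thm:main1} are exactly the neutral product states, with neutrality of $\varphi$ itself coming from Theorem~\ref{thm:neutral}. You have merely made explicit the standard functional-analytic underpinnings (weak-$*$ compactness via Banach--Alaoglu and closedness of the invariance conditions, then Krein--Milman) that the paper leaves implicit.
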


\begin{cor}
If $\varphi'=\frac{1}{2}(\prod\rho+\prod\tau)$ where 
$\rho, \tau\in S(PF_2)$ are distinct neutral states, then
the von Neumann algebra  $\overline{PF}^{\varphi'}_{\infty}$
is not a factor.
\end{cor}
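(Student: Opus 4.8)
The plan is to exhibit a nontrivial element of the center $Z(\overline{PF}^{\varphi'}_{\infty})$, which by Theorem \ref{thm:main1}(5) prevents $\overline{PF}^{\varphi'}_{\infty}$ from being a factor. Since $\varphi' = \tfrac{1}{2}(\prod\rho + \prod\tau)$ is a convex combination of two product states, the natural candidate for a central projection is the support projection associated to the GNS decomposition: the representation space $\mathcal{H}_{\varphi'}$ should split as the orthogonal direct sum $\mathcal{H}_{\prod\rho} \oplus \mathcal{H}_{\prod\tau}$ of the two GNS spaces, and the projection $P$ onto one summand ought to lie in the center. The core task is to verify that $P$ is a nontrivial element of $PF^{T,0}$, and then invoke Corollary \ref{cor:eqv_cen_tail}, which identifies $PF^{T,0}=Z(\overline{PF}^{\varphi'}_{\infty})$ in the square-free case.

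First I would show that the GNS representation of $\varphi'$ is the direct sum of the GNS representations of $\prod\rho$ and $\prod\tau$, up to the subtlety that the two product states may not be disjoint. The cleanest route is to construct $P$ directly inside the tail algebra: by Proposition \ref{prop:rep_ET} the conditional expectation $E_T$ maps $\overline{PF}^{\varphi'}_{\infty}$ onto $PF^{T}$, and one expects $E_T$ to act as scalar multiplication by $\varphi'$-expectations on each product component while distinguishing the two. Concretely, since $\rho \neq \tau$ are distinct neutral states on $PF_2$, there is a self-adjoint $a \in PF_2$ with $\rho(a)\neq\tau(a)$; applying $E_T$ to the shift-averages of the translated copies $\alpha^k(a)$ and using that $\prod\rho$, $\prod\tau$ are each extremal (hence $E_T = \varphi'(\cdot)$-scalar on each block) should produce a tail element that takes two distinct constant values on the two blocks. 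Spectral projections of this self-adjoint tail element then give the desired nontrivial projection $P \in PF^{T,0}$.

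The main obstacle is controlling the behavior of $E_T$ on the convex combination: I must confirm that the strong limit defining $E_T(x)$ genuinely separates the two product states rather than collapsing them, and this hinges on the fact that on each extremal component $\prod\rho$ or $\prod\tau$ the tail algebra is trivial (by the equivalence (1)$\Leftrightarrow$(4) of Theorem \ref{thm:main1}), so $E_T$ reduces to the state's own scalar value on that block. The delicate point is that $\varphi'$ is \emph{not} extremal, so $PF^{T}$ for $\varphi'$ is larger than $\complex$, and I need the ergodic-averaging argument of Proposition \ref{prop:rep_ET} to converge to an operator that is block-diagonal with distinct scalars $\rho(a)$ and $\tau(a)$ on the two GNS sectors. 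Once such a tail element is produced, a spectral projection $P$ satisfies $0 < \varphi'(P) < 1$, so $\dim PF^{T,0} > 1$.

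Finally, since $d$ is square free, Corollary \ref{cor:eqv_cen_tail} gives $PF^{T,0} = Z(\overline{PF}^{\varphi'}_{\infty})$, so the nontrivial projection $P$ lies in the center. By the equivalence (4)$\Leftrightarrow$(5) in Theorem \ref{thm:main1}, $\overline{PF}^{\varphi'}_{\infty}$ fails to be a factor, completing the argument.
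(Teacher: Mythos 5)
Your strategy is genuinely different from the paper's. The paper leaves this corollary as an immediate consequence of Theorem~\ref{thm:main1}: since $\rho\neq\tau$, the product states $\prod\rho$ and $\prod\tau$ are two \emph{distinct} elements of $S_{\mathbb{B}_{\infty}}$ (each is braid-invariant, indeed extremal, by (2)$\Rightarrow$(1)), so $\varphi'$ is a proper convex combination of distinct braid-invariant states and hence not extremal; the equivalence (1)$\Leftrightarrow$(5) then says $\overline{PF}^{\varphi'}_{\infty}$ is not a factor. You instead manufacture an explicit non-scalar central element. That is a legitimate alternative, and it buys something: it exhibits the central element concretely, and it only needs $PF^{T,0}=Z(\overline{PF}^{\varphi'}_{\infty})$ (Proposition~\ref{Prop:eqv_cen_tail}), which holds for every $d$, so your argument is not restricted to square-free order. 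The cost is that you re-prove, in this special case, precisely the implication (not extremal $\Rightarrow$ not a factor) that Theorem~\ref{thm:main1} already packages.

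However, as written your plan has one unclosed step, and you flag it yourself: that $E_T(a)$ is genuinely non-scalar (``block-diagonal with distinct scalars''). Your route to this fact, splitting $\mathcal{H}_{\varphi'}$ into two GNS sectors, is exactly the disjointness issue you acknowledge and never resolve, and it risks circularity: a central decomposition of $\overline{PF}^{\varphi'}_{\infty}$ is essentially what is being proved. The gap can be closed entirely inside the paper's toolkit, with no decomposition of $\mathcal{H}_{\varphi'}$ at all. Since $\rho$ and $\tau$ are neutral and distinct, they differ on some neutral self-adjoint $a\in PF_{2}$ (both vanish on all homogeneous elements of non-zero degree), and then $E_T(a)\in PF^{T,0}$. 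Corollary~\ref{cor:BraidExpTail}, Proposition~\ref{prop:T-indep} and \eqref{AlphaInside} give
\[
\varphi'(E_T(a))=\varphi'(a)=\tfrac{1}{2}\lrp{\rho(a)+\tau(a)}\;,\qquad
\varphi'\lrp{E_T(a)^{2}}=\varphi'\lrp{E_T(a\,\alpha(a))}=\varphi'(a\,\alpha(a))=\tfrac{1}{2}\lrp{\rho(a)^{2}+\tau(a)^{2}}\;,
\]
where the final equality holds because each product state factors over disjoint parafermion pairs. If $E_T(a)$ were a scalar, then $\varphi'(E_T(a))^{2}=\varphi'(E_T(a)^{2})$, which forces $(\rho(a)-\tau(a))^{2}=0$, a contradiction. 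Hence $E_T(a)$ is a non-scalar element of $PF^{T,0}=Z(\overline{PF}^{\varphi'}_{\infty})$, and the algebra is not a factor --- no spectral projections, no disjointness of GNS representations, and no block decomposition needed.
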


\section{The de Finetti theorem: the non-neutral case}\label{sec:de_f_nonneu}
In this section, we consider the case when the square of some prime $p$  divides $d$, so  $d=p^{2}d_{1}$. Let $p_{0}$ denote the smallest natural number such that $d | p_0^2$. Then $p_0 < d$ and $p_0 | d$.  All the preliminary results in \S\ref{sec:lems_tail_alg} hold in this case.  The difference here is that when $d$ contains a square,  the tail algebra may not be neutral.
First, we consider 
the special case 
when the neutral part of the tail algebra is trivial.

\begin{prop}\label{lem:tail_noneu}
Let $\varphi\in S_{\mathbb{B}_{\infty}}$ and let $PF^{T,0}=\complex $, then there exists some $m_0\in \mathbb{N}$, with $p_0 | m_0$, $m_0 | d$, and such that
\begin{eqnarray}\label{eq:tail_non}
PF^T=\bigoplus_{j=1}^{\frac{d}{m_0}} PF^{T,jm_0},
\quad \text{and where }\dim( PF^{T, jm_{0}})=1\;.
\end{eqnarray}
\end{prop}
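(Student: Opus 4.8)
The plan is to exploit the two structural facts about the tail algebra already established: that $PF^T$ is a commutative, $\mathbb{Z}_d$-graded von Neumann algebra $PF^T=\bigoplus_{\ell\in\mathbb{Z}_d}PF^{T,\ell}$ (a genuine vector-space direct sum, since the grading is the eigenspace decomposition of the finite-order automorphism $\mathrm{Ad}(c_1)$), and that $PF^{T,\ell}=0$ unless $d\mid\ell^2$. First I would translate the arithmetic condition: writing $d=\prod_i p_i^{a_i}$, one has $d\mid\ell^2$ if and only if $p_0\mid\ell$, where $p_0=\prod_i p_i^{\lceil a_i/2\rceil}$ is precisely the smallest integer with $d\mid p_0^2$. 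Hence the support $\Lambda:=\{\ell\in\mathbb{Z}_d:PF^{T,\ell}\neq 0\}$ is contained in the cyclic subgroup $\langle p_0\rangle=p_0\mathbb{Z}_d$.

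The crucial step is to show that every nonzero homogeneous component $PF^{T,\ell}$ is one-dimensional and spanned by a unitary. This is where the hypothesis $PF^{T,0}=\mathbb{C}$ and commutativity are genuinely used. Given a nonzero $x\in PF^{T,\ell}$, commutativity makes $x$ normal; since $\deg(x^*)=-\ell$ the element $x^*x$ has degree $0$ and is positive, so by hypothesis $x^*x=\lambda I$ with $\lambda=\lVert x\rVert^2>0$, whence $u:=\lambda^{-1/2}x$ is a unitary in $PF^{T,\ell}$. If $u,v$ are two such unitaries then $u^*v\in PF^{T,0}=\mathbb{C}$ is itself unitary, hence a scalar of modulus one, so $v\in\mathbb{C}u$. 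This gives $\dim PF^{T,\ell}=1$ whenever $PF^{T,\ell}\neq 0$.

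Next I would show that $\Lambda$ is a subgroup of $\mathbb{Z}_d$. The grading is multiplicative, since $c_1(xy)c_1^{-1}=q^{\ell+\ell'}xy$ for $x\in PF^{T,\ell}$, $y\in PF^{T,\ell'}$, and $PF^T$, being a von Neumann algebra, is closed under products and adjoints. Thus the product $u_\ell u_{\ell'}$ of the unitaries produced above is a nonzero element of $PF^{T,\ell+\ell'}$, giving $\ell+\ell'\in\Lambda$; likewise $u_\ell^*\in PF^{T,-\ell}$ gives $-\ell\in\Lambda$; and $0\in\Lambda$ since $PF^{T,0}=\mathbb{C}$. Therefore $\Lambda$ is a subgroup, so $\Lambda=\langle m_0\rangle$ where $m_0\mid d$ is its smallest positive element. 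The inclusion $\Lambda\subseteq\langle p_0\rangle$ forces $m_0\in\langle p_0\rangle$, i.e. $p_0\mid m_0$. Finally $\Lambda=\{jm_0:j=1,\dots,d/m_0\}$, reading $j=d/m_0$ as charge $0$, and each such component is one-dimensional by the crucial step, which yields the stated decomposition $PF^T=\bigoplus_{j=1}^{d/m_0}PF^{T,jm_0}$.

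The only real obstacle is the one-dimensionality argument: one must check that $x^*x$ and $u^*v$ actually land in the degree-zero part, which rests on multiplicativity of the grading together with $\deg(x^*)=-\deg(x)$, and then invoke $PF^{T,0}=\mathbb{C}$ at exactly the right point. Everything after that is routine bookkeeping with the cyclic-subgroup structure of $\mathbb{Z}_d$.
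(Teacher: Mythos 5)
Your proposal is correct and takes essentially the same route as the paper's proof: both show that any nonzero homogeneous component of $PF^T$ is spanned by a unitary (so that $PF^{T,0}=\complex$ forces each such component to be one-dimensional), and both observe that the set of charges carrying nonzero components is a subgroup of $\mathbb{Z}_d$, hence cyclic with generator $m_0$ satisfying $p_0 \mid m_0$ and $m_0 \mid d$. The only cosmetic differences are that you invoke the earlier proposition ($PF^{T,\ell}=0$ unless $d\mid \ell^2$) and the commutativity of $PF^T$ where the paper re-derives the charge constraint via the $q^{-m^2}$ positivity argument and proves $A^*A=AA^*$ is a scalar directly.
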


\begin{proof}
We first show that $PF^{T}$ has a component with a charge $m>0$, only when $p_{0}|m$.  Suppose that $p_{0}$ does not divide $m$, $x\in PF_{I}$, and $E_{T}(x)\in PF^{T,m}$. Then for some $k$ and some $J>I$, one has $\alpha^{k}(x)\in PF_{J}$.  Using Proposition \ref{prop:T-indep} and Proposition \ref{prop:rep_ET}, we infer
 $E_T(x^*\alpha^k(x))=E_T(x)^*E_T(x)\geqslant0$ and 
$E_T(\alpha^k(x)x^*)=E_T(x)E_T(x)^*\geqslant0$. Besides, from the canonical parafermion relations,
$x^*\alpha^k(x)=q^{-m^2}\alpha^{k} (x)x^*$. Thus  either 
 $q^{-m^2}>0$ or else $E_{T}(x)=0$. The only positive value on the unit circle in the complex plane is $1$.  But the condition $p_{0}\nmid m$ means $q^{-m^{2}}\neq1$. So  $E_{T}(x)=0$, and $\dim(PF^{T,m})=0$.

On the other hand suppose that  $p_0|m$.  We show that $\dim( PF^{T,m})\leqslant 1$.   In fact, for $A\in PF^{T,m}$, the quadratic expressions $A^*A$ and $AA^{*}$ are both neutral.  Since we have assumed that $PF^{T,0}=\complex$, then 
$A^*A=\lambda \mathbb{I} $,  and also $AA^{*}=\lambda'I$, with $\lambda,\lambda'\geqslant 0$, and equal to zero only if $A=0$.  But then $(A^{*}A)^{2}= \lambda^{2}\mathbb{I}=\lambda\lambda' \mathbb{I}$, so $\lambda=\lambda'$.  Therefore either $A=0$ or $U_{A}=A/\norm{A}$ is unitary.

Thus for any two non-zero elements $A,B\in PF^{T,m}$, the neutral unitary $U_{A}^{*}U^{\phantom{*}}_{B}=e^{i\theta}$ is a phase, and $U_{B} =e^{i\theta} U_{A}$. In particular $\dim( PF^{T,m})\leqslant 1$ as claimed. Thus
 \begin{eqnarray*}
PF^T=\bigoplus_{k=1}^{\frac{d}{p_0}} PF^{T,kp_{0}},
\quad\text{where }\dim(PF^{T,kp_{0}})  \leqslant 1\;.
\end{eqnarray*}
Since the dimension of $PF^{T,kp_{0}}$ is either 0 or 1, all $kp_0$ such that 
$dim (PF^{T,kp_{0}})=1$ form a subgroup of $\mathbb{Z}_d$. There exists a smallest such number
$k_0$ that divides all k with $dim (PF^{T,kp_{0}})=1$.
Hence $PF^T$ can be written as
\begin{eqnarray*}
PF^T=\bigoplus_{j=1}^{\frac{d}{m_0}} PF^{T,jm_0}\;,
\quad\text{with }\dim{PF^{T,jm_0}}=1\;,
\end{eqnarray*} 
where $m_0=k_0p_0$.
\end{proof}

Define $PF_{2}^{p_0\mathbb{Z}}=\{x \in PF_2 ~|~    \text{deg}(x) \in p_0\mathbb{Z}/ d\mathbb{Z}\}.$
Then for any $x,y \in PF_{2}^{p_0\mathbb{Z}}$, 
$$\alpha^j(x) \alpha^k(y)=q^{\text{deg}(x)\text{deg}(y)} \alpha^k(y) \alpha^j(x)= \alpha^k(y) \alpha^j(x), \forall ~ j\neq k.$$
Therefore for any state $\rho$ on $PF_2$ with density matrix $D$ in $PF_{2}^{p_0\mathbb{Z}}$,
the product state $\prod \rho$ is a well-defined state on ${PF}_{\infty}$.

\begin{thm}\label{thm:main2}
Let $\varphi\in S_{\mathbb{B}_{\infty}}$.  Then the following statement 
are equivalent:
\begin{enumerate}
\item[(1$'$)]  $\varphi$ is extremal in $S_{\mathbb{B}_{\infty}}$.

\item[(2$'$)]  $\varphi=\prod \rho$, where $\rho$ is a state on $PF_{2}$ with a density matrix in $PF_{2}^{p_0\mathbb{Z}}$.
\end{enumerate}
\end{thm}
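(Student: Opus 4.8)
Theorem \ref{thm:main2} asserts the equivalence of extremality with a product structure in the non-neutral case, where the tail algebra may carry non-zero charges. Let me sketch a proof strategy.

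The plan is to prove both implications, leaning heavily on the machinery already developed for the neutral case (Theorem \ref{thm:main1}) and the structural result of Proposition \ref{lem:tail_noneu}. The key observation is that extremality forces the neutral part of the tail algebra to be trivial, which then activates the charge-decomposition of Proposition \ref{lem:tail_noneu}.

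\medskip

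\textit{The direction $(1') \Rightarrow (2')$.} First I would invoke Proposition \ref{prop:Braid-TrivialNeutral}, which shows that extremality of $\varphi$ forces $PF^{T,0}=\complex$. This is exactly the hypothesis of Proposition \ref{lem:tail_noneu}, so the tail algebra decomposes as $PF^T=\bigoplus_{j=1}^{d/m_0} PF^{T,jm_0}$, with each summand one-dimensional and $p_0 \mid m_0 \mid d$. The next task is to convert this tail structure into a statement about $\varphi$ itself. The key computational step mirrors the proof of Theorem \ref{thm:neutral}: using Corollary \ref{cor:BraidExpTail} ($\varphi=\varphi\circ E_T$) together with the order-independence of Proposition \ref{prop:T-indep}, I would show that $\varphi(C_j^{mn}(C_k^{mn})^*)$ can be computed two ways — directly, and via the parafermion commutation relation $C_j^{mn}(C_k^{mn})^* = q^{-(m+n)^2}(C_k^{mn})^* C_j^{mn}$ — and conclude that $E_T(C_k^{mn})=0$ unless $(m+n)^2 \equiv 0 \pmod d$, i.e.\ unless $p_0 \mid (m+n)$. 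Since $E_T$ has the factorization property \eqref{DisjointIntervals} and $\varphi=\varphi\circ E_T$, the state $\varphi$ factorizes on products $\prod_k C_k^{m_k n_k}$, and $\varphi(C_k^{m_k n_k})=0$ whenever $p_0 \nmid (m_k+n_k)$. This pins down $\varphi$ as a product state $\prod\rho$ whose single-site density matrix is supported on degrees in $p_0\mathbb{Z}/d\mathbb{Z}$, which is precisely a density matrix in $PF_2^{p_0\mathbb{Z}}$.

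\medskip

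\textit{The direction $(2') \Rightarrow (1')$.} Here I would follow the template of the implication $(2)\Rightarrow(1)$ in Theorem \ref{thm:main1}, itself modeled on the inverse de Finetti argument of Theorem \ref{Thm:Inverse-dF}. Suppose $\varphi=\prod\rho = \lambda_1\varphi_1 + \lambda_2\varphi_2$ is a non-trivial convex decomposition with $\varphi_1,\varphi_2\in S_{\mathbb{B}_\infty}$ and $\lambda_1,\lambda_2>0$. Restricting to a finite block $PF_{4m}$ and applying the finite inverse de Finetti argument (Theorem \ref{Thm: finite inverse DF}) to the density matrices, one shows that $\varphi_1$ must itself be a product state agreeing with $\prod\rho$ on every finite block; letting $m\to\infty$ forces $\varphi_1=\prod\rho=\varphi$. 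The only thing requiring care — and the technical point distinguishing this from the neutral case — is that the relevant density matrices live in $PF_2^{p_0\mathbb{Z}}$, so I must verify that the commutation relations guarantee $\prod\rho$ is genuinely a well-defined state on $PF_\infty$; but this is exactly the content of the computation $\alpha^j(x)\alpha^k(y)=\alpha^k(y)\alpha^j(x)$ recorded just before the theorem statement, so the product is honestly an infinite tensor product on the commuting subalgebra.

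\medskip

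\textit{Main obstacle.} I expect the genuine difficulty to be in the forward direction, specifically in carefully handling the congruence $p_0 \mid (m+n)$ when $d$ is \emph{not} square free. In the neutral case (Theorem \ref{thm:neutral}), square-freeness let one pass from $(m+n)^2\equiv 0$ to $(m+n)\equiv 0 \pmod d$, killing all non-neutral components. Here that implication fails: $(m+n)^2\equiv 0 \pmod d$ only forces $p_0 \mid (m+n)$, so non-neutral charges in $p_0\mathbb{Z}$ survive. The subtlety is to show that the surviving support of $\rho$ is exactly the subgroup $p_0\mathbb{Z}/d\mathbb{Z}$ and that the factorization through $E_T$ is consistent with the one-dimensionality of each graded tail component established in Proposition \ref{lem:tail_noneu} — in particular ensuring the phases appearing in $E_T(C_k^{m_k n_k})$ assemble coherently into a bona fide density matrix rather than an obstruction. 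Reconciling the charge bookkeeping of the tail algebra with the product structure of $\rho$ is the crux of the argument.
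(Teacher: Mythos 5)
The backward direction $(2')\Rightarrow(1')$ is fine and matches the paper, which simply reuses the proof of $(2)\Rightarrow(1)$ from the square-free case. The gap is in your forward direction, at the step ``Since $E_T$ has the factorization property \eqref{DisjointIntervals} and $\varphi=\varphi\circ E_T$, the state $\varphi$ factorizes on products $\prod_k C_k^{m_kn_k}$.'' In the non-neutral case this inference is invalid: \eqref{DisjointIntervals} gives $\varphi(\prod_k C_k^{m_kn_k})=\varphi\bigl(\prod_k E_T(C_k^{m_kn_k})\bigr)$, but now the factors $E_T(C_k^{m_kn_k})$ are \emph{not} scalars --- they are multiples of the charged unitaries generating the one-dimensional components $PF^{T,jm_0}$ of Proposition \ref{lem:tail_noneu} --- so you cannot pull $\varphi$ inside the product unless you know that $\varphi$ restricted to the tail algebra $PF^T$ is multiplicative. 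In the square-free case this was automatic because $PF^T=\complex$; here it is precisely the crux. Note that your argument uses extremality only once (to get $PF^{T,0}=\complex$ via Proposition \ref{prop:Braid-TrivialNeutral}), so if it were correct it would show that condition (3$'$) of Theorem \ref{thm:main3} implies (2$'$) --- which is false: states of the form $\sum_j\lambda_j\chi_j\circ\prod\rho$ with several nonzero $\lambda_j$ satisfy $PF^{T,0}=\complex$ but are neither extremal nor product states, and the paper explicitly records that (1$'$)$\Leftrightarrow$(2$'$) implies, but is not implied by, (3$'$)$\Leftrightarrow$(4$'$)$\Leftrightarrow$(5$'$).

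The missing idea is the paper's character argument, which invokes extremality a \emph{second} time. The unitaries generating the graded components of $PF^T$ can be normalized so that they form a cyclic group $G$, and the restriction $\varphi_G$ of $\varphi$ to $G$ is a convex combination $\sum_j\lambda_j\chi_j$ of characters of $G$. Each character extends to a braid-invariant state $\chi_i\circ E_T$ on $\overline{PF}^{\varphi}_{\infty}$ (braid-invariance coming from Proposition \ref{Prop:eq_fix_tail}), so extremality of $\varphi$ forces $\varphi_G$ to equal a single character $\chi$. Multiplicativity of $\chi$ then yields $\varphi(E_T(x)E_T(y))=\varphi(E_T(x))\varphi(E_T(y))$, and only at this point does Proposition \ref{prop:T-indep} deliver $\varphi=\prod\rho$. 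The final constraint that the density matrix $D$ of $\rho$ lies in $PF_2^{p_0\mathbb{Z}}$ follows from a short positivity computation on its charged components $D_k$ (comparing $\varphi(D_k^*\alpha(D_k^*))$ with $q^{k^2}\varphi(\alpha(D_k^*)D_k^*)$), which is close in spirit to the congruence bookkeeping you sketched, but is applied to $\rho$ after the product structure is already established rather than used to create it.
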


\begin{proof}
(2$'$) $\Rightarrow$ (1$'$): 
The proof is the same as the proof of $(2)\Rightarrow(1)$
in Theorem~\ref{thm:main1}.

(1$'$)$\Rightarrow$ (2$'$): 
If $\varphi$ is extremal, then Proposition \ref{prop:Braid-TrivialNeutral} shows that  $PF^{T,0}=\complex $. 
Using Proposition \ref{lem:tail_noneu}, we infer \eqref{eq:tail_non}, and that 
$PF^{T,jm_{0}}$ is generated by a unitary $U_{j}$ with charge $j$.  We can choose $\{U_{j}\}$, so that the set $G=\{U_{j}\}$ is a cyclic group. The restriction $\varphi_{G}$ of the state $\varphi$ to $G$ is a convex linear combination of characters $\chi_{j}$ on $G$, namely $\varphi_G=\sum_j\lambda_j \chi_j$, where $\lambda_{j}\geqslant0$ and $\sum_{j}\lambda_{j}=1$.

Each character $\chi_i$ can be extended to a state on $PF^T$, the group algebra of $G$. It can also be extended to a state on the parafermion algebra,
\be
\chi_i(x):=\chi_i(E_T(x)), ~\forall x \in \overline{PF}^{\varphi}_{\infty}.
\ee
as $E_T$ is a conditional expectation.
Then $\chi_i(x)=0$, for any element $x\in \overline{PF}^{\varphi}_{\infty}$ orthogonal to $PF^T$. By Proposition \ref{Prop:eq_fix_tail}, $PF^{T}$ is $\mathbb{B}_\infty$ invariant, so $\chi_i$ is $\mathbb{B}_\infty$ invariant.
Since $\varphi$ is extremal,
 $\varphi_G$ is equal to some character $\chi=\chi_i$, otherwise the state $\varphi$
can be written as the convex combination of the $\mathbb{B}_{\infty}$-invariant states.
Since $\chi$ is a character, thus 
\be
\chi(E_T(x)E_T(y))=\chi(E_T(x))\chi(E_T(y)).
\ee
That is, $\varphi(E_T(x)E_T(y))=\varphi(E_T(x))\varphi(E_T(y)).$
By Proposition \ref{prop:T-indep},  $\varphi=\prod \rho$ for some state $\rho$ on $PF_{2}$.

Let $D=\sum_{k \in \mathbb{Z}_d} D_{k}$ be the density matrix of $\rho$, and $D_{k}\neq0$ in $PF_2^k$.
Then $\varphi(D_k^*\alpha(D_k^*))=\rho(D_k^*)^2>0.$
On the other hand 
$$\varphi(D_k^*\alpha(D_k^*))=q^{k^2} \varphi(\alpha(D_k^*)D_k^*)=q^{k^2}\rho(D_k^*)^2.$$
So $q^{k^2}=1$ and $p_0 | k$. Therefore $D\in PF_2^{p_0\mathbb{Z}}$.
\end{proof}

\begin{thm}\label{thm:main3}
Given a state $\varphi\in S_{\mathbb{B}_{\infty}}$, then the following statements  are equivalent:
\begin{enumerate}
\item[(3$'$)]  $PF^{T,0}=\complex $.
\item[(4$'$)] $\overline{PF}^{\varphi}_{\infty}$ is a factor.
\item[(5$'$)] $\varphi=\sum_{j}\lambda_{j}\chi_i\circ \prod \rho$, namely for any homogenous $x \in \overline{PF}^{\varphi}$,
\be 
\varphi(x)=\sum_{j}\lambda_{j}\chi_i(\text{deg}(x)) (\prod\rho)(x),
\ee
and the density matrix of $\rho$ is in $PF_2^{p_0\mathbb{Z}}$.
\end{enumerate}
\end{thm}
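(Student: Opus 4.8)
The plan is to run the cycle of implications on the tail-algebra machinery already in place. The equivalence (3$'$)$\Leftrightarrow$(4$'$) is immediate: Proposition \ref{Prop:eqv_cen_tail} identifies the neutral tail $PF^{T,0}$ with the center $Z(\overline{PF}^{\varphi}_{\infty})$, and a von Neumann algebra is a factor precisely when its center reduces to $\complex$. Hence $PF^{T,0}=\complex$ exactly when $\overline{PF}^{\varphi}_{\infty}$ is a factor, with no further work needed.

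For (3$'$)$\Rightarrow$(5$'$) I would follow the skeleton of the proof of Theorem \ref{thm:main2}, but without collapsing the tail state to a single character. Assuming $PF^{T,0}=\complex$, Proposition \ref{lem:tail_noneu} supplies the decomposition $PF^T=\bigoplus_j PF^{T,jm_0}$ into one-dimensional charged components generated by unitaries forming a cyclic group $G$. Restricting $\varphi$ to the abelian algebra $PF^T\cong\complex[G]$ and decomposing the resulting state on this finite abelian group yields $\varphi|_{PF^T}=\sum_j\lambda_j\chi_j$ with $\lambda_j\geqslant 0$ and $\sum_j\lambda_j=1$. Using $\varphi=\varphi\circ E_T$ (Corollary \ref{cor:BraidExpTail}) together with tail-independence (Proposition \ref{prop:T-indep}), every value $\varphi(\prod_k C^{m_kn_k}_k)$ reduces to $\varphi(\prod_k E_T(C^{m_kn_k}_k))$; each $E_T(C^{mn}_k)$ is a fixed scalar multiple---independent of the character---of a generator of $G$, and these scalars define a state $\rho$ on $PF_2$. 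Assembling the character factors $\chi_j(\deg x)$ then produces the asserted form $\varphi(x)=\sum_j\lambda_j\chi_j(\deg x)(\prod\rho)(x)$. The constraint $D_\rho\in PF_2^{p_0\mathbb{Z}}$ follows from the positivity-and-phase argument at the end of Theorem \ref{thm:main2}, where $q^{k^2}=1$ forces $p_0\mid k$ on the support of $D_\rho$.

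The reverse implication (5$'$)$\Rightarrow$(3$'$) is where I expect the real work. The naive reading of (5$'$) as a convex combination of product states $\chi_j\circ\prod\rho=\prod\sigma_j$ is misleading, since distinct characters generically give inequivalent product states; the point is that the character mixing is confined to the \emph{charged} tail and never produces neutral central projections. Concretely, every character satisfies $\chi_j(0)=1$, so on the neutral subalgebra the formula collapses to $\varphi(x)=(\prod\rho)(x)$: the neutral part of $\varphi$ is the \emph{single} product state $\prod\rho$. Since $D_\rho\in PF_2^{p_0\mathbb{Z}}$, that product state is extremal by Theorem \ref{thm:main2}, hence has trivial neutral tail by Proposition \ref{prop:Braid-TrivialNeutral}.

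The remaining step---the one I anticipate as the main obstacle---is to argue that the neutral tail $PF^{T,0}$ of the full state $\varphi$ is already determined by the restriction $\varphi|_{PF^0_{\infty}}$, so that agreement with the extremal state $\prod\rho$ on neutral elements forces $PF^{T,0}=\complex$. Here I would use that elements of $PF^{T,0}$ are neutral (Proposition \ref{Prop:eqv_cen_tail}) and arise as strong limits of neutral operators far out in the chain, so that together with the tail-independence of Proposition \ref{prop:T-indep} the inner products $\varphi(z_1^{*}z_2)$ governing the neutral tail depend on $\varphi$ only through its neutral restriction. Carrying out this identification rigorously---cleanly separating the charged tail, which carries the character data $\sum_j\lambda_j\chi_j$, from the neutral center, which sees only $\prod\rho$---is the crux of the argument.
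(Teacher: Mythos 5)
Your handling of (3$'$)$\Leftrightarrow$(4$'$) and of (3$'$)$\Rightarrow$(5$'$) matches the paper: the first is exactly Proposition \ref{Prop:eqv_cen_tail} plus the definition of a factor, and the second follows the proof of (1$'$)$\Rightarrow$(2$'$) in Theorem \ref{thm:main2}, decomposing $\varphi|_{PF^{T}}$ into characters of the cyclic group $G$ supplied by Proposition \ref{lem:tail_noneu}, and using $\varphi=\varphi\circ E_T$ (Corollary \ref{cor:BraidExpTail}) with tail-independence (Proposition \ref{prop:T-indep}) to identify $\chi_0\circ E_T$ with a product state $\prod\rho$, $D_\rho\in PF_2^{p_0\mathbb{Z}}$. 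The genuine gap is in (5$'$)$\Rightarrow$(3$'$), and it is exactly the step you flag as the ``crux'': the claim that the neutral tail $PF^{T,0}(\varphi)$ is determined by the restriction $\varphi|_{PF^0_{\infty}}$ is never proved, and it is not a formality. Your argument needs \emph{two} such identifications, in opposite directions and for two different states. For $\varphi$ you need that every $x\in PF^{T,0}(\varphi)$, which lives in the GNS representation of the \emph{full} algebra, is controlled by the tail of the neutral restriction; this direction is plausible (approximate $x$ by neutral elements of $\alpha^n(PF_{\infty})$ via averaging over $\mathrm{Ad}(c_1^k)$, and use centrality of $PF^{T,0}$ to recover $x$ from $x\Omega$). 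But for $\prod\rho$ you need the reverse: Proposition \ref{prop:Braid-TrivialNeutral} gives triviality of the neutral tail of $\prod\rho$ in the GNS representation over the full algebra $PF_{\infty}$, whereas your argument consumes triviality of the tail of the state $(\prod\rho)|_{PF^0_{\infty}}$ in its own representation. Passing from the former to the latter requires extending a tail element of the neutral restriction to a central, shift-invariant element over the full algebra, which nothing in the paper's toolkit provides. Note also that $PF^0_{\infty}$ is not a tensor product of the algebras $PF^0_2$ (it contains neutral monomials straddling pairs), so you cannot shortcut by declaring $(\prod\rho)|_{PF^0_{\infty}}$ a product state whose tail is classically trivial.

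The paper avoids this two-representation bookkeeping entirely and proves (5$'$)$\Rightarrow$(3$'$) by a direct computation: choose an orthonormal basis $\{b_k\}_{k\in\mathbb{Z}_d}$ of $PF_2$ for the inner product $\langle x,y\rangle=\varphi(y^*x)$ with $b_0=I$, and expand any $x\in PF^{T,0}$ as an orthogonal series $x=\beta_0 I+\sum_{i\geq 1}x_i$, where $x_i$ collects the terms whose first non-identity factor $b_k$ ($k\neq 0$) sits at position $i$. Orthogonality of the $x_i$ uses only that $x_j^*x_i$ is neutral and that $\varphi$ agrees with $\prod\rho$ on neutral elements (the same observation that opens your argument, since $\chi_j(0)=1$). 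Then shift-invariance $\alpha(x)=x$ moves the entire series one slot to the right, and uniqueness of the orthogonal decomposition forces $x_i=0$ for all $i\geq 1$, i.e.\ $x=\beta_0 I$. If you want to salvage your route, the cleanest repair is to replace the appeal to extremality of $\prod\rho$ and Proposition \ref{prop:Braid-TrivialNeutral} by exactly this martingale-type $L^2$ computation in the single GNS space of $\varphi$ --- at which point you have reproduced the paper's proof.
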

\begin{proof}
Proposition \ref{Prop:eqv_cen_tail} ensures the equivalence between (3$'$)  and (4$'$).

(3$'$)$\Rightarrow$ (5$'$):  If $PF^{T,0}=\complex 1$, then based on the
proof of (1$'$)$\Rightarrow$ (2$'$),  the state restricted in the 
cyclic group G $\varphi_G$ can be expressed as the convex combination 
of characters $\sum_i\lambda_i \chi_i$.
For any $A\in P_I$  and $B\in P_J$, $I<J$. 
We only need to consider the case where $m_0|\text{deg}(A), m_0|\text{deg}(B)$.
Thus
\begin{eqnarray*}
\varphi(AB)&=&\varphi(E_T(AB))\\
&=&\sum_{i}\lambda_i\chi_i(E_T(AB))\\
&=&\sum_i\lambda_i\chi_i(\text{deg}(A)\text{deg}(B))\chi_0(E_T(AB)).
\end{eqnarray*}
Due to the proof of (1$'$)$\Rightarrow$ (2$'$), each irreducible character
corresponds to one state  which can be
written as the product state. Thus, the state $\chi_0\circ E_T$
can be written as $\prod\rho$, and the density matrix of $\rho$ is in $PF_{2}^{p_0\mathbb{Z}}$. Therefore 
$\varphi=\sum_i\lambda_i\chi_i\circ \prod\rho$.

(5$'$) $\Rightarrow$ (3$'$): 
Let $\{b_k\}_{k\in \mathbb{Z}_d}$ be an orthonormal basis of $PF_2$ with inner product $\langle x,y \rangle=\varphi(y^*x)$, such that $b_0=I$.
Since $\varphi$ can be written as 
$\sum_i\lambda_i \chi_i\circ \prod\rho$, then 
any element $x\in PF^{T,0}$ has the orthogonal decomposition
\begin{eqnarray}
x=\beta_0 I + \sum_{i=1}^{\infty} x_i,
\end{eqnarray}
where
\be
x_i=\sum_{k=1}^{d-1} \beta_{i,k} \alpha^{i-1}(b_k) \alpha^{i}(x_{i,k}),
\ee
for some $\beta_{i,k}\in \mathbb{C}$, $x_{i,k} \in \overline{PF}^{\varphi}$, such that $\text{deg}(x_i)=0$ and $\varphi(x_j^*x_i)= \prod\rho(x_j^*x_i)=0$, for $i\neq j$. 
Besides, $\alpha(x)=x$,
which implies that  $x_i=0$ for $i \geqslant 1$. 
That is, all the element in $PF^{T,0}$ is in proportion to 
the identity. Thus, we obtain (5$'$).
\end{proof}

Besides, the condition (2$'$) can always implies the condition 
(5$'$), these five conditions can be summarized as follows.
\begin{eqnarray*}
(1')&\Leftrightarrow& (2')\\
&\Downarrow&\\
(3')\Leftrightarrow &(4')&\Leftrightarrow (5')
\end{eqnarray*}
That is, for the extremal $\mathbb{B}_{\infty}$-invariant state, the corresponding 
tail algebra $PF^T$  can be decomposed as \eqref{eq:tail_non}.

\section{Summary}\label{sec:con}
We have proposed and proved a new type of de Finetti theorem for the parafermion algebra $PF_{\infty}$ with respect to the 
action of braid group $\mathbb{B}_{\infty}$ that braids qudits (pairs of parafermions). We have two results based on whether or not the  degree $d$ of the given parafermion algebra is square free.   In both cases, we characterize  the extremal, braid-invariant states, and show that these properties are equivalent to the states being product states on $\prod PF_{2}\cong\prod\mathbb{M}_{d}$. 

In the square-free case, we have found other equivalent conditions for a braid-invariant state $\varphi$ to be extremal. It  is surprising that one such condition is that  the von Neumann algebra $\overline{PF}^{\varphi}_{\infty}$ is a factor; so if $\varphi$ is the convex combination of extremal, braid-invariant states, then 
 the corresponding von Neumann algebra of parafermions  is not a factor.
%We also give a exact decomposition of the tail algebra in the prime factors of $d$ contain a square. 
%
%,  which explains why the equivalent conditions in case (i) are not able to imply the 
%extremity of the braid invariant state in case (ii).
%
%\red{Outlook}
Since the parafermion algebra has been used to establish a framework  of topological quantum information
theory, the de Finetti theorem  in this work can  shed insight on the topological quantum information
theory. 
%Such de Finetti theorem may offer a new understanding of the symmetry and the structure of the corresponding symmetric states on the parafermion.
%
%
%
%
\section*{Acknowledgment}
We are grateful to Claus K\"ostler, as well as to the anonymous referee, for a number of helpful comments on earlier versions of our paper.
We thank the Templeton Religion Trust for  support of this research under grants TRT0080 and TRT0159.  Kaifeng Bu was partially supported with an Academic Award for Outstanding Doctoral Candidates from Zhejiang University.
Arthur Jaffe was partially supported by ARO Grant W911NF1910302.  Jinsong Wu was supported by NSFC 11771413.

\end{document}